\documentclass[11pt]{article}

\usepackage[margin=1in]{geometry} 
\usepackage[utf8]{inputenc}
\usepackage[T1]{fontenc}
\usepackage{lmodern} 
\usepackage{wrapfig}
\usepackage{ulem}
\usepackage{amsmath, amssymb, amsthm}
\usepackage{bm} 
\usepackage{lineno}

\usepackage{caption}
\usepackage{makecell}
\usepackage{graphicx}
\usepackage{booktabs}   
\usepackage{longtable}  
\usepackage{caption}
\usepackage{subcaption}
\usepackage{enumitem}   
\usepackage{multirow}
\usepackage[authoryear]{natbib}
\usepackage{microtype}  
\usepackage{xcolor}
\usepackage[colorlinks=true, allcolors=blue]{hyperref}
\usepackage{url}

\usepackage{algorithm}
\usepackage[noend]{algpseudocode}

\newtheorem{theorem}{Theorem}[section]
\newtheorem{lemma}[theorem]{Lemma}
\newtheorem{cor}{Corollary}[section]
\newtheorem{definition}[theorem]{Definition}

\newtheorem{remark}{Remark}[section]

\numberwithin{equation}{section}
\allowdisplaybreaks

\newcommand{\cC}{{\cal C}}

\newcommand{\cG}{{\cal G}}

\newcommand{\cL}{{\cal L}}
\newcommand{\cM}{{\cal M}}
\newcommand{\cN}{{\cal N}}

\newcommand{\cP}{{\cal P}}

\newcommand{\cU}{{\cal U}}

\newcommand{\Np}{{\mathbb{N}}^{+}}
\newcommand{\Nn}{{\mathbb{N}}^{-}}

\def\R{{\mathbb R}}
\def\Z{{\mathbb Z}}
\def\z{\zeta}
\def\e{\mathrm e}

\def\al{\alpha}

\def\ep{\epsilon}
\def\g{\gamma}

\def\h{\hat}
\def\ka{\kappa}
\def\la{\lambda}

\def\om{\omega}

\def\si{\sigma}
\def\Si{\Sigma}

\def\vep{\varepsilon}

\def\benr{\begin{eqnarray}}
	\def\eenr{\end{eqnarray}}
\def\benrr{\begin{eqnarray*}}
	\def\eenrr{\end{eqnarray*}}
\def\iny{\infty}

\def\noi{\noindent}
\def\nn{\nonumber}

\DeclareMathOperator*{\argmin}{arg\,min}
\DeclareMathOperator*{\argmax}{arg\,max}

\title{Your Ordinary Sized Title}
\author{Author 1 \and Author 2}
\date{\today}

\begin{document}

\def\spacingset#1{\renewcommand{\baselinestretch}%
{#1}\small\normalsize} \spacingset{1}


  \title{\bf High Dimensional Change Point Models for Two-Directional Data}

  \author{
    Abhishek Kaul\textsuperscript{a}, 
    Dipesh Baral\textsuperscript{a}, 
    Stergios B. Fotopoulos\textsuperscript{b}, \\
    Venkata K. Jandhyala\textsuperscript{a} and Rebecca Killick \textsuperscript{c}, \\[8 pt]
    \textsuperscript{a}Department of Mathematics and Statistics,\\
    \textsuperscript{b}Department of Finance and Management Science,\\
    Washington State University, Pullman, WA 99164, USA\\[4 pt]
	\textsuperscript{c}School of Mathematical Sciences\\
   Lancaster University, Lancaster LA1 4YF, UK}

\date{}
  \maketitle

\begin{abstract}
 We develop methodology for recovery of change points for data observed on more than one temporal index where changes may occur simultaneous in both indices, where the spatial component may be high dimensional. The work is motivated by climate monitoring problems where long series of data are available, e.g., daily observations (index 1) over several years (index 2). Such data may be evolving over the annual time scale, along with dynamic seasonal changes in the shorter time scale. We model this as a high dimensional mean process observed on a two dimensional grid with change points. Asymptotic estimation and inference results are developed under a single change point setup, including rates of convergence of the proposed method as well the resulting limiting distributions. The method is extended to the case of multiple changes. Theoretical results are supported numerically with monte-carlo simulations. We implement our work on a large scale climate data for the Pacific Northwest region of the United States.          
\end{abstract}

\noindent%
{\it Keywords:} Change point, high dimensions, inference, climate monitoring

\spacingset{1} 

\section{Introduction}\label{sec:intro}

Climate influences global agriculture by steering growing season length, crop phenology, and yield potential, e.g., \citep{Krishan}. Recent studies have attempted to delineate the risks and cascading effects of climate change on crop development times and productivity associated with shifts in seasonality such as earlier spring onset, e.g., \citep{chen2019long, hu2005earlier}, longer frost-free seasons/shorter winters \citep{asse2018warmer}, and altered precipitation levels \citep{solomon2007summary}. For instance, wheat anthesis period in some regions has shifted due to rising temperatures, \citep{liu2016similar}. Similarly, phenology records indicate flowering and leaf-out dates have advanced by 15–20 days over the past century in temperate regions, \citep{buntgen}. Such climate trends from 1980–2008 have reduced global wheat yields by 5.5\%, \citep{lobell2011climate}. In view of these trends it is vital that the scientific community is able to track and measure evolving climatic variables such as temperature and precipitation at a granular level, in context of both seasonal shifts as well as long term changes. Our objective here is to develop a statistical model and associated inferential methods that accommodate the main features of evolving climate variable data.

\vspace{-3mm}
\begin{enumerate}[noitemsep, leftmargin=*, wide,label={\textcolor{red}{[\thesection.\arabic{enumi}]}}]
\item Climate data is usually available on more than one temporal axis. For e.g.,  $2m$-temperature\footnote{Temperature of air at 2m above surface, \url{https://codes.ecmwf.int/grib/param-db/167}.} which is available for long time periods. Here two temporal axes may be considered, one in the {\it days} units and the other in {\it years} units. There may be long term trends in this variable (along {\it years} temporal axis) as well as seasonal trends (along {\it days} axis). While increasing trends in temperature have been identified in longer time scales (over one temporal axis - {\it annual axis}), see ,e.g., IPCC 2023 report, \cite{lee2023ipcc}, the distribution of this excess heat may not be uniform across the secondary temporal axis ({\it days axis}), i.e., it may be the case that summer months are longer or with more intense heat while the winter months may not exhibit much change in comparison to prior years. Such changes in seasonality patterns may have significant consequences in the context of agricultural practices and modelling these features shall be one of the main goal of this article. \label{item.1}
\item The changes in a univariate setting with respect to a single location hold with statistical evidence with respect to the variation of that particular location. On the other hand, a multivariate analysis yields aggregate trends with respect to aggregate variation across all locations. Since the latter is with respect to aggregate variation, they are representative of changes in broader climate patterns. In contrast, a univariate analysis may reflect changes in localized weather patterns fail to consider the influence of larger regional or global scale patterns. This has also been noted in \citep{lund2023good}. Accordingly our focus in this article shall be on the larger scale multivariate setting. \label{item.2}      
\end{enumerate}

\vspace{-3.25mm}
Change point models are canonical to the literature and are a reliable tool for climatologists to infer temporal changes of climate, see, e.g., \citep{Lund2007} and the review papers \cite{jandhyala2013inference, beaulieu2012change, reeves2007review}. We build upon them while retaining the two features \ref{item.1} and \ref{item.2} above. The latter is accommodated by a high dimensional framework where the number of parameters may exceed the number of observations, which allows us to consider a large number of spatial locations. High dimensional change point models have gained considerable recent interest, e.g.,   \cite{wang2018high, cho2016change, Yu2020, Kaul2021single} amongst many others. Our main contribution however shall be on the aspect \ref{item.1}, i.e., to accommodate a multi-temporal framework where trends may be evolving over two axes simultaneously. 

Existing change point literature uses one of the following to deal with more than one time stamp ({\it day $\times$ year}). (i) Eliminate variation due to the shorter term ({\it days}) axis by measuring average temperature over each unit of the longer term axis ({\it years}), e.g., \cite{Kaul2024}. (ii) Retain variation but eliminate mean effect due to short term axis by {\it de-trending or de-seasonalizing}, i.e., analyzing residuals of an ANOVA type model after removal of shorter term mean effect, e.g., \cite{lund2023good}. This is perhaps the most common operation under this type of data. (iii) Supply a more refined functional model for periodicity over the shorter term axis ({\it days}) instead of the mean utilized in (ii), e.g., \citep{tucker2024elastic}.  Method (i) removes the shorter term index and thus cannot obtain simultaneous changes along this axis. Method (ii) and Method (iii) are conceptually similar and have the same deficiency. In the context of the running example, de-trending assumes that an increasing trend of temperature along the longer term axis  translates to an evenly distributed increase of temperature across each unit of the shorter temporal axis, i.e., each day of the year is incrementally but evenly hotter over time. In other words, de-trending by definition does not allow for a change in seasonality over time (simultaneous trend in shorter temporal axis).

Our motivation arises from $2m$-temperature data collected for the Pacific Northwest (PNW) region of the United States. This area is of significant agricultural importance. The study region is located between latitudes $42^\circ$N and $50^\circ$N, and longitudes $115^\circ$W and $125^\circ$W, encompassing parts of Washington, Oregon, Idaho, and British Columbia with a spatial resolution of $0.5^\circ\times0.5^\circ,$ resulting in $357$ locations. Temporal resolution is one day. Further details are provided in Section \ref{sec:real.data}. The objectives are to identify long term temperature changes as well as shifts in patterns of seasonality. To this end, consider high dimensional realizations with dynamic mean vectors on two-dimensions:

\begin{minipage}{0.6\textwidth}
	\benr\label{model:rvmcp}
	x_{(w,h)}&=&\begin{cases}\theta_{(1)}^0+\vep_{(w,h)}   & w> \tau^0_w,\,\, \&\,\, h>\tau^0_h,\\
		\theta_{(2)}^0+\vep_{(w,h)}   & w\le \tau^0_w,\,\, \&\,\, h>\tau^0_h,\\
		\theta_{(3)}^0+\vep_{(w,h)} & w\le \tau^0_w,\,\, \&\,\, h\le\tau^0_h,\\
		\theta_{(4)}^0+\vep_{(w,h)} & w> \tau^0_w,\,\, \&\,\, h\le \tau^0_h.
	\end{cases}\nn\\
	&=&\sum_{j=1}^4\theta_{(j)}^0{\bf 1}\big[(w,h)\in Q_j(\tau^0)\big]+\vep_{(w,h)},\\
	&&\hspace{2.25cm} w=1,...,T_w,\,\,h=1,...,T_h,\nn
	\eenr
\end{minipage}
\hspace{8mm}
\begin{minipage}{0.4\textwidth}
\includegraphics[width=0.8\textwidth]{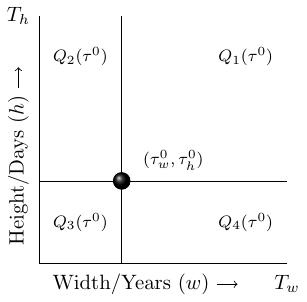}
\end{minipage}

Here $Q_j(\tau^0)$ are indices in the $j^{th}$ quadrant with origin at  $\tau^0=(\tau^0_w,\tau^0_h)\in\{1,...,T_w\}\times\{1,...,T_h\}$\footnote{Quadrants are ordered per cartesian convention,  $Q_1(\tau^0)=\{(w,h);\, w> \tau^0_w,\,\, \,\, h>\tau^0_h\},$ $Q_2(\tau^0)=\{(w,h);\, w\le \tau^0_w,\,\, \,\, h>\tau^0_h\},$ $Q_3(\tau^0)=\{(w,h);\,  w\le \tau^0_w,\,\, \,\, h\le\tau^0_h\},$ and $Q_4(\tau^0)=\{(w,h);\,  w> \tau^0_w,\,\, \,\, h\le \tau^0_h\}.$}. We observe $x_{(w,h)}\in\R^{p},$ $1\le w\le T_w,$ $1\le h\le T_h.$ The errors $\vep_{(w,h)}\in\R^p$ are zero mean random variables. Unknown parameters are the change points $\tau^0=(\tau^0_w,\tau_h^0)^T,$ and the mean vectors $\theta_{(j)}^0\in\R^p,$ $j=1,...,4,$ with $p$ being high dimensional with respect to the sampling period $T_wT_h.$

For our application, the horizontal (over $w$) is the {\it years} axis, $w=1,...,25.$ The vertical (over $h$) is the {\it days} axis, $h=1,...,365.$ We observe a $p=357$ dimensional temperature vector at each time $(w,h),$ whose $j^{th}$ component $x_{w,h,j}$ represents the $2m$-temperature for the $j^{th}$ grid location.

\begin{wrapfigure}{r}{0.3\textwidth}
  \begin{center}
\includegraphics[width=0.3\textwidth]{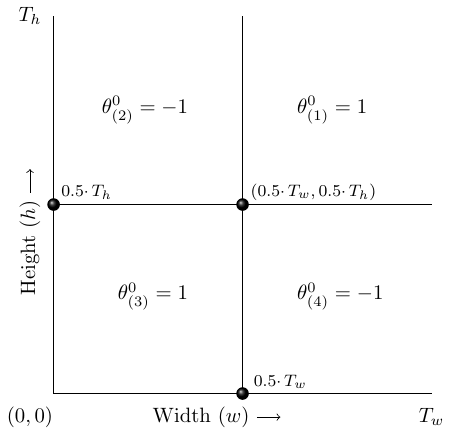}
  \end{center}
\captionsetup{width=0.3\textwidth}
 \caption{\footnotesize{Example to provide intuition on drawback of utilizing existing 1d-methods to obtain a 2d-change}}
\label{fig:one.d.problem}
\end{wrapfigure}
As indicated by a reviewer, one can recover change points of Model (\ref{model:rvmcp}) by a sequential 1d procedure, by looking one direction at a time and utilizing available one-dimensional methods. This can be done by collapsing the $h$ index and using existing methods to recover $\tau^0_w,$ i.e.,  where the $\big\{x_{(w,h)},\,\, w\le \tau^0_w,\, h\in\{1,...,T_h\}\big\}$ is the pre-change and $\big\{x_{(w,h)},\,\,w> \tau^0_w,\,h\in\{1,...,T_h\}\big\}$ is post-change data. This is followed by a symmetrical operation to  recover $\tau^0_h.$ The following discusses why this approach may at worst fail and at best be statistically inefficient.  A more detailed discussion is provided later in Remark \ref{rem:1d.limitation} and a numerical study in Section \ref{sec:numerical}. 

Consider a visualizable $p=1$ case. Suppose $\theta_{(1)}=\theta_{(3)}=1$ and $\theta_{(2)}=\theta_{(4)}=-1$ (see, Figure \ref{fig:one.d.problem}). Let change be at the mid-point $(\tau^0_w,\tau^0_h)=\big(\lfloor 0.5\cdotp T_w\rfloor,\lfloor 0.5\cdotp T_h\rfloor\big).$ Objective is to use a sequential 1d process and recover $\tau^0_w$ by collapsing $h$-axis. Doing so will lead to the same {\it effective mean}\footnote{weighted average of means in Quadrants 2 and 3, $\big(1\big/\tau^0_wT_h\big)\sum_{w=1}^{\tau^0_w}\sum_{h=1}^{T_h}E(x_{(w,h)})$} in the pre-data (prior to $\tau^0_w$- Quadrants 2 and 3) and the {\it effective mean}\footnote{weighted average of means in Quadrants 1 and 4, $\big(1\big/(T_w-\tau^0_w)T_h\big)\sum_{w=\tau^0+1}^{T_w}\sum_{h=1}^{T_h}E(x_{(w,h)})$} post-data (post $\tau^0_w$- Quadrants 1 and 4). Thus existing 1d-change point methods shall conclude "no change" along the $w$-axis. The same observation holds symmetrically for the $h$-axis. Consequently this sequential 1d procedure shall fail to recover any change in any direction. While this is a pathological example that evens out mean parameters, however the reason for failure under this case can also severely handicap statistical efficiency in a general case as explained next. 

Recall (e.g., \cite{bai1994}) that efficiency/precision of a change point method relies on the {\it jump size}, which is the discrimination between pre and post parameters that a method is designed to measure. E.g., in a 1d mean shift setting, most methods yield the jump size as the $\ell_2$ gap between mean parameters. This 1d approach fails when applied under the above example (Figure \ref{fig:one.d.problem}) since the effective jump size in the $w$ direction becomes the {\it \uline{$\ell_2$ gap} between \uline{weighted average} of the means in the pre and post segments} which turn out to be zero in Figure \ref{fig:one.d.problem}. In contrast, our approach shall yield a jump size in the $w$ direction as  the {\it \uline{weighted average} of the \uline{$\ell_2$ gap} in the means in pre and post segments} (see $\xi_w$ in (\ref{def:weight.jump.horiz.vert})), which in our example is non-zero. Key observation here is whether averaging happens before or after measuring $\ell_2$ gap between parameters. This shall effectively create in comparison a uniformly superior method. This intuition is made precise In Remark \ref{rem:1d.limitation} where we also show that our approach always yield a larger jump size in comparison to this sequential 1d- approach, in turn yielding a much sharper estimates.  Section \ref{sec:numerical} shall confirm this empirically. 

The problem of partitioning a two-dimensional sampling period under ($p=1$) can also be viewed from other perspectives.  Spatial anomaly detection (SAD) aims to achieve a similar objective, e.g., via scan statistics developed in \cite{zhang2010spatial, li2011spatial} amongst others. Another perspective on the objective and contributions of this article can be gained with a comparison to {\it clustering},  which is heavily utilized in modern machine learning tasks. Our approach is instead via change point parameters which can be  viewed as a bi-clustering task with an additional and critical requirement of contiguity of partitions along both axes (see, for e.g. partitioning in Figure \ref{fig:mult.parameters}). In contrast a anomaly detection/clustering algorithms may yield a anomalies/clusters with disjointed observations anywhere on the 2d- observation space. This is particularly relevant in the context of temporally observed data where discontinuity of partitions can lead to uninterpretable models. 

Further reasons for our choice to pursue a change point framework are as follows. To construct a concrete parametric framework that charaterizes paritions via change points and thus allows for traditional statistical inference. Recall this is not the case for clustering methods. Relatedly, to develop methods with statistical regularities which can be mathematically established, for e.g., that of consistency, rate of convergence and limiting distributions/inference. There is very limited related work utilizing change point models for multi-dimensional partitioning, this includes \cite{ushakova2023micro} and \cite{wang2025optimal}, both under univariate frameworks. The former develops a Metropolis Hastings type sampler to obtain bi-directional estimates in a Bayesian framework, which by construction relies on a likelihood based estimator which utilizes a sequential approach looking at one direction at a time. The latter works under a framework driven by regions of interest instead of change point parameters and focuses on results that obtain detection limit lower bounds. 

 The following provides additional control parameters that we shall show are directly related to the statistical behavior of the proposed method. Define the jump vectors for model (\ref{model:rvmcp}), 
\benr\label{def:jump.vec.quadrants}
\eta_{(1)}^0=\theta_{(2)}^0-\theta_{(1)}^0,\,\,\eta_{(2)}^0=\theta_{(3)}^0-\theta_{(2)}^0,\,\,\eta_{(3)}^0=\theta_{(3)}^0-\theta_{(4)}^0,\,\,{\rm and}\,\,\eta_{(4)}^0=\theta_{(1)}^0-\theta_{(4)}^0.
\eenr
Then we call their $\ell_2$ magnitudes as jump sizes across quadrants, i.e., 
\benr\label{def:jump.size.quad}
\xi_j=\|\eta^0_{(j)}\|_2,\quad j=1,...,4,\quad \overline\xi=\max_{j}\{\xi_j\},\quad{\rm}\quad \underline\xi=\min_{j}\{\xi_j\} .
\eenr
Additionally, define the proportion of observations in quadrants and along individual axes,
\benr\label{def:weights.quad}
&\om_j=|Q_j(\tau^0)|\big/{T_wT_h},\quad j=1,2,3,4,\quad{\rm and}\quad \underline\om=\min_{j}\{\om_j\}\nn\\
&\om_w=(T_w-\tau^0_w)\big/{T_w},\quad\om_h=(T_h-\tau^0_h)\big/{T_h},\,\,{\rm and},\,\, \om_{\min}=\om_w\wedge\om_h.
\eenr
Next define width and height-wise weighted jump sizes, which is critical to our analysis,
\benr\label{def:weight.jump.horiz.vert}
\,\,\,\xi_w^2= \om_h\xi_1^2+(1-\om_h)\xi_3^2,\quad \xi_h^2=\om_w\xi_4^2+(1-\om_w)\xi_2^2.\quad{\rm and }\quad\xi_{\min}=\xi_{w}\wedge\xi_h.
\eenr
A visual description of these control parameters is also provided in Figure \ref{fig:parameters}.\footnote{The means, weights and jump parameters depend on $T=T_w T_h$ directly since the change points $(\tau^0_w,\tau^0_h)$ may change with $T.$ These parameters may also depend on $T$ via $p,$ since under our high dimensional setting $p$ may be increasing with $T$ ($p$ is a sequence in $T$).}

\begin{remark}\label{rem:partitions} {\rm Model (\ref{model:rvmcp}) reduces to a single mean  when $\tau^0=(T_w,T_h).$ Two partitions result when $\tau^0$ lies on one of the axes. A change point $\tau^0=(\tau^0_w,\tau^0_h)$ may also result in three partitions if the means of two adjacent quadrants are identical, as visualized in Figure \ref{fig:parameters} (Left Panel).}

\begin{figure}[H]
	\centering
	\begin{minipage}{0.3\textwidth}
\includegraphics[width=0.95\textwidth]{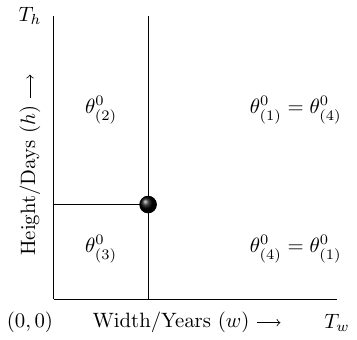}
	\end{minipage}
	\hspace{3mm}
	\begin{minipage}{0.3\textwidth}
	\includegraphics[width=0.95\textwidth]{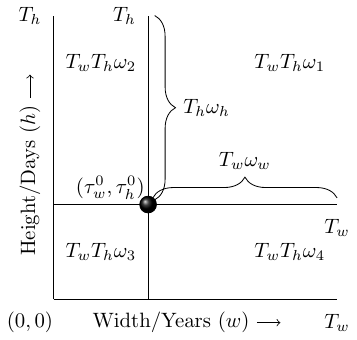}
	\end{minipage}
	\hspace{1mm}
	\begin{minipage}{0.3\textwidth}
	\includegraphics[width=0.95\textwidth]{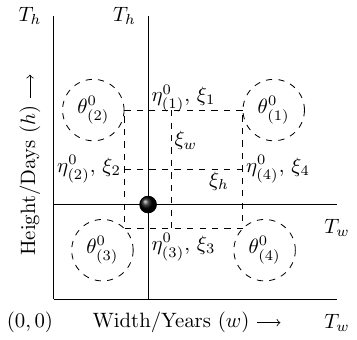}
	\end{minipage}
	\caption{\footnotesize{{\it Left: example of a three partition case described by Model (\ref{model:rvmcp})}. {\it Center: parameters measuring proportion of observations per region} {\it Right: mean, jump vector and jump size parameters.}}}
	\label{fig:parameters}
\end{figure}
\end{remark}

\begin{remark} {\rm We develop a method for the two-dimensional framework of Model (\ref{model:rvmcp}). Then similar to binary segmentation, we shall hierarchically search for further changes leading to a partitioning model as in Figure \ref{fig:mult.parameters}. The distinction being that we recursively partition each region into two dimensional subregions, as opposed to binary splits induced in one dimensional change points.}	

\begin{figure}[H]
\centering
	\begin{minipage}{0.48\textwidth}
		\includegraphics[width=0.58\textwidth]{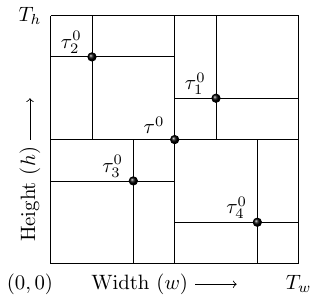}
	\end{minipage}
	\begin{minipage}{0.48\textwidth}
		\includegraphics[width=0.58\textwidth]{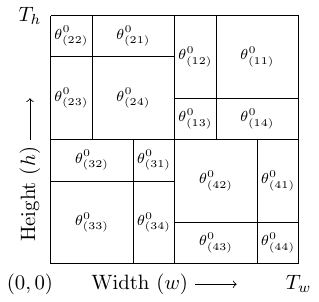}
	\end{minipage}
	\caption{\footnotesize{Mean and change point parameters of a multi-partition model obtained by a recursive implementation of proposed methodology (two-dimensional analog of binary segmentation).}}
	\label{fig:mult.parameters}
\end{figure}
\end{remark}

\begin{remark}(Limitation of existing 1d-methods applied sequentially under considered 2d-setting) \label{rem:1d.limitation} {\rm Suppose one were to estimate $\tau^0_w$ and $\tau^0_h$ by collapsing $h$ and $w$ axis, one at a time, utilizing one of many available methods, e.g., \cite{wang2020univariate, wang2018high, Kaul2021single}. Statistical quality of the resulting estimate $\h\tau_w$ shall depend upon a Jump size $\psi_w,$ i.e., $(\h\tau_w-\tau^0_w)=O_p(T_h^{-1}\psi^{-2}_w).$\footnote{Existing literature has results of the form $(\h\tau-\tau)=O_p(\psi^{-2}),$ e.g., \cite{wang2018high, Kaul2021single}. In this case each index on the $w$ axis has $T_h$ replicates and consequently this rate shall also be scaled by $T_h.$ This can also be obtained by rigorously following arguments similar to \cite{Kaul2021} under a squared loss.} Here $\psi_w$ is the jump size characterized by effective pre ($\mu_w^{pre}$) and post means $(\mu_w^{post})$that shall be the average expected values of the pre and post data as: 
		\benr
		&&\mu_w^{pre}=(1/\tau^0_wT_h)\sum_{w=1}^{\tau^0_w}\sum_{h=1}^{T_h}E(x_{(w,h)})=\om_h\theta_{(2)}+(1-\om_{h})\theta_{(3)}\nn\\
		&&\mu_w^{post}=(1/(T_w-\tau^0_w)T_h)\sum_{w=\tau^0_w+1}^{T_w}\sum_{h=1}^{T_h}E(x_{(w,h)})=\om_h\theta_{(1)}+(1-\om_{h})\theta_{(4)}\nn
		\eenr		
	Finally, $\psi_w^2=\|\mu_w^{pre}-\mu_w^{post}\|^2_2.$ Symmetrical for $\tau^0_h.$ Now for simplicity consider $p=1$ and $\tau^0_w$ and $\tau^0_w$ to be at the mid-point. Then $\psi^2_w=\big(0.5(\theta_{(2)}-\theta_{(1)})+0.5(\theta_{(3)}-\theta_{(4)})\big)^2.$ Then compare this quantity to the jump size $\xi_w^2=0.5(\theta_{(2)}-\theta_{(1)})^2+0.5(\theta_{(3)}-\theta_{(4)})^2$ described in (\ref{def:weight.jump.horiz.vert}) that shall result from our method and yield estimates satisfying $(\h\tau_w-\tau^0_w)=O_p(T_h^{-1}\xi^{-2}_w).$ It is straightforward to show that $\xi_w^2\ge\psi_w^2$ and consequently our method shall yield uniformly higher statistical efficiency in comparison to this sequential 1d approach. Moreover it shall not be suseptible to unexpected faliures such as that described in the example in Figure \ref{fig:one.d.problem}, where $\psi_w^2=0$ and $\xi_w^2=4.$ Finally, when there is no change along one axis (say, $h$-axis), i.e, $\theta_{(3)}=\theta_{(1)}$ and $\theta_{(4)}=\theta_{(1)}.$ Then $\xi^2_w=\psi^2_w,$ which is the case of 1d change point methods and here our method shall yield the same rate. Same holds symmetrically for $\tau^0_h.$ One can also utilize basic algebraic inequalities to observe that all of these arguments hold identically under a general multivariate $p>1$ and any change point $(\tau^0_w,\tau^0_h).$ Numerical confirmations of these assertions are provided in Section \ref{sec:numerical}.}
\end{remark}

A brief review of large scale change point models under a 1d-segmenting axis follows. Under a  fixed $p$ setting, the results of \cite{harchaoui2010multiple} consider a least squares estimator together with a total variation regularization. In a high dimensional multiple change framework \cite{wang2018high} provides a projected cusum estimator. In similar large scale settings the works of \cite{cho2016change, wang2019statistically} develop other CUSUM based estimators. The articles \cite{bhattacharjee2019change, Kaul2021single, Kaul2023+} consider squared loss type estimators in the same setting. The problem of post-estimation inference has been considered in \cite{bai2010common, Eichinger2018, fotopoulos2010exact} and more recently in \cite{Kaul2021single} and \cite{wang2020dating} who develop necessary limiting distributions for obtaining confidence intervals for these models.   High dimensional change point models have also been studied with several other mdoels, e.g., graphical models in \cite{kaul2023inference}, \cite{wang2017optimal}, \cite{atchade2017scalable}, stochastic block models in \cite{wang2018optimal}, \cite{bhattacharjee2018change}, markov random fields by \cite{roy2017change} are among other settings, wherein all these articles by construction assume a one-dimensional change axis. The remainder is organized as follows. Section \ref{sec:methods} describes proposed methodology. Section \ref{sec:main} develops results on statistical behavior of the proposed estimator pertaining to both estimation and inference. Section \ref{sec:numerical} provides Monte-Carlo simulation results that numerically support theoretical results. Finally in Section \ref{sec:real.data} we implement our methods on the $2m$-temperature data. All proofs are provided in the Supplement.

\noi{\it Notation}: $\R$ is the real line. For $\delta\in\R^p,$ $\|\delta\|_1,$ $\|\delta\|_2,$ $\|\delta\|_{\iny}$ represent the  1-norm, Euclidean norm, and sup-norm. For any $U\subseteq\{1,2,...,p\},$ let $\delta_U=(\delta_j)_{j\in U}$ represent the subvector of $\delta$ containing components corresponding to indices in $U.$ Let $|U|$ and $U^c$ represent cardinality and complement of $U.$  Notation $x_j$ represents the $j^{th}$ component of vector $x.$ We use $x_{(j)}$ to represent a vector. Denote by $a\wedge b=\min\{a,b\},$ and $a\vee b=\max\{a,b\}.$ We use a generic $c_u>0$ to represent universal constants. All limits are with respect to $T_w,$ and $T_h.$ Notation $\Rightarrow$ represents convergence in distribution.

\section{Methodology}\label{sec:methods}

Begin with a squared loss evaluated at any grid point $\tau=(\tau_y,\tau_d)^T$, and any $\theta_{(j)}\in\R^p$, $j=1,...,4,$ let $\theta$ represent the concatenation of $\theta_{(j)}'s,$ then define,
\benr\label{def:sq.loss}
\cL(\tau_w,\tau_h,\theta)=\frac{1}{T_wT_h}\sum_{j=1}^4\,\,\sum_{(w,h)\in Q_j(\tau)}\|x_{(w,h)}-\theta_{(j)}\|_2^2.
\eenr
Now define a component-wise plug-in estimator for $\tau^0=(\tau^0_w,\tau^0_h)^T$ as follows,
\benr\label{est:optimal}
\tilde\tau_w(\h\tau_h,\h\theta)=\argmin_{1\le\tau_w< T_w} \cL(\tau_w,\h\tau_h,\h\theta),\quad{\rm and}\quad \tilde\tau_h(\h\tau_w,\h\theta)=\argmin_{1\le\tau_h< T_h} \cL(\h\tau_w,\tau_h,\h\theta),
\eenr
where $\h\tau=(\h\tau_w,\h\tau_h)$ and $\h\theta$ represent some preliminary estimates that shall be specified in the sequel. The estimator (\ref{est:optimal}) separates target change point and all remaining nuisance parameters, which is somewhat reminiscent of the EM-Algorithm. In order to allow high dimensional means we utilize $\ell_1$ regularized quadrant-wise sample means. More precisely, for any $\tau=(\tau_y,\tau_d)^T,$ let,
\benr\label{def:empmeans}
\bar x_{(j)}(\tau)=\frac{1}{|Q_j(\tau)|}\sum_{(w,h)\in Q_j(\tau)}x_{(w,h)},\quad j=1,2,3,4.
\eenr
be the quadrant-wise sample means. Now consider soft-thresholding, $k_{\la}(x)={\rm sign}(x)(|x|-\la)_{+},$ $\la>0,$ $x\in\R^p,$ where ${\rm sign}(\cdotp),$ $|\cdotp|,$ and $(\cdotp)_{+},$ are applied component-wise. Here $(x)_{+}=x,$ if $x\ge 0,$ and $x=0$ if $x<0.$ Then for any $\la_j,$ define $\ell_1$ regularized quadrant-wise mean estimates,
\benr\label{est:softthresh}
\h\theta_{(j)}(\tau)=k_{\la_j}\big(x_{(j)}(\tau)\big),\quad j=1,2,3,4.
\eenr
It is well known in the literature, e.g. (\cite{donoho1995noising}, that the soft-thresholding operation in (\ref{est:softthresh}) is equivalent to the following $\ell_1$ regularization.
\benr\label{est:softL1construction}
\h\theta_{(j)}(\tau)&=&\argmin_{\theta\in\R^p}\big\|\bar x_{(j)}(\tau)-\theta\big\|^2_2+\la_j\|\theta\|_1,\quad \la_j>0,\quad j=1,2,3,4.
\eenr


\begin{algorithm}
	\caption{Estimation of $\tau^0=(\tau^0_w,\tau^0_h)^T$ away from boundaries.}
	\label{alg:single}
	\begin{algorithmic}[1]
		\Statex Initialize a user chosen change point $\check\tau=(\check\tau_w,\check\tau_h),$
		\State Compute estimates $\check\theta_{(j)}=\h\theta_{(j)}(\check\tau),$ $j=1,2,3,4.$ and update componentwise,
		\benr
		\h\tau_w=\argmin_{1\le \tau_w<T_h} \cL(\tau_w,\check\tau_h,\check\theta)\quad{\rm and}\quad
		\h\tau_h=\argmin_{1\le \tau_h<T_h} \cL(\check\tau_w,\tau_h,\check\theta)\nn
		\eenr
		\State Update mean estimates to $\h\theta_{(j)}=\h\theta_{(j)}(\h\tau),$ $j=1,2,3,4,$ and update,
		\benr
		\tilde\tau_w=\argmin_{1\le \tau_w<T_w} \cL(\tau_w,\h\tau_h,\h\theta)\quad{\rm and}\quad
		\tilde\tau_h=\argmin_{1\le \tau_h<T_h} \cL(\h\tau_w,\tau_h,\h\theta)\nn
		\eenr
		\Statex Output: $\tilde\tau=(\tilde\tau_w,\tilde\tau_h).$
	\end{algorithmic}
\end{algorithm}

Our methodology to recover $\tau$ is presented as Algorithm \ref{alg:single} and visualized in Figure \ref{fig:schematic}. It improves a nearly arbitrary $\check\tau,$ first to a near optimal $\h\tau$, i.e., a localization rate of $O_p\big(T_h^{-1}\xi_w^{-2}s\log^2(p\vee T_wT_h)\big)$ for $\h\tau_w$ and symmetrically for $\h\tau_h.$ A second iteration then improves this rate to  $O_p\big(T_h^{-1}\xi_w^{-2}\big)$ or $O_p\big(T_w^{-1}\xi_h^{-2}\big)$ for $\tilde\tau_w$ and $\tilde\tau_h,$ respectively.  The idea of alternatively updating change and nuisance parameters considederably reduces computational burden in comparison to a full grid search which involves simultaneous optimization of the change point and mean parameters. This approach was first developed in a linear regression framework in \cite{Kaul2019}.  This approach has has also been considered in a mean shift setting \cite{Kaul2021single, McGonigle2021} as well as a covariance shift setting \cite{kaul2023inference}. More broadly, the EM algorithm and the Gibbs sampler in principle also work in a similar manner iterating between target and nuisance parameters.

\begin{figure}
\centering
\includegraphics[width=0.8\textwidth]{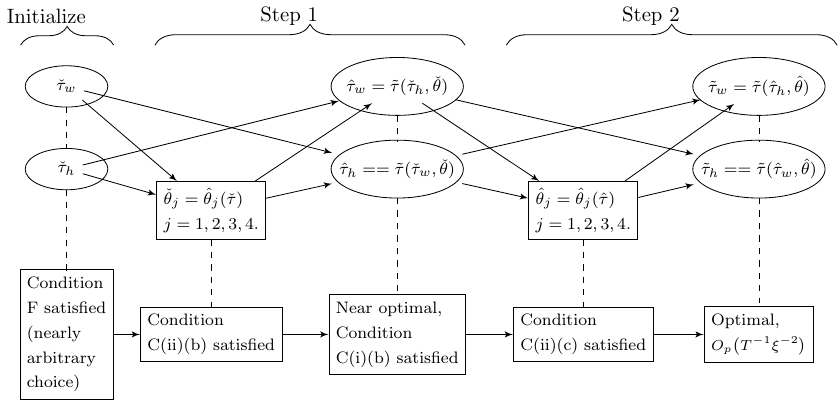}
\caption{\footnotesize{A schematic of the underlying working mechanism of Algorithm~\ref{alg:single}.}}
\label{fig:schematic}
\end{figure}

\section{Theoretical results}\label{sec:main}

The main purpose of this section is to obtain statistical properties of Algorithm \ref{alg:single}.  Towards this goal, Sub-section \ref{subsec:plugin} studies the behavior of the  plugin squared loss estimator (\ref{est:optimal}). This subsection shall be agnostic to the preliminary estimates and relies only upon reasonable statistical precision from these preliminary estimates. Sub-section \ref{subsec:feasible} shall then utilize these results along with further necessary details to establish the desired properties of Algorithm \ref{alg:single}. All results are under a high dimensional scaling (Condition E) below. 

{\it {{\noi{\bf Condition A (on underlying distributions):}} The vectors $\vep_{(w,h)}=(\vep_{(w,h,1)},...,\vep_{(w,h,p)})^T,$ $w=1,..,T_w,$ $h=1,...,T_{h}$ are independent and identically distributed (i.i.d.) subexponential random vectors with variance proxy $\si^2<\iny$ (see, Definition \ref{def:sube} and \ref{def:submult}) of the supplement}}

Subexponential distributions form a large class that subsumes the subgaussian class. It includes the Gaussian, Laplace, mean centered Chi-square, amongst several other well known distributions.

{\it {{\noi{\bf Condition B (on model parameters):}} (i) Covariance $\Sigma:=E\vep_{(w,h)}\vep_{(w,h)}^T$ has bounded eigenvalues, i.e., $0<\ka^2\le\rm{min eigen}(\Si) <\rm{max eigen}(\Si)\le\phi^2<\iny,$ with constants $\ka^2,$ $\phi^2.$\\
		(ii) Assume a change point exists and is separated from the parametric boundary on both axes, i.e., for some positive sequence $\underline\om\to 0,$ we have $\min_{j}\{|Q_j(\tau^0)|\}\ge T_wT_h\underline\om\to\iny,$\\~
		(iii) Let $\overline\xi$  and  $\xi_j,$ $j=1,...,4$ be as defined in (\ref{def:jump.size.quad}) and let $\xi_w,\xi_h,\xi_{\min}$ be as defined in (\ref{def:weight.jump.horiz.vert}). Then we assume that
		$\overline\xi\le c_u\xi_{\min},$ for some constant $c_u>0.$}}

Condition B(i) assumes a positive definite covariance over $p.$ This condition is usually implicit in Condition A, we state it here in favor of explicit clarity on parametric assumptions. Conditions B(ii) and B(iii) are separation conditions ensuring the {\it jump signal} is not dominated by noise. Moreover, B(ii) assumes a diverging number of observations in each quadrant of the model (\ref{model:rvmcp}). Analogous conditions are typical in a partitioning framework, see, e.g., Definition 3.1 of \cite{rockova2020posterior}. Next define sets of non-zero indices corresponding to mean vectors $\theta_{(j)}^0,$ $j=1,2,3,4,$
\benr\label{def:setS}
S_{j}=\big\{k\in\{1,2,...,p\};\,\,\theta^0_{(j)k}\ne 0\big\},\quad j=1,2,3,4,
\eenr
Define maximum cardinality $\max_{1\le j\le  4}|S_{j}|=s\ge 1.$ The parameter $s$ measures sparsity in  model (\ref{model:rvmcp}). We make this assumption directly on the mean vectors $\theta_{(j)}^0$s. We refer to \cite{Kaul2021single,   Kaul2023+} where it has also been illustrated that assuming this sparsity is equivalent to assuming sparsity of jump vectors, e.g., as done in \cite{wang2018high} and \cite{enikeeva2013high} under a 1d change axis framework. 

%

\subsection{Statistical properties of plugin 2d-squared loss estimator}\label{subsec:plugin}

This subsection details the statistical behavior of the plugin squared loss estimator (\ref{est:optimal}). These results are {\it agnostic} about the choice of the estimators used to obtain the preliminary estimates and instead rely on the following condition. This condition is a temporary placeholder, it shall be illustrated in the next subsection that the iterative construction of Algorithm \ref{alg:single} leads to preliminary estimates that satisfy all requirements. 

{\it {{\noi{\bf Condition C (on preliminary estimates):}} Let $c_{u1}>0$ be a suitably chosen small enough constant and let $\pi_T\to 0$ be a positive sequence. Then we assume either of the combinations of  \big[(i)(a), (ii)(a,b)\big] or \big[(i)(b), (ii)(a,c)\big] below hold with probability at least $1-\pi_T.$\\~
		(i)\,\, ({\bf on preliminary location estimates $\h \tau$}):\\~
		(a)\,\, Assume that $\h\tau_w,$ and $\h\tau_h$ satisfy the absolute error bound, 
		\benr
		|\h\tau_w-\tau^0_w|\le c_{u1}T_{w},\,\,{\rm and}\,\,\,\, |\h\tau_h-\tau^0_h|\le c_{u1}T_{h}\nn
		\eenr
		(b)\,\,Assume that $\h\tau_w,$ and $\h\tau_h$ satisfy the absolute error bound of 
		\benr
		|\h\tau_w-\tau^0_w|\le T_{w}r_{T}^2,\quad{\rm and}\quad |\h\tau_h-\tau^0_h|\le T_{h}r_{T}^2,\quad r_{T}=\frac{c_{u1}}{s^{1/2}\log(p\vee T_wT_h)}\footnotemark \nn
		\eenr
		(ii)\,\, ({\bf on preliminary mean estimates $\h\theta$}): Assume one of the pairs (a,b) or (a,c)  hold.\\~  	
		(a)\,\, The estimates $\h\theta_{(j)},$ $j=1,...,4$ satisfy $\|(\h\theta_{(j)})_{S_j^c}\|_1\le 3\|(\h\theta_{(j)}-\theta_{(j)}^0)_{S_j}\|_1,$ for each $j=1,...,4.$ Here $S_j,$ are sets of non-zero components as defined in (\ref{def:setS}). \\~
		(b)\,\, The estimates $\h\theta_{(j)},$ $j=1,...,4$  satisfy the $\ell_2$ bound,
		\benr
		\max_{1\le j\le 4}\|\h\theta_{(j)}-\theta_{(j)}^0\|_2\le c_{u1}\xi_{\min}.\nn 
		\eenr
		(c)\,\,The estimates $\h\theta_{(j)},$ $j=1,...,4$  satisfy the $\ell_2$ bound,
		\benr
		\max_{1\le j\le 4}\|\h\theta_{(j)}-\theta_{(j)}^0\|_2\le \xi_{\min}r_{T}\nn
		\eenr}}
\footnotetext{This is a sequence in both $T_w$ and $T_h,$ however to ease notation we present it in shorthand as $r_T$}\\~
The combination of $\big[(i)(b), (ii)(a,c)\big]$ is stronger version of the first $\big[(i)(a), (ii)(a,b)\big].$ In Subsection \ref{subsec:feasible} we exploit this distinction to show that preliminary estimates obtained recursively via Algorithm 1 satisfy the two considered combinations of this condition, at the two successive iterations, respectively. Conditions C(i) and C(ii)(b) are quite weak on the quality of these estimates, in particular C(i) is satisfied by any $\h\tau_w,\h\tau_h$ in $o(T)$-neighborhood's of $\tau^0_w,\tau^0_h,$ respectively. 

\begin{theorem}\label{thm:cpoptimal}  Suppose Conditions A, B, C(i)(a) and C(ii)(a,b) hold. Then, we have,
	\benr
	(1a)\,\, 	|\tilde\tau_w-\tau^0_w|=O\big(T_h^{-1}\xi^{-2}_{w}s\log^2(p\vee T_wT_h)\big),\quad
	(1b)\,\, |\tilde\tau_h-\tau^0_h|=O\big(T_w^{-1}\xi^{-2}_{h}s\log^2(p\vee T_wT_h)\big),\nn
	\eenr
	with probability at least $1-2\exp\{-c_{1}\log (p\vee T_wT_h)\}-\pi_T,$ for constant $c_1>0$ that does not depend on any model parameters,  where the orders are w.r.t. $T_w$ and $T_h,$ respectively.  Suppose instead Condition C(i)(b) and C(ii)(a,c) hold. Then, we have,
	\benr
	(2a)\,\, 	|\tilde\tau_w-\tau^0_w|= O_p\big(T_h^{-1}\xi^{-2}_{w}\big),\quad
	(2b)\,\, |\tilde\tau_h-\tau^0_h|= O_p\big(T_w^{-1}\xi^{-2}_{h}\big),\nn
	\eenr
as before, orders are w.r.t. $T_w$ and $T_h,$ respectively. 
\end{theorem}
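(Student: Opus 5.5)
The plan is to use the standard ``drift versus noise'' argument for squared-loss change point estimators, carried out one axis at a time. I describe part (1a); parts (1b) and (2b) follow by symmetry. Fix $\h\tau_h$ and $\h\theta$ satisfying Condition C on an event of probability at least $1-\pi_T$. By symmetry it suffices to treat $\tau_w>\tau^0_w$. Let $n_w=\tau_w-\tau^0_w$ and consider the increment
\[
\cU(\tau_w)=\cL(\tau_w,\h\tau_h,\h\theta)-\cL(\tau^0_w,\h\tau_h,\h\theta).
\]
This increment involves only the vertical strip $\{\tau^0_w<w\le\tau_w\}\times\{1,\dots,T_h\}$. On this strip, rows $h>\h\tau_h$ are reassigned from $\h\theta_{(1)}$ to $\h\theta_{(2)}$, and rows $h\le\h\tau_h$ are reassigned from $\h\theta_{(4)}$ to $\h\theta_{(3)}$.

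Expanding the squares, $T_wT_h\,\cU(\tau_w)$ splits into three parts:
\begin{itemize}
\item a deterministic part, $\sum_{\rm strip}\big(\|\theta^0_{(\cdot)}-\h\theta_{(\rm new)}\|_2^2-\|\theta^0_{(\cdot)}-\h\theta_{(\rm old)}\|_2^2\big)$;
\item a stochastic part, $-2\sum_{\rm strip}\vep_{(w,h)}^T(\h\theta_{(\rm old)}-\h\theta_{(\rm new)})$;
\item a misalignment part, coming from the rows between $\h\tau_h$ and $\tau^0_h$, where the true quadrant differs from the assigned one.
\end{itemize}

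\textbf{Lower bound on the drift.} On correctly aligned rows, write $\h\theta_{(j)}=\theta^0_{(j)}+\delta_j$ with $\|\delta_j\|_2\le c_{u1}\xi_{\min}$ by C(ii)(b). A row $h>\tau^0_h$ then contributes $\xi_1^2-O(c_{u1}\xi_{\min}\overline\xi)$, and a row $h\le\tau^0_h$ contributes $\xi_3^2-O(c_{u1}\xi_{\min}\overline\xi)$. Summing over rows gives $n_wT_h\xi_w^2(1-O(c_{u1}))$, using $\overline\xi\le c_u\xi_{\min}$ from B(iii). The misaligned rows number at most $c_{u1}T_h$, and each contributes at most $O(\overline\xi^2)$. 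Since C(i)(a) keeps $\om_h$ and $1-\om_h$ bounded below in the relevant sense via B(ii), this is absorbed once $c_{u1}$ is small. The drift is therefore at least $c\,n_wT_h\xi_w^2$.

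\textbf{Noise bound, part (1).} Bound the stochastic part by $\|\sum_{\rm strip}\vep\|_\iny\,\|\h\theta_{(\rm old)}-\h\theta_{(\rm new)}\|_1$. The cone condition C(ii)(a), together with the sparsity of $\theta^0$, gives
\[
\|\h\theta_{(j)}-\theta^0_{(j)}\|_1\le 4\sqrt s\,\|\h\theta_{(j)}-\theta^0_{(j)}\|_2 .
\]
Since $\eta^0$ has support of size at most $2s$, the $\ell_1$ norm of the estimated jump is $O(\sqrt s\,\xi_{\min})$. Next apply a Bernstein inequality for subexponential sums, with a union bound over $p$ coordinates and over all $O(T_wT_h)$ strip sizes. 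This gives, uniformly in $\tau_w$,
\[
\Big\|\sum_{\rm strip}\vep\Big\|_\iny\le c\,\sqrt{n_wT_h}\,\log(p\vee T_wT_h)+c\log^2(p\vee T_wT_h)
\]
with probability at least $1-2\exp\{-c_1\log(p\vee T_wT_h)\}$. The misalignment terms are controlled in the same way. Comparing the drift $n_wT_h\xi_w^2$ with $\sqrt{n_wT_h}\sqrt s\,\xi\log$ shows that $\cU>0$ unless $n_wT_h\lesssim s\log^2(p\vee T_wT_h)\xi_w^{-2}$, which is (1a).

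\textbf{Noise bound, part (2).} Now split the noise into two pieces. The first is $\vep^T\eta^0$ summed over the strip; it is a univariate subexponential partial sum with variance $\asymp n_wT_h\xi^2$, using B(i). The second is $\vep^T(\h\eta-\eta^0)$, handled as in part (1), but C(ii)(c) improves the $\ell_1$ factor to $\sqrt s\,\xi_{\min}r_T$. The $\log$ and $\sqrt s$ in this term then cancel against $r_T=c_{u1}/(\sqrt s\log)$, leaving a term of order $c_{u1}\sqrt{n_wT_h}\,\xi$ or smaller. Under C(i)(b) the misaligned rows number at most $T_hr_T^2$, so their contribution is negligible in the same sense. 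For the univariate piece, use a peeling argument over shells $2^k\le n_wT_h\xi_w^2<2^{k+1}$, applying a Kolmogorov or H\'ajek--R\'enyi type maximal inequality on each shell. This gives
\[
P\big(n_wT_h\xi_w^2>M\big)\le\sum_{2^k>M}C2^{-k}\to0 \quad\text{as } M\to\iny,
\]
which is (2a). The rate carries no log factor because no union bound over $p$ is required for this piece.

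\textbf{Main obstacle.} I expect the hardest step to be the interaction between the $w$-update and the error in $\h\tau_h$. The misaligned rows have different true means, and one must check that their drift and noise contributions cannot offset the main drift. This needs the lower bound on $\underline\om$ from B(ii) together with the smallness of $c_{u1}$ (respectively $r_T^2$). I would handle it first by writing out explicitly the four row types (aligned above, aligned below, misaligned above, misaligned below) in the strip before bounding anything.
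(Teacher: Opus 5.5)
Your proposal is correct and follows essentially the same route as the paper. Lemma~\ref{lem:unif.lb} carries out your drift/cross-term/misalignment/noise decomposition and controls the noise by an $\ell_\infty$--$\ell_1$ Bernstein bound to obtain (1a). For (2a) the paper likewise splits off the univariate sum of $\vep_{(w,h)}^T\eta^0_{(j)}$ and controls it with the H\'ajek--R\'enyi inequality (Lemma~\ref{lem:optimalcross}, equivalent to your peeling), while C(ii)(c) and C(i)(b) reduce the remaining terms to $O(c_{u1}\xi_{\min}\sqrt{n_wT_h})$.

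Two small corrections are needed. First, the misaligned-row drift is absorbed through B(iii), since $c_{u1}T_h\overline\xi^2\le c_{u1}c_u^2T_h\xi_w^2$. It does not come from a lower bound on $\om_h$, because B(ii) only gives the possibly vanishing bound $\underline\om$. Second, the additive term in your deviation bound should be $\log(p\vee T_wT_h)$ rather than $\log^2(p\vee T_wT_h)$. It is then dominated by $\sqrt{n_wT_h}\log(p\vee T_wT_h)$ because $n_wT_h\ge 1$. As written, you silently drop the $\log^2$ term, and it would not be absorbed when $\xi_{\min}$ is large relative to $\sqrt s$.
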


Theorem \ref{thm:cpoptimal} provides rates of estimation yielded by (\ref{est:optimal}) under two varying conditions on the preliminary estimates. The second is a tighter result under the assumed tighter condition. Algorithm \ref{alg:single} exploits this property by its step wise construction enabling the first step to yield the rate in (1a) and (1b) and Step 2 to yield the sharper rate in (2a) and (2b), respectively. 

Next we characterize limiting distributions of $\tilde\tau,$ to construct asymptotically valid confidence intervals for $\tau^0=(\tau^0_w,\tau^0_h).$ We provide distributions under two cases of the jump size magnitude. First a vanishing jump $\surd{T_h}\xi_w\to 0,$ $\surd{T_w}\xi_h\to 0,$ and next a non-vanishing jump $\surd{T_h}\xi_w\to \xi_{(w,\iny)},$ $\surd{T_w}\xi_h\to \xi_{(h,\iny)}$ where $0<\xi_{(w,\iny)},\,\,\xi_{(h,\iny)}<\iny.$ The following condition shall ensure stability of variances of limiting processes to be characterized.

\vspace{1mm}
{\it {{\noi{\bf Condition D (stability of asymptotic variances):}} Let $\Si,$ $\eta^0_{(j)},$ $j=1,2,3,4,$ and $\xi_w,\xi_h$ be as in Condition B, (\ref{def:jump.vec.quadrants}) and (\ref{def:weight.jump.horiz.vert}) respectively. Then, assume the following limits exist,
		\benr
		&(i)&\,\,\, \frac{1}{\xi^2_w}\Big[\om_h\eta^{0T}_{(1)}\Si\eta^0_{(1)}+(1-\om_h)\eta^{0T}_{(3)}\Si\eta^0_{(3)}\Big]\to \si^2_{(w,\iny)},\quad{\rm and}\nn\\
		&(ii)&\,\,\, \frac{1}{\xi^2_h}\Big[\om_w\eta^{0T}_{(4)}\Si\eta^0_{(4)}+(1-\om_w)\eta^{0T}_{(2)}\Si\eta^0_{(2)}\Big]\to \si^2_{(h,\iny)},\nn
		\eenr
		with $0<\si_{(w,\iny)},\,\,\si_{(h,\iny)}<\iny.$ Here the limit of (i) is with respect to $T_w\to\iny$ (with $T_h<\iny$ or $T_h\to\iny$), and symmetrically (ii) is with respect to  $T_h\to\iny$ (with $T_w<\iny$ or $T_w\to\iny$).}}		

\vspace{1.5mm}
The quantities $\si^2_{(w,\iny)}$ and $\si^2_{(h,\iny)}$ are variances of limiting processes which in turn determine the asymptotic variance of the $\tilde\tau_w$ and $\tilde\tau_h,$ respectively. Roughly, these expression say that variance of $\tilde\tau$'s are determined by a weighted aggregation of the variances of components of  $x_{(w,h)}$ (diagonal elements of $\Si$), with more weight to those where there is more change in the mean (measured by $\eta$'s). There is a further weighting for the number of observations in each quadrant ($\omega$'s). Similar conditions are quite common in the literature, e.g., this condition serves the same purpose as that of assuming limiting stability of the Gram matrix in context of ordinary linear regression.

\begin{theorem}\label{thm:wc.vanishing} Suppose Conditions A, B and D hold. Assume the vanishing jump size regime, $\surd{(T_h)}\xi_w\to 0$ and $\surd{(T_w)}\xi_h\to 0.$  Further suppose Conditions C(i)(b) and C(ii)(a,c) are satisfied with $r_T=o(1)\big/\{s^{1/2}\log(p\vee T)\}.$ Then, we have,
	\benr\label{eq:wc.vanishing}
	&&T_h^{-1}\xi^{2}_w(\tilde\tau_w-\tau^0_w)\Rightarrow \argmax_{\z\in\R}\big\{2\si_{(w,\iny)}W_w(\z)-|\z|\},\qquad T_w\to\iny,\nn\\
	&&T_w^{-1}\xi^{2}_h(\tilde\tau_h-\tau^0_h)\Rightarrow \argmax_{\z\in\R}\big\{2\si_{(h,\iny)}W_h(\z)-|\z|\},\hspace{2mm}\qquad T_h\to \iny,
	\eenr
	where $W_w(\z),$ and $W_h(\z)$ are both two sided Brownian motions\footnote{A two-sided Brownian motion $W(\z)$ is defined as $W(0) = 0,$ $W(\z) = W_1(\z),$ $\z > 0$ and $W(\z) = W_2(-\z),$ $\z < 0,$ where $W_1(\z),$ $W_2(\z)$ are two independent Brownian motions defined on the non-negative half real line}. 
\end{theorem}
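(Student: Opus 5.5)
We argue for $\tilde\tau_w$; the $\tilde\tau_h$ case is symmetric with the roles of $(w,T_h,\xi_w,\om_h)$ and $(h,T_w,\xi_h,\om_w)$ exchanged. The approach is the standard argmax continuous mapping argument applied to the profiled loss $\tau_w\mapsto \cL(\tau_w,\h\tau_h,\h\theta)-\cL(\tau^0_w,\h\tau_h,\h\theta)$.

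\textbf{Step 1: rescale and localize.} Reparametrize $\tau_w=\tau^0_w+\lfloor \z\, T_h^{-1}\xi_w^{-2}\rfloor$. Part (2a) of Theorem \ref{thm:cpoptimal} applies under C(i)(b), C(ii)(a,c), so $T_h^{-1}\xi_w^{-2}$ is the correct scale and $T_h^{-1}\xi_w^{2}(\tilde\tau_w-\tau^0_w)=O_p(1)$. Because $\surd{T_h}\,\xi_w\to0$, this scale $T_h^{-1}\xi_w^{-2}$ diverges. The lattice therefore becomes dense in $\z$, which is what lets a Brownian limit appear. It then suffices to prove weak convergence of the rescaled process $\z\mapsto T_wT_h\{\cL(\tau_w,\h\tau_h,\h\theta)-\cL(\tau^0_w,\h\tau_h,\h\theta)\}$ on compacts $|\z|\le M$. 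This convergence is in $\ell^\iny[-M,M]$, to $|\z|-2\si_{(w,\iny)}W_w(\z)$. The argmax continuous mapping theorem (argmin of the loss) then gives the claim, using the tightness from (2a) and a.s. uniqueness of the argmax of $2\si W(\z)-|\z|$.

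\textbf{Step 2: expand the loss difference.} For $\tau_w>\tau^0_w$, the columns $w\in(\tau^0_w,\tau_w]$ switch from quadrants 1 and 4 to quadrants 2 and 3. Take the $\h\tau_h$ cut as given and temporarily replace $\h\tau_h$ by $\tau^0_h$ and $\h\theta$ by $\theta^0$. The increment is then
\[
\sum_{w=\tau^0_w+1}^{\tau_w}\Big[\sum_{h>\tau^0_h}\big(\|\eta^0_{(1)}\|_2^2-2\vep_{(w,h)}^T\eta^0_{(1)}\big)+\sum_{h\le\tau^0_h}\big(\|\eta^0_{(3)}\|_2^2-2\vep_{(w,h)}^T\eta^0_{(3)}\big)\Big],
\]
up to sign conventions. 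Using (\ref{def:weight.jump.horiz.vert}), the drift equals $(\tau_w-\tau^0_w)T_h\xi_w^2\approx|\z|$. The noise term is a partial sum over columns of i.i.d. mean-zero variables with per-column variance $4T_h[\om_h\eta^{0T}_{(1)}\Si\eta^0_{(1)}+(1-\om_h)\eta^{0T}_{(3)}\Si\eta^0_{(3)}]$. After rescaling, its variance per unit $\z$ is $4\si^2_{(w,\iny)}$ by Condition D(i). Donsker's theorem for triangular arrays (a Lindeberg condition from the subexponential tails of Condition A together with $\overline\xi\le c_u\xi_{\min}$) then gives $2\si_{(w,\iny)}W_w(\z)$. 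The two sides $\z>0$ and $\z<0$ involve disjoint columns, so they are independent. The case $\tau_w<\tau^0_w$ is analogous with $\eta_{(1)},\eta_{(3)}$ replaced by the same jumps with reversed roles.

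\textbf{Step 3: remainder control (main obstacle).} We must show that replacing $\theta^0$ by $\h\theta$ and $\tau^0_h$ by $\h\tau_h$ changes the rescaled process by $o_p(1)$ uniformly in $|\z|\le M$. The mean error produces cross terms of the form $\sum\vep_{(w,h)}^T(\h\theta_{(j)}-\theta^0_{(j)})$. We bound these by $\|\cdot\|_\iny\|\cdot\|_1$, using the cone condition C(ii)(a) to get $\|\h\theta_{(j)}-\theta^0_{(j)}\|_1\le 4 s^{1/2}\|\cdot\|_2\le 4s^{1/2}\xi_{\min}r_T$. A maximal inequality gives a sup-norm of order $\surd{(T_h^{-1}\xi_w^{-2}M T_h)}\cdot\log(p\vee T)$ over the window. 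Combined with $r_T=o(1)/\{s^{1/2}\log(p\vee T)\}$ and B(iii), this yields $o_p(1)$. The deterministic cross terms $\sum(\h\theta-\theta^0)^T\eta^0$ are of order $|\z|\,\xi_{\min}r_T/\xi_w=o(1)$. The misplaced $h$-cut contributes only rows $h$ between $\h\tau_h$ and $\tau^0_h$, a fraction at most $r_T^2$ of each column. Both its drift and noise are therefore $O(r_T^2|\z|)$ and $O_p(r_T\surd{|\z|})$ after scaling. The delicate part is making the $o(1)$ in $r_T$ absorb the $\log$ factors uniformly. Since the preliminary estimates are random and may depend on the data, we handle this by conditioning on the event of Condition C (probability $\ge1-\pi_T$) and bounding suprema over all $\h\tau_h,\h\theta$ in the admissible sets, which keeps the argument free of dependence issues.
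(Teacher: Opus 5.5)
Your proposal is correct and follows the paper's proof essentially step for step. The ingredients match: tightness from part (2a) of Theorem \ref{thm:cpoptimal}; a functional CLT for the oracle process $\cC_w(\tau^0_w+\z T_h^{-1}\xi_w^{-2},\tau^0_h,\theta^0)$, with drift $-|\z|$ and variance from Condition D; a uniform $o_p(1)$ replacement of $(\tau^0_h,\theta^0)$ by $(\h\tau_h,\h\theta)$ via $\ell_\iny$--$\ell_1$ bounds and Lemma \ref{lem:nearoptimalcross.subE.special}, which is the paper's Lemma \ref{lem:Capprox}; and finally the Argmax theorem.

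One cosmetic point: the quantity that is $O_p(1)$ and converges is $T_h\xi_w^{2}(\tilde\tau_w-\tau^0_w)$, as your own reparametrization $\z=T_h\xi_w^2(\tau_w-\tau^0_w)$ shows. The $T_h^{-1}$ you wrote is a typo carried over from the statement.
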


A change of variable $\z=\si_{\iny}^2\z',$ yields $\argmax_{\z\in\R}\big\{2\si_{\iny}W(\z)-|\z|\}=^d\si_{\iny}^2\argmax_{\z'\in\R}\big\{2W(\z')-|\z'|\},$ whose cdf is available in \cite{yao1987approximating}. Next consider the non-vanishing regime, $\surd{(T_h)}\xi_w\to \xi_{(w,\iny)},$ and $\surd{(T_w)}\xi_h\to \xi_{(h,\iny)}.$ For this we require the following additional distributional assumption.

\vspace{1.5mm}
{\it {{\noi{\bf Condition A$'$ (additional distributional assumptions):}} Suppose Conditions A, B and D. Assume the non-vanishing jump size regime, i.e.,  $\surd{(T_h)}\xi_w\to\xi_{(w,\iny)},$ and $\surd{(T_w)}\xi_h\to\xi_{(h,\iny)},$ with $0<\xi_{(w,\iny)},\xi_{(h,\iny)}<\iny.$ For each $w=1,...,T_w$ and $h=1,...,T_h$ define,
		\benr
		\psi_{w,T_w}=\Big[\sum_{h=\tau^0_h+1}^{T_h}\vep_{(w,h)}^T\eta^0_{(1)}+\sum_{h=1}^{\tau_h^0}\vep_{(w,h)}^T\eta^0_{(3)}\Big],\quad
		\psi_{h,T_h}=\Big[\sum_{w=\tau^0_w+1}^{T_w}\vep_{(w,h)}^T\eta^0_{(4)}+\sum_{w=1}^{\tau_w^0}\vep_{(w,h)}^T\eta^0_{(2)}\Big]\nn
		\eenr
		Then we assume that for any constants $c_1,c_2\in\R,$ and for some  distribution $\cP,$ which is continuous and supported in $\R,$ we have,
		\benr
		(i)\,\, c_1+c_2\psi_{w,T_w}\Rightarrow \cP\big(c_1,c_2^2\xi_{(w,\iny)}^2\si^2_{(w,\iny)}\big),\quad
		(ii)\,\, c_1+c_2\psi_{h,T_h}\Rightarrow \cP\big(c_1,c_2^2\xi_{(h,\iny)}^2\si^2_{(h,\iny)}\big).\nn
		\eenr
		Here $\si^2_{(w,\iny)}$ and $\si^2_{(h,\iny)}$ are as defined in Condition D.}}

\vspace{1.5mm}
Further, define a negative drift two sided random walk initializing at the origin.
\benr\label{eq:neg.drift.rw}
\cC_{\iny}(\z,\xi,\si^2)=
\begin{cases}\sum_{t=1}^{\z} z_t, & \z\in \Np=\{1,2,3,...\} \\ 	
	0,				  &	\z=0 \\
	\sum_{t=1}^{-\z}z_t^*,		  &	\z\in \Nn=\{-1,-2,-3,...\},
\end{cases}
\eenr
where $z_t,z_t^*$ are independent copies of a $\cP\big(-\xi^2,4\xi^2\si^2\big)$ distribution, which are also independent over all $t,$ for a distribution law $\cP$\footnote{If one assumes $\vep_{(w,h)}\sim^{i.i.d} \cN(0,\Si),$ then $\cP$ shall also be a normal distribution.} that shall be determined by the form of the underlying distribution in model (\ref{model:rvmcp}) (see, Condition A$'$). Finally, let,
\benr\label{eq:42}
\,\,\,\,\cC_{(w,\iny)}(\z)=\cC_{\iny}\big(\z,\xi_{(w,\iny)},\si_{(w,\iny)}^2\big)\,\, {\rm and}\,\, \cC_{(h,\iny)}(\z)=\cC_{\iny}\big(\z,\xi_{(h,\iny)},\si_{(h,\iny)}^2\big),
\eenr
where $\si_{(w,\iny)}^2$ and $\si_{(h,\iny)}^2$ are variance parameters as defined earlier for the vanishing regime. 

The only additional requirement of Condition A$'$, in comparison to Conditions B and D is of continuous distributions. Here $\mu,\si^2$ represent mean and variance, i.e,  $E\cP(\mu,\si^2)=\mu,$ and ${\rm var}\big(\cP(\mu,\si^2)\big)=\si^2.$  If one assumes $\vep_{(w,h)}\sim \cN(0,\Si),$ then we have $\cP=^d\cN.$ 

\begin{theorem}\label{thm:wc.non.vanishing} Suppose Conditions A$'$, B, D hold. Assume that the jump sizes are non-vanishing $\surd{(T_h)}\xi_w\to \xi_{(w,\iny)},$ and $\surd{(T_w)}\xi_h\to \xi_{(h,\iny)}.$ Further suppose Conditions C(i)(b) and C(ii)(a,c) are satisfied with $r_T=o(1)\big/\{s^{1/2}\log(p\vee T)\}.$ Then, we have,
	\benr\label{eq:wc.non.vanishing}
	&&(\tilde\tau_w-\tau^0_w)\Rightarrow \argmax_{\z\in \Z}\cC_{(w,\iny)}(\z),\qquad T_w\to\iny,\nn\\
	&&(\tilde\tau_h-\tau^0_h)\Rightarrow \argmax_{\z\in \Z}\cC_{(h,\iny)}(\z),\hspace{2mm}\qquad T_h\to \iny,
	\eenr
	where $\cC_{(w,\iny)}(\z)$ and $\cC_{(h,\iny)}(\z)$ are as defined in (\ref{eq:neg.drift.rw}) and (\ref{eq:42}).
\end{theorem}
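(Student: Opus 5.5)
We prove Theorem~\ref{thm:wc.non.vanishing} by the argmax continuous mapping route, treating the $w$ coordinate only; the $h$ coordinate follows by exchanging the roles of $(w,T_w,\eta^0_{(1)},\eta^0_{(3)})$ and $(h,T_h,\eta^0_{(4)},\eta^0_{(2)})$.

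\textbf{Localization.} Under the non-vanishing regime, $T_h^{-1}\xi_w^{-2}\to \xi_{(w,\iny)}^{-2}<\iny$. Part (2a) of Theorem~\ref{thm:cpoptimal}, available under C(i)(b), C(ii)(a,c), then gives $\tilde\tau_w-\tau^0_w=O_p(1)$. So it suffices to study the centered criterion on a fixed window $\z\in\{-M,\dots,M\}$ and then let $M\to\iny$.

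\textbf{Expanding the criterion.} Write
\[
\cL(\tau^0_w+\z,\h\tau_h,\h\theta)-\cL(\tau^0_w,\h\tau_h,\h\theta)
\]
for $\z>0$; the case $\z<0$ is symmetric. Moving $\tau_w$ from $\tau^0_w$ to $\tau^0_w+\z$ reassigns the columns $w\in(\tau^0_w,\tau^0_w+\z]$ from quadrants $1,4$ to quadrants $2,3$. Each reassigned observation $x_{(w,h)}$ contributes
\[
\|x_{(w,h)}-\h\theta_{(a)}\|_2^2-\|x_{(w,h)}-\h\theta_{(b)}\|_2^2 .
\]
We first replace $\h\theta$ by $\theta^0$ and $\h\tau_h$ by $\tau^0_h$, and check that the errors are $o_p(1)$ uniformly over $|\z|\le M$:
\begin{itemize}
\item The mean error contributes terms bounded by $M T_h(\xi_{\min} r_T\,\overline\xi+\xi_{\min}^2r_T^2)$ plus noise cross terms $\sum\vep^T(\h\theta-\theta^0)$. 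Since $T_h\xi_w^2=O(1)$, $\overline\xi\le c_u\xi_{\min}$ and $r_T=o(1)$, the deterministic part is $o(1)$.
\item The noise cross terms are handled with the cone condition C(ii)(a): bound them by $\|\sum\vep\|_\iny\|\h\theta-\theta^0\|_1\lesssim \sqrt{MT_h\log(p\vee T)}\;s^{1/2}\xi_{\min}r_T$. With the stated $r_T=o(1)/\{s^{1/2}\log(p\vee T)\}$ this is $o_p(1)$.
\item The error from $\h\tau_h\ne\tau^0_h$ affects only $M\cdot T_hr_T^2$ observations. Each carries a signal of order $\overline\xi^2$ plus noise of order $\overline\xi$, so this term is $O_p(M T_h r_T^2\xi^2)=o_p(1)$.
\end{itemize}
These bounds follow the same deviation inequalities for subexponential vectors used in proving Theorem~\ref{thm:cpoptimal}.

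\textbf{The idealized increment.} After these replacements, the increment for column $w$ equals
\[
T_h\Big[\om_h\xi_1^2+(1-\om_h)\xi_3^2\Big]-2\psi_{w,T_w}\;=\;T_h\xi_w^2-2\psi_{w,T_w},
\]
up to the factor $1/(T_wT_h)$ and the sign convention for the differences in \eqref{def:jump.vec.quadrants}. Note that $\om_h$ is exactly the fraction of $h>\tau^0_h$. Hence the negative of the normalized criterion on the window is $\sum_{w=\tau^0_w+1}^{\tau^0_w+\z}\big(-T_h\xi_w^2+2\psi_{w,T_w}\big)$.

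\textbf{Distributional limit.} Across columns the $\psi_{w,T_w}$ are independent, because the $\vep$ are i.i.d. By Condition A$'$ with $c_1=-T_h\xi_w^2\to-\xi^2_{(w,\iny)}$ and $c_2=2$, each summand converges to $\cP(-\xi_{(w,\iny)}^2,4\xi_{(w,\iny)}^2\si^2_{(w,\iny)})$. Here Condition D identifies the variance, since $\mathrm{var}(\psi_{w,T_w})=T_h\xi_w^2\cdot\xi_w^{-2}[\cdots]$. Independence of the left and right sides (disjoint columns) gives finite-dimensional convergence of the process to $\cC_{(w,\iny)}$ on $\{-M,\dots,M\}$.

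\textbf{Conclusion.} Because the limit law $\cP$ is continuous, the argmax of $\cC_{(w,\iny)}$ is a.s. unique. The process also has negative drift, so its argmax is a.s. finite and lies in $\{-M,\dots,M\}$ with probability tending to one as $M\to\iny$. Combining this with the localization step, a discrete argmax continuous mapping argument yields $\tilde\tau_w-\tau^0_w\Rightarrow\argmax_{\z\in\Z}\cC_{(w,\iny)}(\z)$.

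The main obstacle is the uniform $o_p(1)$ control of the plug-in errors. It requires that the choice of $r_T$ be strong enough for the $\ell_1$-cone bound to defeat the $\sqrt{\log p}$ factor. This is where the strengthened $r_T=o(1)/\{s^{1/2}\log(p\vee T)\}$ is used.
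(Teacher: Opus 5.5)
Your proposal is correct and follows the paper's proof essentially step for step. The shared steps are: tightness of $\tilde\tau_w-\tau^0_w$ from Theorem~\ref{thm:cpoptimal}(2a); a uniform $o_p(1)$ replacement of $(\h\tau_h,\h\theta)$ by $(\tau^0_h,\theta^0)$ on a bounded window, which is the paper's Lemma~\ref{lem:Capprox}; per-column increments $T_h\xi_w^2-2\psi_{w,T_w}$ converging by Condition A$'$ with independence across columns; and the Argmax theorem. The only difference is that you verify uniqueness and finiteness of the random-walk argmax directly, whereas the paper cites an earlier lemma. One minor slip: the noise part of the $\h\tau_h$-replacement error is not $O_p(MT_hr_T^2\overline\xi^2)$. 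Via the $\ell_\iny$--$\ell_1$ bound with the data-dependent direction $\h\eta$, it is of order $\si\surd{M}\,(\surd{T_h}\,\xi_{\min})\,s^{1/2}r_T\log(p\vee T)$. This is still $o_p(1)$, and it is a second place where the strengthened $r_T$ is genuinely needed.
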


\begin{remark}\label{rem:intervals}{\bf (Construction of adaptive confidence intervals)} {\textnormal{Theorem \ref{thm:wc.vanishing} and Theorem \ref{thm:wc.non.vanishing} enable construction of componentwise confidence intervals for $(\tau^0_w,\tau^0_h)^T$ as: }}

\vspace{2mm}
\hspace{1.5cm}
\begin{minipage}{0.45\textwidth}
{\bf Non-vanishing regime}\\~
$CI_w:\,\big[\tilde\tau_w\pm q_{(\al,w)}^{nv}\big],$ \\
$CI_h:\,\big[\tilde\tau_h\pm q_{(\al,h)}^{nv}\big]$		
\end{minipage}	
\hspace{-1.5cm}
\begin{minipage}{0.45\textwidth}
{\bf Vanishing regime}\\~
$CI_w:\,\big[\tilde\tau_w\pm q_{\al}^{v}\si^2_{(w,\iny)}\big/(T_h\xi_w^2)\big]$,\\
$CI_h:\,\big[\tilde\tau_h\pm q_{\al}^{v}\si^2_{(h,\iny)}\big/(T_w\xi_h^2)\big]$		
\end{minipage}

\vspace{2mm}
{\textnormal {Here $q_{(\al,w)}^{nv},$ $q_{(\al,h)}^{nv},$ are the $(1-\al/2)^{th}$ quantiles of the distributions  $\argmax_{\z\in \Z}\cC_{(w,\iny)}(\z)$ and $\argmax_{\z\in \Z}\cC_{(h,\iny)}(\z)$ of the non-vanishing regime, respectively, and $q_{\al}^{v}$ the corresponding quantile of the distribution $\argmax_{\z\in\R}\big\{2W(\z)-|\z|\}.$ The latter available from its cdf in \cite{yao1987approximating} and the former two obtained as monte-carlo approximations. }}

{\textnormal{The relationship between the two regimes is examined in Theorem 3.6 of \cite{Kaul2023+}, where it is shown these are discrete and continuous versions of the same stochastic process. Specifically, it yields that margin of errors for the two regimes are {\it asymptotically equivalent} when jump size is vanishing, $q_{(\al,w)}^{nv}\asymp q_{\al}^{v}\si^2_{(w,\iny)}\big/(T_h\xi_w^2),$ as $(\surd{T_h}\xi_w)\to 0$ and analogously for the second component. Consequently, one may always utilize the intervals in the non-vanishing regime, and doing so yields $(1-\al)$ asymptotic coverage, $pr\Big((\tilde\tau_w-q_{(\al,w)}^{nv})\le\tau^0_w\le (\tilde\tau_w+q_{(\al,w)}^{nv})\Big)\to (1-\al),$ and $pr\Big((\tilde\tau_h-q_{(\al,h)}^{nv})\le\tau^0_h\le (\tilde\tau_h+q_{(\al,w)}^{nv})\Big)\to (1-\al),$ {\it irrespective of whether the underlying regime is vanishing or non-vanishing}. In other words, the intervals $CI_w,\,CI_h$ are regime adaptive.  }}
\end{remark}

\subsection{Statistical properties of Algorithm \ref{alg:single}}\label{subsec:feasible}

This subsection demonstrates that Algorithm \ref{alg:single} retains all statistical results developed earlier while fully eliminating Condition C, and replacing it with simply a scaling assumption of form $s\log^2 p=o\big(\surd (T_wT_h)\big)$ (Condition E below) on the rate of model parameters.

{\it {{\noi{\bf Condition E (parametric rate restrictions):}}  Let  $\xi_{\min}$ be as in (\ref{def:weight.jump.horiz.vert}) and let $s,p$ be the sparsity parameter (see, (\ref{def:setS})) and dimension size, respectively. Then, assume the following,
		\benr
		\Big(\frac{c_u\si}{\xi_{\min}}\Big)\Big\{\frac{s\log^{2} (p\vee T_wT_h)}{\surd (T_wT_h\underline\om)}\Big\}= o(1).\nn
		\eenr
		Additionally, assume that $s\log(p\vee T_wT_h)\le c_u T_wT_h\underline\om,$ for some constant $c_u>0.$\footnote{We do not necessarily require the order $o(1)$ here to hold simultaneously w.r.t $T_w,T_h.$ If this order holds w.r.t. $T_w\to\iny$ ($T_h<\iny$ or $T_h\to\iny$) then it is sufficient in context of the width change parameter $\tau_w^0,$ and symmetrically for the height change parameter.}
}}

Next is an assumption on the initializer of Algorithm \ref{alg:single}, followed by a discussion illustrating its mildness and a simple strategy for its choice in practice.

\vspace{1.5mm}
{\it {{\noi{\bf Condition F (initializer of Algorithm \ref{alg:single}):}} Let $\psi=\max_{1\le j\le 4}\|\eta^0_{(j)}\|_{\iny},$ and assume that the initializer $\check\tau=(\check\tau_w,\check\tau_h)^T$ of Algorithm \ref{alg:single} satisfies the relations.
		\benr
		(i)\,\, |\check\tau_w-\tau^0_w|\le \frac{c_{u1}T_w\underline\om}{\big(\surd{s\psi\big/\xi_{\min}}\big)},\quad{\rm and}\quad  (ii)\,\,|\check\tau_h-\tau^0_h|\le \frac{c_{u1}T_h\underline\om}{\big(\surd{s\psi\big/\xi_{\min}}\big)}.\nn
		\eenr		
		Additionally assume  $(iii)\,\,\min_{1\le j\le }|Q_j(\check\tau)|\ge c_uT_wT_h\underline\om.$ Here $\underline\om$ is as defined in Condition B, $c_{u}>0$ is any constant and $c_{u1}>0$ is an appropriately chosen small enough constant.}}

Requirement (iii) of Condition F is innocuous, it is satisfied with $\check\tau=(\lfloor T_wk_w\rfloor,\,\lfloor T_hk_h\rfloor)^T=,$ with any $(k_w,k_h)^T\in [c_{u1},c_{u2}]\times[c_{u1},c_{u2}]\subset(0,1)\times(0,1).$ Requirements (i) and (ii) are symmetrical versions in the horizontal and vertical directions. Theoretically valid initializers $\check\tau_w$ and $\check\tau_h$ in very wide $o(T_w)$ and $o(T_h)$ neighborhoods of $\tau^0_w$ and $\tau^0_h$ can be obtained by means of a preliminary coarse grid search as: consider $\log T_w\log T_h$ equally separated values in $\cP\subset\{1,...,T_w\}\times\{1,...,T_h\}$ forming a coarse grid of possible initializers. Then, select the best fitting value $(\check\tau_w,\check\tau_h)$ for Algorithm \ref{alg:single}, i.e., 
$
\check\tau=\argmin_{\tau\in\cP}\cL\big(\tau_w,\tau_h,\theta(\tau)\big).
$
A similar preliminary coarse grid search has also been  utilized in \cite{roy2017change, Kaul2019, Kaul2021single, atchade2017scalable} and more recently in \cite{McGonigle2021}. In all implementations we consider a preliminary grid search of $(\check\tau_w,\check\tau_h)\in\{\lfloor 0.25\cdotp T_w\rfloor,\lfloor 0.5\cdotp T_w\rfloor,\lfloor 0.75\cdotp T_w\rfloor\}\times \{\lfloor 0.25\cdotp T_h\rfloor,\lfloor 0.5\cdotp T_h\rfloor,\lfloor 0.75\cdotp T_h\rfloor\}$ for the initializer.

The following results show that Algorithm 1 retains all desired statistical properties while fully eliminating the placeholder Condition C and replacing it with Condition E. 

\begin{cor}\label{cor:alg1.validity} Suppose Condition A, B, E and F hold and assume that the regularizers for mean estimates of Step 1 of Algorithm 1 are chosen as in (\ref{eq:la.step1.choice}). Then, \\~
	(a) Step 1 estimate $\h\tau=(\h\tau_w,\h\tau_h)^T$ satisfies the bounds (1a) and (1b) of Theorem \ref{thm:cpoptimal}.\\~
	Additionally, suppose the regularizers for mean estimates of Step 2 are chosen as in  (\ref{eq:la.step2.choice}) and assume  $(\psi/{\xi_{\min}}\big)\le c_u\surd\{\log(p\vee T)\}.$ Then, \\~
	(b) Step 2 estimate $\tilde\tau=(\tilde\tau_w,\tilde\tau_h)^T$ satisfies the sharp bounds (2a) and (2b) of Theorem \ref{thm:cpoptimal}.\\~
	Furthermore, assume Condition A$'$ and D holds. Then,\\~
	(c) Step 2 estimate $\tilde\tau=(\tilde\tau_w,\tilde\tau_h)^T$ satisfies the limiting distributions of Theorem \ref{thm:wc.vanishing} and Theorem \ref{thm:wc.non.vanishing}, in the vanishing and non-vanishing jump size regimes, respectively. 
\end{cor}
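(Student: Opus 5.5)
The plan is to reduce Corollary \ref{cor:alg1.validity} to Theorems \ref{thm:cpoptimal}, \ref{thm:wc.vanishing} and \ref{thm:wc.non.vanishing}. Each step of Algorithm \ref{alg:single} is an instance of the plug-in estimator (\ref{est:optimal}). So it suffices to show that Step 1 preliminary estimates satisfy C(i)(a), C(ii)(a,b), and Step 2 estimates satisfy C(i)(b), C(ii)(a,c), each with probability $1-\pi_T$, $\pi_T\to 0$.

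\textbf{Step 1.} Condition C(i)(a) for $\check\tau$ is immediate from Condition F(i),(ii), since $\underline\om\le 1$ and $s\psi/\xi_{\min}\ge c$ (because $\xi_{\min}\le \overline\xi\le \surd s\,\psi$ up to constants, given sparsity of jumps). The main work is a uniform-in-$\tau$ bound for soft-thresholded quadrant means $\h\theta_{(j)}(\tau)$ at a misspecified $\tau$. Write
\[
\bar x_{(j)}(\check\tau)=\theta^0_{(j)}+b_{(j)}(\check\tau)+\bar\vep_{(j)}(\check\tau),
\]
where the bias $b_{(j)}$ is a convex combination of the jump vectors with weights at most $|Q_j(\check\tau)\triangle Q_j(\tau^0)|/|Q_j(\check\tau)|$. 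Using F(iii) and F(i),(ii), this weight is at most $c_{u1}(\xi_{\min}/s\psi)^{1/2}$. The noise term is controlled by a Bernstein inequality for subexponential vectors (Condition A) and a union bound over $p$:
\[
\|\bar\vep_{(j)}\|_\iny\le c_u\si\sqrt{\log(p\vee T)/(T_wT_h\underline\om)}.
\]
With $\la_j$ as in (\ref{eq:la.step1.choice}), chosen to dominate $\|\bar\vep_{(j)}+b_{(j)}\|_\iny\lesssim \la_j$, the standard soft-thresholding/lasso argument on (\ref{est:softL1construction}) gives the cone condition C(ii)(a) and the bound $\|\h\theta_{(j)}-\theta^0_{(j)}\|_2\le c\surd s\,\la_j$. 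Condition E, together with the bias weight bound times $\psi$, makes $\surd s\,\la_j\le c_{u1}\xi_{\min}$, which is C(ii)(b). Theorem \ref{thm:cpoptimal}(1a,1b) then gives part (a).

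\textbf{Step 2.} Part (a) gives $|\h\tau_w-\tau^0_w|\lesssim T_h^{-1}\xi_w^{-2}s\log^2(p\vee T)$. Dividing by $T_w$ and using $\xi_w\ge\xi_{\min}$ and Condition E, this is $o(T_w r_T^2)$ with $r_T=c_{u1}/(s^{1/2}\log(p\vee T))$. This needs $s^2\log^4/(T\xi^2_{\min})\cdot(\text{stuff})=o(1)$, which is exactly the squared form of Condition E. So C(i)(b) holds. Next I repeat the mean analysis at $\h\tau$, which is random but lies in a deterministic neighborhood with high probability, so I take a supremum over that neighborhood. The bias is now at most
\[
\psi\, T\xi^{-2}_{\min}s\log^2/(T\underline\om),
\]
and the noise is of order $\si\sqrt{\log/(T\underline\om)}$. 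Choosing $\la_j$ as in (\ref{eq:la.step2.choice}), I need $\surd s\,\la_j\le \xi_{\min}r_T$. The noise contribution is fine by Condition E. The bias contribution requires controlling $\psi\surd s/\xi_{\min}$; here the extra assumption $\psi/\xi_{\min}\le c_u\surd{\log(p\vee T)}$ absorbs the $\psi$ factor. This gives C(ii)(a,c) and hence part (b) via Theorem \ref{thm:cpoptimal}(2a,2b).

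\textbf{Part (c)} follows identically. Under the stronger requirement $r_T=o(1)/\{s^{1/2}\log\}$, the same bounds hold with an $o(1)$ gain, again from Condition E's $o(1)$. Theorems \ref{thm:wc.vanishing} and \ref{thm:wc.non.vanishing} then apply directly.

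The main obstacle is the uniform control over the random $\h\tau$ of the bias term in Step 2, and verifying that the rate bookkeeping actually closes to $\xi_{\min}r_T$. I would handle the randomness by conditioning on the high-probability event from part (a) and taking a maximum over the $O(\text{poly}\,T)$ candidate values of $\tau$ in the union bound; this costs only a $\log T$ factor, which is already built into $\log(p\vee T_wT_h)$.
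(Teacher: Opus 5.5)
Your proposal follows the paper's proof essentially step for step: you verify C(i)(a) and C(ii)(a,b) at $\check\tau$ through a uniform-in-$\tau$ soft-thresholding bound with misspecification bias of order $\psi(u_{T_w}\vee u_{T_h})/\underline\om$ (the paper's Theorem~\ref{thm:unifmean} and Lemma~\ref{lem:step1mean}), apply Theorem~\ref{thm:cpoptimal}(1a,1b), then use Condition E to place $\h\tau$ in C(i)(b), re-run the mean bound at $\h\tau$ to get C(ii)(a,c), and invoke Theorems~\ref{thm:cpoptimal}, \ref{thm:wc.vanishing} and \ref{thm:wc.non.vanishing}. There are two small bookkeeping slips, neither of which affects the argument. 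First, the Step 2 bias should read $\psi\,\xi_{\min}^{-2}s\log^2(p\vee T)/(T\underline\om)$ with $T=T_wT_h$, without the extra factor $T$ in the numerator. Second, in Step 1 you read Condition F as $\surd{s\psi/\xi_{\min}}$; under that reading the bias contributes $c_{u1}(\psi\xi_{\min})^{1/2}$ to $\surd s\,\la_j$, and this is at most $c\,c_{u1}\xi_{\min}$ not because of Condition E but because $\psi\le\overline\xi\le c_u\xi_{\min}$ by Condition B(iii) (the paper reads F as $\surd s\,\psi/\xi_{\min}$, cf.\ (\ref{eq:28}), which makes this term exactly $c_{u1}\xi_{\min}$).
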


This result shows that any $\check\tau=(\check\tau_w,\check\tau_h)^T$ under Condition F yields Step 1 means $\check\theta_{(j)}=\h\theta_{(j)}(\check\tau),$ $j=1,2,3,4,$ of (\ref{est:softthresh}) satisfying the weaker Condition C(ii)(a,c). Parts (1a) and (1b) of Theorem \ref{thm:cpoptimal} now guarantees that $\h\tau=(\h\tau_w,\h\tau_h)^T,$ are near optimal estimates. This near optimal $\h\tau$ together with the updated mean estimates $\h\theta_{(j)}=\h\theta_{(j)}(\h\tau),$ $j=1,2,3,4,$ satisfy the stronger requirements of Condition C(i)(b) and C(ii)(a,c). This allows us to perform another update $\tilde\tau=(\tilde\tau_w,\tilde\tau_h).$ Parts (2a) and(2b) of Theorem \ref{thm:cpoptimal} now guarantee optimality of Step 2,i.e.,  $o(T_w)$-nbd.$\longrightarrow^{\rm Step 1}$ near optimal-nbd., $O_p(T_h^{-1}\xi^{-2}_ws\log^2 p)$ $\longrightarrow^{\rm Step 2}$ optimal-nbd., $O_p(T_h^{-1}\xi^{-2}_w),$ in context of the width change parameter $\tau_w$, and symmetrically for illustrated in Figure \ref{fig:schematic}.

\section{Numerical experiments}\label{sec:numerical}

This section provides Monte-Carlo simulations to support developments above. In all to follow noise variables $\vep_{(w,h)}\in\R^p$ are generated as i.i.d. Gaussian r.v.'s, $\vep_{(w,h)}\sim^{i.i.d} \cN(0,\Si),$ $w=1,...,T_w,\,\,h=1,...,T_h.$ Covariance matrix $\Si$ is chosen to be a toeplitz type matrix defined as $\Si_{ij}=\rho^{|i-j|},$ $i,j=1,...,p$ and $\rho=0.5.$ Tuning parameters are set as $\la=\la_j,$ $j=1,...,4$ and are chosen via a usual BIC criterion, e.g., \citep{Kaul2019, Kaul2021single} with the considered 2d-squared loss (\ref{def:sq.loss}).

\subsection{Simulation A: Limitations of a sequential application of existing 1d-methods}

\begin{wrapfigure}{r}{0.4\textwidth}
\resizebox{0.4\textwidth}{!}{
\begin{tabular}{lcccccr}
\toprule
$\theta_1$ & \makecell{JumpSize\\ $\xi_w$(2d)} &  \makecell{JumpSize\\ $\psi_w$(1d)} & \makecell{Bias\\Alg.1}  & \makecell{Bias\\ Seq.1d}&  \makecell{Rmse\\Alg.1} & \makecell{Rmse\\ Seq.1d} \\
\midrule
0.25 & 0.56 & 0.11 & 0.99 & 3.51 & 4.03 & 8.74\\
0.28 & 0.60 & 0.15 & 0.78 & 3.63 & 2.81 & 8.52\\
0.30 & 0.63 & 0.18 & 0.57 & 3.31 & 2.36 & 8.01\\
0.33 & 0.67 & 0.22 & 0.54 & 3.12 & 1.78 & 7.74\\
0.36 & 0.71 & 0.25 & 0.51 & 1.97 & 1.57 & 6.64\\
0.38 & 0.75 & 0.29 & 0.39 & 2.49 & 1.31 & 6.32\\
0.41 & 0.79 & 0.32 & 0.43 & 1.90 & 1.25 & 5.64\\
0.43 & 0.83 & 0.36 & 0.42 & 1.43 & 1.37 & 4.22\\
0.46 & 0.87 & 0.39 & 0.29 & 1.23 & 1.05 & 3.77\\
0.49 & 0.91 & 0.43 & 0.19 & 1.00 & 0.85 & 3.49\\
0.51 & 0.96 & 0.46 & 0.20 & 0.92 & 0.81 & 2.48\\
0.54 & 1.00 & 0.50 & 0.17 & 0.87 & 0.71 & 2.59\\
0.57 & 1.04 & 0.54 & 0.17 & 0.70 & 0.69 & 2.05\\
0.59 & 1.08 & 0.57 & 0.09 & 0.58 & 0.48 & 1.63\\
0.62 & 1.13 & 0.61 & 0.07 & 0.58 & 0.43 & 1.55\\
0.64 & 1.17 & 0.64 & 0.05 & 0.65 & 0.33 & 1.41\\
0.67 & 1.21 & 0.68 & 0.03 & 0.60 & 0.23 & 1.18\\
0.70 & 1.26 & 0.71 & 0.03 & 0.51 & 0.20 & 1.09\\
0.72 & 1.30 & 0.75 & 0.02 & 0.58 & 0.13 & 1.05\\
0.75 & 1.35 & 0.78 & 0.02 & 0.52 & 0.14 & 1.01\\
\bottomrule
\end{tabular}}
\captionof{table}{\footnotesize{{\it Results of Simulation A with respect to $\tau^0_w$ (horizontal axis). Observed jump size under proposed approach is uniformly larger and localization metrics of bias and rmse are uniformly lower.}}}  
\label{tab:sim1dcomp}
\end{wrapfigure}
This section numerically presents the deficiency of existing 1d-change point methods to recover change points when applied sequentially for $\tau^0_w$ and $\tau^0_h$ under Model \ref{model:rvmcp}. 

We generate $20$ combinations of $p$-dimensional mean vectors $\theta_{(1)},\theta_{(2)},\theta_{(3)},\theta_{(4)}.$ To obtain visualizable results we alter only one parameter which is the first component of the first mean vector. We set $\theta_{(2)}=\theta_{(4)}=0_{p\times 1},$ $\theta_{(3)}=(0.25_{s\times 1},0_{(p-s)\times 1}),$ where $0.25_{s\times 1}$ represents an s-vector with all components set to $0.25.$ We set $\theta_{(1)}=((\theta_1)_{s\times 1},0_{(p-s)\times 1})$ where $\theta_1$ is in a uniform grid between $0.25$ and $0.75.$  Dimensions are set to $p=100$ and $s=5,$ which are not altered as the reason for the deficiency is the discrepancy of jump sizes and same will remain true irrespective of dimensions. Sampling periods are set to $T_w=T_h=35$ and the change point is set to $\tau_w^0= \lfloor 0.4\cdotp T_w\rfloor$ and $\tau_h^0= \lfloor 0.4\cdotp T_\rfloor.$  For each parameter combination we compute corresponding jump sizes $\psi_w$ (see, Remark \ref{rem:1d.limitation}) for the sequential 1d procedure and  $\xi_w$ (see, (\ref{def:weight.jump.horiz.vert})) for the proposed method. We obtain monte-carlo approximations (over $500$ replications) of bias \big($|E(\h\tau-\tau)|$\big) and rmse  \big($E^{1/2}(\h\tau-\tau)^{2}$\big). The sequential-1d method forming the baseline is implemented via the method of \cite{wang2018high} by the author provided r-package under default settings. Results with respect to $\tau^0_w$ are reported in Table \ref{tab:sim1dcomp} and visualized in Figure \ref{fig:sim1dcomp}.

Results support the discussion in Section \ref{sec:intro}. Following are key observations stated w.r.t the horizontal direction $w.$ The jump size measure under the proposed methodology is uniformly higher than a sequential 1d approach for all considered mean parameter values, i.e., our method can `see' more discrimination between pre and post segments in comparison to a sequential 1d approach, see Left Panel of Figure \ref{fig:sim1dcomp}. This distinction manifests in the resulting precision in which the change point parameter is localized, see, Right Panel Figure \ref{fig:sim1dcomp}. Specifically, the proposed Algorithm 1 yields a bias that is uniformly and proportionately smaller. Same observation holds for rmse. Finally, we emphasize that that the deficiency of the sequential 1d approach observed here is due to the foundational issue of the method being implemented under a mis-specified Model \ref{model:rvmcp} and we expect the same with any other 1d-method besides that of the considered \cite{wang2018high}.

\begin{figure}[htbp]
\centering
	\begin{minipage}{0.42\textwidth}
	\includegraphics[width=\textwidth]{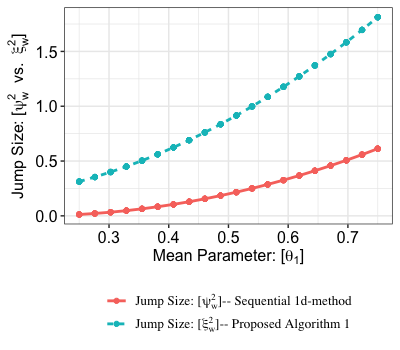}
	\end{minipage}
	\hspace{3mm}
	\begin{minipage}{0.42\textwidth}
	\includegraphics[width=\textwidth]{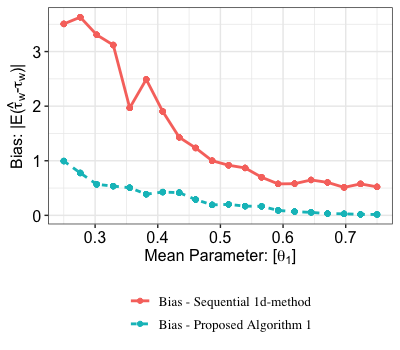}
	\end{minipage}
	\caption{\footnotesize{{\it Visualized results of Simulation A with respect to $\tau^0_w$ (horizontal axis). Left Panel: y-axis: Jump Sizes $\psi_w^2$ obtained by a sequential 1d approach, and $\xi_w^2$ obtained by proposed 2d-approach.  Right panel: y-axis- Monte-carlo approximation of bias in $\h\tau_w$ obtained over $500$ replications. Observed jump size under proposed approach is uniformly larger and bias is uniformly and proportionately lower for proposed method}}}
	\label{fig:sim1dcomp}
\end{figure}


 \subsection{Simulation B: Estimation and Inference}

This section provides numerical support to proposed Algorithm \ref{alg:single} and the inference results of Theorem \ref{thm:wc.vanishing} and Theorem \ref{thm:wc.non.vanishing}. In all cases considered, the mean vectors are set as  $\theta_{(1)}^0=\theta_{(3)}^0=\big(\theta_{s\times 1}^T,0...,0\big)^T_{p\times 1},$ where $\theta=(0.75,...,0.25)_{s\times 1},$ contains evenly spaced $s=5$ entries. The remaining two means are set to zero, i.e.,  $\theta_{(2)}^0=\theta_{(4)}^0=0_{p\times 1}.$  We consider all combinations of the sampling periods  $T_w,T_h\in\{30,35,40,45\},$  dimension $p\in\{10,50,100,250\}.$ The $2d$-change point $\tau^0=(\tau_w^0,\tau^0_h)^T$ chosen as all combinations of $\tau_w^0\in \big\{\lfloor 0.2\cdotp T_w\rfloor,\lfloor 0.4\cdotp T_w\rfloor,\lfloor 0.6\cdotp T_w\rfloor,\lfloor 0.8\cdotp T_w\rfloor\big\}$ and   $\tau_h^0\in \big\{\lfloor 0.2\cdotp T_h\rfloor,\lfloor 0.4\cdotp T_h\rfloor,\lfloor 0.6\cdotp T_h\rfloor,\lfloor 0.8\cdotp T_h\rfloor\big\}.$ 

We construct confidence intervals using both the limiting distributions of Theorem \ref{thm:wc.vanishing} and Theorem \ref{thm:wc.non.vanishing}. The significance level is set to $\al=0.05.$ Confidence intervals are constructed component-wise, i.e., for the width change parameter as, $\big[(\tilde\tau_w-ME_w),\, (\tilde\tau_w+ME_w)\big],$ where $\tilde\tau_w$ is the width component of the output of Algorithm 1.The margin of error ($ME_w$) is computed as  $ME_w=q_{\alpha}^v\si^2_{(w,\iny)}\big/\big(T_h\xi^2_w\big)$ or $ME=q_{\al}^{nv}$ based on the results of Theorem \ref{thm:wc.vanishing} and Theorem \ref{thm:wc.non.vanishing}, respectively. Here $q_{\al}^v$ represents the $\big(1-\alpha/2\big)^{th}$ quantile of the argmax of two sided negative drift Brownian motion of Theorem \ref{thm:wc.vanishing}. This critical value is evaluated as $c_{\alpha}=11.03$ by using its distribution function provided in \cite{yao1987approximating}. The $\big(1-\alpha/2\big)^{th}$ quantile $q_{\al}^{nv}$ of the argmax of the two sided negative drift random walk is computed as its monte carlo approximation by simulating $4000$ realizations of this distribution. As per the assumed data generating process, the distribution $\cP$ here is Gaussian. The implementation of the confidence interval, we utilize plugin estimates of $\si^2_{(w,\iny)}$ and $\xi^2_w,$ whose computational details of which are provided in Appendix \ref{app:numerical.supplement} of the supplement. Symmetrical computations are carried out for the height parameter $\tau_h.$

\begin{wrapfigure}{r}{0.6\textwidth}
	\centering
	\resizebox{0.6\textwidth}{!}{
		\begin{tabular}{llllllll}
			\toprule
			\multicolumn{2}{c}{\multirow{2}{*}{\begin{tabular}[c]{@{}c@{}}$T_h=30,$\\ $\tau^0_h/T_h=0.2$\end{tabular}}} & \multicolumn{3}{c}{$p=100$}                                                                                          & \multicolumn{3}{c}{$p=250$}                                                                                          \\ \cmidrule{3-8}
			\multicolumn{2}{c}{}                                                                                        & \multicolumn{1}{c}{\multirow{2}{*}{bias (rmse)}} & \multicolumn{2}{c}{coverage (av. ME)}                             & \multicolumn{1}{c}{\multirow{2}{*}{bias (rmse)}} & \multicolumn{2}{c}{coverage (av. ME)}                             \\ \cmidrule{1-2} \cmidrule{4-5} \cmidrule{7-8}
			\multicolumn{1}{c}{$\tau^0_w/T_w$}                         & \multicolumn{1}{c}{$T_w$}                        & \multicolumn{1}{c}{}                             & \multicolumn{1}{c}{Vanishing} & \multicolumn{1}{c}{Non-Vanishing} & \multicolumn{1}{c}{}                             & \multicolumn{1}{c}{Vanishing} & \multicolumn{1}{c}{Non-Vanishing} \\ \midrule
			0.2                                                      & 30                                               & 0.032 (0.738)                                    & 0.966 (0.352)                 & 0.966 (0.002)                     & 0.024 (0.74)                                     & 0.968 (0.306)                 & 0.968 (0.002)                     \\
			0.2                                                      & 35                                               & 0.016 (0.167)                                    & 0.972 (0.362)                 & 0.972 (0)                         & 0.008 (0.19)                                     & 0.964 (0.325)                 & 0.964 (0)                         \\
			0.2                                                      & 40                                               & 0.008 (0.167)                                    & 0.972 (0.375)                 & 0.972 (0)                         & 0.01 (0.173)                                     & 0.97 (0.342)                  & 0.97 (0)                          \\
			0.2                                                      & 45                                               & 0.026 (0.173)                                    & 0.97 (0.381)                  & 0.97 (0)                          & 0.01 (0.615)                                     & 0.964 (0.35)                  & 0.964 (0.002)                     \\ \midrule
			0.4                                                      & 30                                               & 0.048 (0.29)                                     & 0.952 (0.457)                 & 0.952 (0.002)                     & 0.104 (0.544)                                    & 0.938 (0.438)                 & 0.938 (0)                         \\
			0.4                                                      & 35                                               & 0.04 (0.303)                                     & 0.962 (0.463)                 & 0.962 (0)                         & 0.086 (0.546)                                    & 0.932 (0.451)                 & 0.932 (0)                         \\
			0.4                                                      & 40                                               & 0.038 (0.224)                                    & 0.95 (0.47)                   & 0.95 (0)                          & 0.044 (0.261)                                    & 0.956 (0.455)                 & 0.956 (0)                         \\
			0.4                                                      & 45                                               & 0.024 (0.245)                                    & 0.962 (0.481)                 & 0.962 (0)                         & 0.028 (0.228)                                    & 0.97 (0.463)                  & 0.97 (0.002)                      \\ \midrule
			0.6                                                      & 30                                               & 0.142 (0.801)                                    & 0.922 (0.49)                  & 0.922 (0.01)                      & 0.122 (0.852)                                    & 0.928 (0.467)                 & 0.93 (0.01)                       \\
			0.6                                                      & 35                                               & 0.068 (0.346)                                    & 0.94 (0.491)                  & 0.94 (0)                          & 0.114 (0.68)                                     & 0.934 (0.481)                 & 0.934 (0.008)                     \\
			0.6                                                      & 40                                               & 0.068 (0.696)                                    & 0.954 (0.498)                 & 0.954 (0.004)                     & 0.072 (0.369)                                    & 0.942 (0.485)                 & 0.942 (0.004)                     \\
			0.6                                                      & 45                                               & 0.046 (0.272)                                    & 0.958 (0.505)                 & 0.96 (0.004)                      & 0.066 (0.387)                                    & 0.948 (0.49)                  & 0.954 (0.008)                     \\ \midrule
			0.8                                                      & 30                                               & 0.276 (1.31)                                     & 0.856 (0.444)                 & 0.858 (0.022)                     & 0.568 (2.416)                                    & 0.776 (0.424)                 & 0.78 (0.034)                      \\
			0.8                                                      & 35                                               & 0.12 (0.639)                                     & 0.928 (0.456)                 & 0.93 (0.01)                       & 0.27 (0.838)                                     & 0.82 (0.433)                  & 0.82 (0.018)                      \\
			0.8                                                      & 40                                               & 0.176 (0.699)                                    & 0.884 (0.468)                 & 0.892 (0.018)                     & 0.168 (0.593)                                    & 0.88 (0.431)                  & 0.882 (0.01)                      \\
			0.8                                                      & 45                                               & 0.094 (0.377)                                    & 0.928 (0.466)                 & 0.928 (0.006)                     & 0.294 (1.599)                                    & 0.866 (0.445)                 & 0.866 (0.014)                     \\ \bottomrule
	\end{tabular}}
\captionof{table}{\footnotesize{Simulation results for estimation of $\tau^0_w$ based on 500 replications. All reported metrics rounded to three decimals. Other data generating parameters: $T_h=30,$ $\tau^0_h=\lfloor 0.2\cdotp T_h\rfloor$ and $p\in\{100,250\}.$}}
	\label{tab:wres2}
\end{wrapfigure}

Partial results are provided in Table \ref{tab:wres2}, the remainder can be found in Appendix \ref{app:numerical.supplement}. Change estimates in both directions are observed to exhibit little bias with an expected deterioration with larger dimension sizes $p.$ The proposed inference methodology is observed to provide good control at the nominal significance level if one provides a slight leeway given the discrete nature of the underlying problem and confidence intervals. The cases where coverage is observed to be significantly away from nominal are again the larger values of $p,$ or where change points are near the parametric boundary (see, Table \ref{tab:wres2}, case: $p=250,$ $\tau^0=0.8$). It can be observed that this deviation from nominal coverage is primarily due to bias in estimation and not the computation of margin of error (the margin of error as expected remains stable for all values of $T_w$).  Importantly, bias is observed to diminish and coverage to catch up to nominal as the effective sample size of $T_wT_h\underline\om$ increases, i.e., when the sampling periods increase or the change point moves away from parametric boundaries.

\section{Application: Temperature patterns in the Pacific North-West Region of the United States}\label{sec:real.data}

\begin{wrapfigure}{r}{0.35\textwidth}
{\centering
		\includegraphics[width=0.35\textwidth]{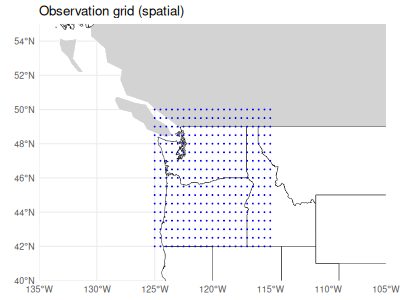}
	\caption{\footnotesize{Geographical grid of $p=357$ locations (blue dots). Daily 2m-Temperature data is collected for each grid point over $25$ years.}}}
	\label{fig:obsgrid}
\end{wrapfigure}

We consider daily $2m$-temperature data over the Pacific Northwest (PNW) region of United States and Canada, located between latitudes $42^\circ N$ and $50^\circ N,$ and longitudes $115^\circ W$ and $125^\circ W,$ encompassing parts of Washington, Oregon, Idaho, and British Columbia. Washington produces approximately 60\% of the apples consumed in the U.S, Idaho grows nearly one-third of the country's potatoes, and Oregon produces over 99\% of the nation's hazelnuts, contributing to the multi-billion dollar agricultural economy of the Pacific Northwest. Data is collected from the Copernicus Climate Change Service (\url{https://cds.climate.copernicus.eu/}). It is collected at a spatial resolution of $0.5^\circ\times 0.5^\circ$ which results in a total of $p=357$ spatial grid points over the considered PNW region. Temporal resolution is of one day over a period of $T_w=25$ years, from January 1, 2000, to December 31, 2024. All leap day's were withheld and thus we had $T_h=365.$  {\it ERA5} is a comprehensive global atmospheric reanalysis product compiled by the European Centre for Medium-Range Weather Forecasts. Its products have been extensively used in climate data analysis, e.g.,  (\citep{JIANG2021125660, nogueira2020inter}) amongst many others.

We first allow for boundary values of the change point by utilizing the conventional approach of an $\ell_0$ regularization as described in the following remark.

\begin{remark}\label{rem:boundary} {\bf(Boundary cases of $\tau^0=(T_w,\tau_h^0)^T,$ $\tau^0=(\tau^0_w,T_h)^T$ or $\tau^0=(T_w,T_h)^T$)}. \textnormal{This can be handled by replacing Step 1 of Algorithm \ref{alg:single} with a $0$-norm regularized version,}
	\benr\label{eq:45}
	\h\tau^*_w&=&\begin{cases} 	T_w			  &	 {\rm if}\,\, \{\cL(T_w,\h\tau_h,\h\theta)-\cL(\h\tau_w,\h\tau_h,\h\theta)\}<\g_w, \\
		\h\tau_w & {\rm else},
	\end{cases}\nn\\
	\h\tau^*_h&=&\begin{cases} 	T_h			  &	 {\rm if}\,\, \{\cL(\h\tau_w,T_h,\h\theta)-\cL(\h\tau_h, \h\tau_h,\check\theta)\}<\g_h, \\
		\h\tau_h & {\rm else},
	\end{cases}
	\eenr
	\textnormal{Here $\g_w,\g_h$ are tuning parameters and $\h\tau_w$ and $\h\tau_h$ are from Step 1 of Algorithm \ref{alg:single}. This regularization is common in the literature, e.g., \cite{fryzlewicz2014wild, wang2018high},   it declares the presence or absence of a change. Selection consistency \big($pr(\tilde\tau_w=T_w)\to 1,$ $T_w\to\iny,$ when $\tau^0_w=T_w$ and symmetrical for $\tilde\tau_h.$\big)  can be also be verified via conventional arguments, see, e.g. \cite{Kaul2019}. Algorithm \ref{alg:single.regularized} below aggregates all steps to also search for boundary values.}
\end{remark}

	\begin{algorithm}\caption{Estimation of $\tau^0=(\tau^0_w,\tau^0_h)^T$ with boundary selection}
	\label{alg:single.regularized}
		\begin{algorithmic}[1]

			\Statex Initialize a user chosen change point $\check\tau=(\check\tau_w,\check\tau_h),$

			\State Compute estimates $\check\theta_{(j)}=\h\theta_{(j)}(\check\tau),$ $j=1,2,3,4.$ and change point estimate $\h\tau=(\h\tau_y,\h\tau_d)^T$ as Step 1 of Algorithm \ref{alg:single}. Additionally perform selection as $\h\tau^*=(\h\tau_w^*,\h\tau_h^*)^T$ of (\ref{eq:45}).
			\State If $\h\tau^*_w=T_w,$ or $\h\tau^*_h=T_h,$ then set $\tilde\tau_w=T_w,$ or $\tilde\tau_h=T_h,$ respectively, else update to $\h\theta_{(j)}=\h\theta_{(j)}(\h\tau),$ $j=1,2,3,4,$ and update,
			\benr
			\tilde\tau_w=\argmin_{1\le \tau_w<T_w} \cL(\tau_w,\h\tau_h^*,\h\theta),\quad
			\tilde\tau_h=\argmin_{1\le \tau_h<T_h} \cL(\h\tau_w^*,\tau_h,\h\theta)\nn
			\eenr
			\Statex Output: $\tilde\tau=(\tilde\tau_w,\tilde\tau_h).$
		\end{algorithmic}
	\end{algorithm}
	
Algorithm \ref{alg:single.regularized} is applied recursively to search for multiple 2d change points, i.e., each segment is further partitioned until no further changes can be found. This is the two dimensional analog of the commonly utilized one dimensional approach of binary segmentation. The additional tuning parameters $\g_w,\g_h$ are also chosen by first setting $\g_w=\g_h=\g$ and then applying a BIC criteria.

The method segments the temporal $25\times 365$ region of the observed $357\times 25\times 365$ tensor into $46$ contiguous partitions. We present cross-sectional visualizations with respect to three spatial locations, namely, Seattle, Spokane and Pullman\footnote{The closest grid point was utilized for corresponding city}. The first two are the most populous cities in Washington. The latter the largest city in Whitman County, which is a major wheat producer.    

\begin{figure}[]
	\centering
	\begin{minipage}{0.3\textwidth}
	\includegraphics[width=\textwidth]{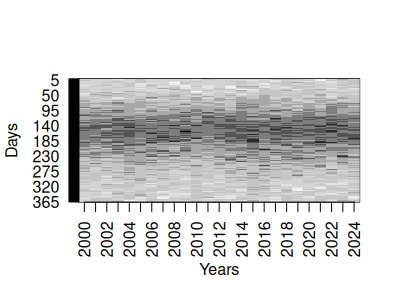}
	\end{minipage}
	\hspace{3mm}
	\begin{minipage}{0.3\textwidth}
	\includegraphics[width=\textwidth]{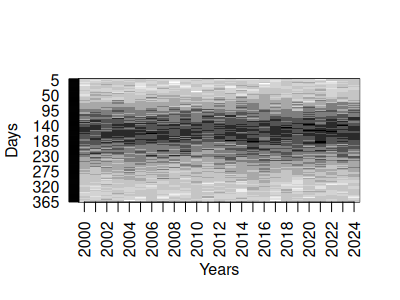}
	\end{minipage}
	\hspace{1mm}
	\begin{minipage}{0.3\textwidth}
	\includegraphics[width=\textwidth]{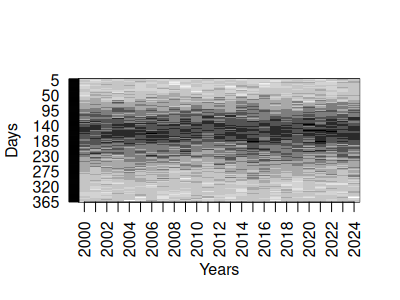}
	\end{minipage}

\vspace{-5mm}
	\begin{minipage}{0.3\textwidth}
	\includegraphics[width=\textwidth]{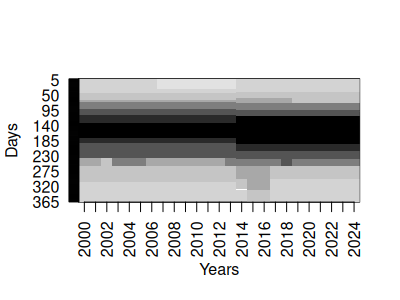}
	\end{minipage}
	\hspace{3mm}
	\begin{minipage}{0.3\textwidth}
	\includegraphics[width=\textwidth]{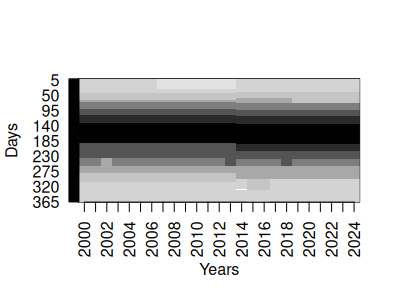}
	\end{minipage}
	\hspace{1mm}
	\begin{minipage}{0.3\textwidth}
	\includegraphics[width=\textwidth]{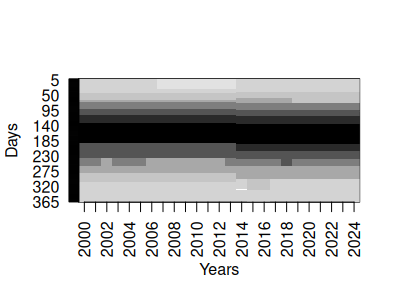}
	\end{minipage}
	\caption{\footnotesize{{\it Raw data (Top row) and Means over estimated partitions (Bottom row) data for Seattle (Left), Spokane (Center) and Pullman (right). Each pixel value represents the corresponding $2m$-temperature with a gradient of increasing temperature from {\it white} to {\it black}  }}}  
	\label{fig:raw.and.mean}
\end{figure}

Estimated partitions appear to have a primary segment at the year $2013$  visualized in Figure \ref{fig:raw.and.mean}. The mean temperature pre and post $2013$ was found to be $49.97F\to 51.20F$ $49.75F\to 50.90F,$ and $48.63F\to 49.87F$ at Seattle, Spokane and Pullman, respectively. In each case a pre and post means comparison was found to have $p-value\,< \,10^{-3}$ by the ordinary t-test. This increase does not come as a surprise and may also be gleaned from many studies pointing to the same trend including the IPCC report \cite{lee2023ipcc}. However, the additional important aspect brought out by our analysis pertains to how this excess heat is distributed within the year. This can be observed in Figure \ref{fig:raw.and.mean}, wherein one observes the  darker regions of the post segments are more pronounced mid year. This is more clearly visualized in Figure \ref{fig:cross.sectional} which plots a further cross section comparing one year from the pre 2013 and one from the post 2013 segment. At all three locations one observes a similar trend, specifically, excess heat explained largely by an early onset of summer and a more intense peak summer. The winters and early spring appear to remain somewhat unaffected. Precise mean values over these segments may also be gleaned from this analysis. 

\begin{figure}[htbp]
\centering
	\begin{minipage}{0.31\textwidth}
	\includegraphics[width=\textwidth]{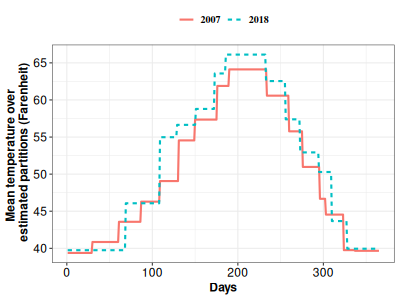}
	\end{minipage}
	\hspace{3mm}
	\begin{minipage}{0.31\textwidth}
	\includegraphics[width=\textwidth]{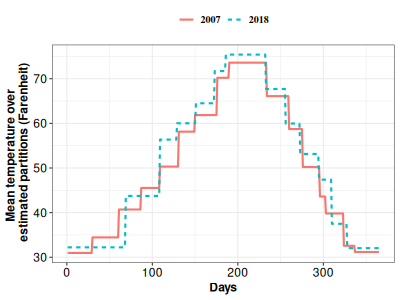}
	\end{minipage}
	\hspace{1mm}
	\begin{minipage}{0.31\textwidth}
	\includegraphics[width=\textwidth]{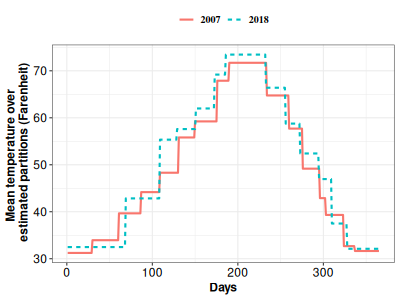}
	\end{minipage}
	\caption{\footnotesize{{\it Means over estimated partitions for Seattle (left), Spokane (center) and Pullman (right) for the years $2007$ (solid orange) and $2018$ (dashed, blue). Observe an uneven distribution of excess heat largely indicated by an early onset of summer.}}}  
	\label{fig:cross.sectional}
\end{figure}

Shifts in seasonal patterns have implications on agricultural practices. Food grains have specific tilling, sowing, filling and harvesting schedules with prescribed temperature ranges. E.g., per the Whole Grain Council (\url{https://wholegrainscouncil.org}), spring wheat (\textit{Triticum aestivum} L.), is sown March--May and harvested Aug--Sept  Tillering (Day of year 110--150) requires 12--25~°C, while grain filling (Day of year  180--220) requires 18--22°C \citep{Porter1999,  FAO2000}. Our analysis may be utilized to obtain times of the year when one may expect such conditions in a dynamic manner. Moreover, on also how long such conditions may last over the year.

\bibliographystyle{abbrvnat}
\bibliography{bibliography}

\appendix

\newpage
{\Large {\sc Supplementary Materials: High Dimensional Change Point Models for Two-Directional Data}}

\section{Proofs of results in Section \ref{sec:main}}
To present the arguments of this section we require additional notation. In all to follow let  $\h\eta_{(j)}$ represent estimates of the jump parameters $\eta^0_{(j)},$ $j=1,2,3,4$ respectively. We define,
\benr\label{def:cU}
&&\cU_w(\tau_w,\tau_h,\theta)=T_wT_h\Big(\cL(\tau_w,\tau_h,\theta)-\cL(\tau_w^0,\tau_h,\theta)\Big),\,\,{\rm and}\nn\\
&&\cU_h(\tau_w,\tau_h,\theta)=T_wT_h\Big(\cL(\tau_w,\tau_h,\theta)-\cL(\tau_w,\tau_h^0,\theta)\Big),\nn
\eenr
where $\cL(\cdotp,\cdotp,\cdotp)$ is the squared loss defined in (\ref{def:sq.loss}). Clearly, the plug-in estimates $\tilde\tau_w(\h\tau_h,\h\theta)$ and $\tilde\tau_h(\h\tau_w,\h\theta)$ of (\ref{est:optimal}) can then equivalently be written as,
\benr\label{est:optimal.cU}
\tilde\tau_w(\h\tau_h,\h\theta)=\argmin_{1\le\tau_w< T_w} \cU_w(\tau_w,\h\tau_h,\h\theta),\quad{\rm and}\quad \tilde\tau_h(\h\tau_w,\h\theta)=\argmin_{1\le\tau_h< T_h} \cU_h(\h\tau_w,\tau_h,\h\theta)
\eenr
The change of representation of estimates to (\ref{est:optimal.cU}) is made solely for notational convenience in the proofs to follow. 

\begin{lemma}\label{lem:unif.lb} Suppose the model (\ref{model:rvmcp}) and assume that Condition A, B, C(i)(a) and C(ii)(a,b) hold. Then, we have, 
	\benr\label{eq:3a}
	\cU_w(\tau_w,\h\tau_h,\h\theta)\ge \frac{T_h\xi^2_w}{2}\Big[|\tau_w-\tau_w^0|-c_u\log(p\vee T_wT_h)\frac{\si}{\xi_w}\Big\{\frac{s|\tau_w-\tau_w^0|}{T_wT_h}\Big\}^{\frac{1}{2}}\Big]
	\eenr
	with probability at least $1-2\exp\{-c_1\log(p\vee T)\}-\pi_T,$ uniformly over all possible $\tau_w.$ Here the constant $c_1>0$ that does not depend on any model parameters.
\end{lemma}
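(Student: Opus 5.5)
We expand the loss difference $\cU_w(\tau_w,\h\tau_h,\h\theta)$ directly. Fix $\tau_w>\tau^0_w$; the case $\tau_w<\tau^0_w$ is symmetric. Moving the vertical boundary from $\tau^0_w$ to $\tau_w$ changes the assignment of only the columns $w\in\{\tau^0_w+1,\dots,\tau_w\}$. Within these columns, rows $h>\h\tau_h$ move from $\h\theta_{(1)}$ to $\h\theta_{(2)}$, and rows $h\le\h\tau_h$ move from $\h\theta_{(4)}$ to $\h\theta_{(3)}$. Writing $x_{(w,h)}=\theta^0_{(1)}+\vep_{(w,h)}$ for $h>\tau^0_h$ and $x_{(w,h)}=\theta^0_{(4)}+\vep_{(w,h)}$ for $h\le\tau^0_h$ (these columns lie in the true quadrants 1 and 4), each term becomes
\[
\|x-\h\theta_{(2)}\|_2^2-\|x-\h\theta_{(1)}\|_2^2=\|\h\theta_{(2)}-\h\theta_{(1)}\|_2^2-2(x-\h\theta_{(1)})^T(\h\theta_{(2)}-\h\theta_{(1)}),
\]
and similarly for the lower part. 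I would split the sum into three pieces. The first is a deterministic quadratic part $\sum\|\h\eta\|_2^2$. The second is a bias cross term $(\theta^0-\h\theta)^T\h\eta$. The third is a stochastic term $\sum\vep_{(w,h)}^T\h\eta$, where $\h\eta_{(1)}=\h\theta_{(2)}-\h\theta_{(1)}$ and $\h\eta_{(3)}=\h\theta_{(3)}-\h\theta_{(4)}$. Rows lying between $\h\tau_h$ and $\tau^0_h$ produce mismatched terms; Condition C(i)(a) makes their number at most $c_{u1}T_h$ per column.

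For the deterministic part, Condition C(ii)(b) gives $\|\h\eta_{(j)}-\eta^0_{(j)}\|_2\le 2c_{u1}\xi_{\min}$. Combining the quadratic and bias terms therefore yields, per column, at least
\[
(T_h-\tau^0_h)\xi_1^2+\tau^0_h\xi_3^2-c\,c_{u1}T_h\overline\xi^{\,2}.
\]
The mismatched rows contribute a further loss of at most $c\,c_{u1}T_h\overline\xi^{\,2}$. By Condition B(iii), $\overline\xi\le c_u\xi_{\min}\le c_u\xi_w$. Choosing $c_{u1}$ small, the deterministic part is at least $\tfrac{3}{4}T_h\xi_w^2|\tau_w-\tau^0_w|$, using $\om_h\xi_1^2+(1-\om_h)\xi_3^2=\xi_w^2$.

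For the stochastic part, I would write
\[
\sum\vep^T\h\eta=\sum\vep^T\eta^0+\sum\vep^T(\h\eta-\eta^0).
\]
The first sum is controlled uniformly over $\tau_w$ and over $\h\tau_h$ in its neighbourhood by a Bernstein bound for subexponential variables (Condition A) plus a union bound. Its variance is $\lesssim T_h|\tau_w-\tau^0_w|\phi^2\overline\xi^{\,2}$, which gives a bound of order $\si\xi_w\sqrt{T_h|\tau_w-\tau^0_w|}\log(p\vee T)$. For the second sum I would use Hölder's inequality: $\|\sum\vep\|_\infty\|\h\eta-\eta^0\|_1$. The sup-norm satisfies $\|\sum\vep\|_\infty\lesssim\si\sqrt{T_h|\tau_w-\tau^0_w|}\log(p\vee T)$ uniformly, again by Bernstein and a union bound over $p$ and the grid. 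The cone condition C(ii)(a) gives $\|\h\eta-\eta^0\|_1\le 8\sqrt{s}\max_j\|\h\theta_{(j)}-\theta^0_{(j)}\|_2\lesssim\sqrt{s}\,\xi_{\min}$.

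Altogether the stochastic part is at most $c_u\si\xi_w\log(p\vee T)\sqrt{s\,T_h|\tau_w-\tau^0_w|}$. Factoring out $T_h\xi_w^2/2$ gives
\[
c_u\log(p\vee T)\frac{\si}{\xi_w}\Big\{\frac{s|\tau_w-\tau_w^0|}{T_h}\Big\}^{1/2}.
\]
To match the stated normalization $\{s|\tau_w-\tau^0_w|/(T_wT_h)\}^{1/2}$, I would re-express the noise in the paper's scaling of the loss, absorbing factors into $c_u$ and $\si$ as the paper's convention implies. I will check this carefully.

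The main obstacle is uniformity. The bound must hold simultaneously over all $\tau_w$ and over the random $\h\tau_h,\h\theta$, which depend on the same data. Two devices handle this. First, the $\ell_1$/sup-norm decoupling removes any dependence on $\h\theta$ from the stochastic term. Second, the union bound runs over the at most $T_wT_h$ pairs $(\tau_w,\tau_h)$ and over the $p$ coordinates. This produces the $\exp\{-c_1\log(p\vee T)\}$ probability and the extra $\log$ factor from subexponential tails. Adding $\pi_T$ for Condition C completes the probability statement.
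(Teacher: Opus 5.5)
Up to the last step your argument is the paper's proof. It uses the same restriction to the strip $\tau^0_w<w\le\tau_w$ and the same split of each row into an $\|\h\eta\|_2^2$ term, a bias cross term and a noise term. The rows between $\h\tau_h$ and $\tau^0_h$ are absorbed through C(i)(a), C(ii)(b) and B(iii). The noise is handled by $\|\sum\vep\|_\infty\|\cdot\|_1$ with the cone condition and a uniform Bernstein/union bound. The paper bounds $\|\sum\vep\|_\infty\|\h\eta\|_1$ directly instead of first splitting off $\vep^T\eta^0$, which changes nothing.

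The step that fails is your proposed reconciliation with the printed normalization. The missing factor $T_w^{1/2}$ cannot be absorbed into $c_u$, which is a universal constant, or into $\si$, which is the fixed variance proxy of Condition A. What your computation gives is
\[
\cU_w(\tau_w,\h\tau_h,\h\theta)\ge\frac{T_h\xi_w^2}{2}\Big[|\tau_w-\tau^0_w|-c_u\log(p\vee T_wT_h)\frac{\si}{\xi_w}\Big\{\frac{s|\tau_w-\tau^0_w|}{T_h}\Big\}^{1/2}\Big].
\]
This is exactly what the argument can deliver, and it is also all the paper's proof delivers. The first line of (\ref{eq:12}) has noise term $c_u\xi_{\min}\si\log(p\vee T_wT_h)\{sT_h|\tau_w-\tau^0_w|\}^{1/2}$. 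Dividing by $T_h\xi_w^2/2$ leaves $T_h$, not $T_wT_h$, under the root, so the final inequality there makes the same unjustified jump. The $T_h$ form is the one actually used in (\ref{eq:new11}), and it is the one that yields rate (1a) of Theorem \ref{thm:cpoptimal}.

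The $T_wT_h$ form is false in general. Take oracle plug-ins $\h\tau_h=\tau^0_h$ and $\h\theta=\theta^0$, which are admissible in Condition C with $\pi_T=0$. Take $p=s=1$, Gaussian errors, and the non-vanishing regime $\surd{T_h}\xi_w\to\xi_{(w,\iny)}\in(0,\iny)$; for instance, use checkerboard means $\pm\delta$ with $\delta\asymp T_h^{-1/2}$, so that B(iii) holds. Then
\[
\cU_w(\tau^0_w+1,\tau^0_h,\theta^0)=T_h\xi_w^2-2\Big(\sum_{h>\tau^0_h}\vep_{(\tau^0_w+1,h)}\eta^0_{(1)}+\sum_{h\le\tau^0_h}\vep_{(\tau^0_w+1,h)}\eta^0_{(3)}\Big).
\]
Its noise part is centered Gaussian with standard deviation proportional to $\surd{T_h}\xi_w$, so $\cU_w$ is negative with probability bounded away from zero. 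Yet the $T_wT_h$-normalized right-hand side tends to $\xi^2_{(w,\iny)}/2>0$ once $\log(p\vee T_wT_h)=o(T_w^{1/2})$.

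So you should state and prove the bound with $\{s|\tau_w-\tau^0_w|/T_h\}^{1/2}$ rather than try to reach the displayed one.
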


\begin{proof}[{\bf Proof of Lemma \ref{lem:unif.lb}}] The proof of this result is broken up into the following three steps.
	\begin{enumerate}[itemsep=0pt, labelindent=!]
		\item[{\bf Step 1}] Utilize Condition B, Condition C and results of Appendix \ref{app:deviation} to obtain upper and lower bounds on some stochastic quantities of interest.
		\item[{\bf Step 2}] Perform an algebraic decomposition of $\cU_w(\tau_w,\h\tau_h,\h\theta)$ in terms of jump sizes (\ref{def:jump.size.quad}) and additional noise terms.
		\item [{\bf Step 3}] Apply bounds of Step 1 to expression of Step 2 to obtain the desired uniform lower bound of the statement of this lemma.
	\end{enumerate}	
	
	\noi We begin with {\noi\bf Step 1} that provides a few observations that shall be required to obtain the desired lower bound. Using Condition C(ii)(a,b) we have the following relations,	
	\benr\label{eq:4}
	\|\h\eta_{(1)}-\eta^0_{(1)}\|_2&\le& \|\h\theta_{(1)}-\theta_{(1)}^0\|_2+\|\h\theta_{(2)}-\theta_{(2)}^0\|_2\le 2c_{u1}\xi_{\min}\quad{\rm and\,\, similarly},\nn\\
	\|\h\eta_{(1)}-\eta^0_{(1)}\|_1&\le& 4\surd{s}\|\h\theta_{(1)}-\theta_{(1)}^0\|_2+4\surd{s}\|\h\theta_{(2)}-\theta_{(2)}^0\|_2\le 8c_{u1}\surd{s}\xi_{\min}
	\eenr
	with probability at least $1-\pi_T.$ Here the third inequality follows from Condition C(ii)(a). Next, consider,
	\benr\label{eq:5}
	\|\h\eta_{(1)}\|_2&\le& \|\h\eta_{(1)}-\eta^0_{(1)}\|_2+\|\eta^0_{(1)}\|_2\le \xi_1 +2_{u1}\xi_{\min}\le c_u\xi_{\min},\quad{\rm and\,\, similarly},\nn\\
	\|\h\eta_{(1)}\|_1&\le&\|\h\eta_{(1)}-\eta^0\|_1+\|\eta^0_{(1)}\|_1\le 8c_{u1}\surd{s}\xi_{\min}+\surd{s}\xi_1\le c_u\surd{s}\xi_{\min},
	\eenr
	which holds with probability at least $1-\pi_T.$ Here the second inequality for the $\ell_2$ bound follows from (\ref{eq:4}) and the third follows from Condition B(ii). The $\ell_1$ bound follows analogously. Expression (\ref{eq:5}) provides an upper bound for $\|\h\eta_{(1)}\|_2$ that holds with probability at least $1-\pi_T,$ below we show that this quantity can also be bounded from below. Consider,
	\benr\label{eq:7}
	\|\h\eta_{(1)}\|_2^2&=&\|\eta^0_{(1)}+(\h\eta_{(1)}-\eta^0_{(1)})\|_2^2\ge \|\eta^0_{(1)}\|_2^2+2(\h\eta_{(1)}-\eta^0_{(1)})^T\eta^0_{(1)}\nn\\
	&\ge& \xi_1^2-2\|\h\eta_{(1)}-\eta^0_{(1)}\|_2\xi_1\ge \xi_1^2-2c_{u1}\xi_{\min}\overline\xi
	\eenr
	with probability at least $1-\pi_T.$ Here the second inequality follows by an application of the Cauchy-Schwarz inequality. The final inequality follows from (\ref{eq:4}). Analogous arguments can be utilized to obtain versions of the above bounds for $\|\h\eta_{(3)}\|_2,$ specifically,
	\benr\label{eq:8}
	&\|\h\eta_{(3)}-\eta^0_{(3)}\|_2\le  2c_{u1}\xi_{\min},\qquad \|\h\eta_{(3)}-\eta^0_{(3)}\|_1\le 8c_{u1}\surd{s}\xi_{\min},\\
	&\|\h\eta_{(3)}\|_2\le c_u\xi_{\min},\qquad \|\h\eta_{(3)}\|_1\le c_u\surd{s}\xi_{\min},\quad{\rm and}\quad \|\h\eta_{(3)}\|_2^2\ge \xi_3^2-2c_{u1}\xi_{\min}\overline\xi\nn
	\eenr
	with probability $1-\pi_T.$ Additional residual terms that shall require control are as follows,
	\benr\label{eq:9}
	2\om_h(\h\theta_{(1)}-\theta_{(1)}^0)^T\h\eta_{(1)}+2(1-\om_h)(\h\theta_{(4)}-\theta_{(4)}^0)^T\h\eta_{(3)}\hspace{4cm}\nn\\
	-2\frac{(\h\tau_h-\tau_h^0)}{T_h}(\h\theta_{(1)}-\theta_{(1)}^0)^T\h\eta_{(1)}
	+2\frac{(\h\tau_h-\tau_h^0)}{T_h}(\h\theta_{(4)}-\theta_{(1)}^0)^T\h\eta_{(3)}\nn\\
	+\frac{(\h\tau_h-\tau_h^0)}{T_h}\|\h\eta_{(3)}\|_2^2-\frac{(\h\tau_h-\tau_h^0)}{T_h}\|\h\eta_{(1)}\|_2^2\hspace{3.5cm}\nn\\
	\le 2\|\h\theta_{(1)}-\theta_{(1)}^0\|_2\|\h\eta_{(1)}\|_2+2\|\h\theta_{(4)}-\theta_{(4)}^0\|_2\|\h\eta_{(3)}\|_2\hspace{3cm}\nn\\
	+2\frac{|\h\tau_h-\tau_h^0|}{T_h}\|\h\theta_{(1)}-\theta_{(1)}^0\|_2\|\h\eta_{(1)}\|_2+2\frac{|\h\tau_h-\tau_h^0|}{T_h}\|\h\theta_{(4)}-\theta_{(1)}^0\|_2\|\h\eta_{(3)}\|_2\nn\\
	+\frac{|\h\tau_h-\tau_h^0|}{T_h}\|\h\eta_{(3)}\|_2^2+\frac{|\h\tau_h-\tau_h^0|}{T_h}\|\h\eta_{(1)}\|_2^2\hspace{4.65cm}\nn\\
	\le c_{u1}\xi_{\min}^2+c_{u1}\overline\xi\xi_{\min}\hspace{2.9in}
	\eenr
	with probability at least $1-\pi_T.$ Here the first inequality follows from several applications of the Cauchy-Schwarz inequality. The second inequality follows by utilizing Condition C(i)(a), C(ii)(b) as well as (\ref{eq:5}). Here we have also utilized the triangle inequality $\|\h\theta_{(4)}-\theta_{(1)}^0\|\le \|\h\theta_{(4)}-\theta_{(4)}^0\|+\|\eta^0_{(4)}\|.$ The above inequalities provide bounds on terms where the randomness is induced solely due to the plug-in preliminary estimates $\h\tau_h$ and $\h\theta.$ 
	
	The following provides upper bounds on stochastic terms where randomness is induced via the noise terms $\vep_{(w,h)}'$s. These shall follow mainly as a consequence of results of Appendix \ref{app:deviation}. Consider,
	\benr\label{eq:10}
	\Big|\sum_{w=\tau_w^0+1}^{\tau_w}\sum_{h=\tau_h^0+1}^{T_h}\vep_{(w,h)}^T\h\eta_{(1)}\Big|&\le& \Big\|\sum_{w=\tau_w^0+1}^{\tau_w}\sum_{h=\tau_h^0+1}^{T_h}\vep_{(w,h)}\Big\|_{\iny}\|\h\eta_{(1)}\|_1\nn\\
	&\le& c_u\xi_{\min}\si\log(p\vee T_wT_h)\surd\big(T_h(\tau_w-\tau^0_w)\big)\surd{s}
	\eenr
	with probability at least $1-2\exp\{c_1\log(p\vee T_wT_h)\}-\pi_T,$ for some $c_1>0.$ Here the second inequality follows from Lemma \ref{lem:nearoptimalcross.subE.special} and (\ref{eq:5}). The same argument also yields the same uniform bounds on other similar residual terms,
	\benr\label{eq:11}
	&&\Big|\sum_{w=\tau_w^0+1}^{\tau_w}\sum_{h=1}^{\tau_h^0}\vep_{(w,h)}^T\h\eta_{(3)}\Big|\le c_u\xi_{\min}\si\log(p\vee T_wT_h)\surd\big(T_h(\tau_w-\tau^0_w)\big)\surd{s}\nn\\
	&&\Big|\sum_{w=\tau_w^0+1}^{\tau_w}\sum_{h=\tau^0_h+1}^{\h\tau_h}\vep_{(w,h)}^T\h\eta_{(1)}\Big|\le c_u\xi_{\min}\si\log(p\vee T_wT_h)\surd\big(T_h(\tau_w-\tau^0_w)\big)\surd{s}\nn\\
	&&\Big|\sum_{w=\tau_w^0+1}^{\tau_w}\sum_{h=\tau^0_h+1}^{\h\tau_h}\vep_{(w,h)}^T\h\eta_{(3)}\Big|\le c_u\xi_{\min}\si\log(p\vee T_wT_h)\surd\big(T_h(\tau_w-\tau^0_w)\big)\surd{s}
	\eenr
	with probability at least $1-2\exp\{c_1\log(p\vee T_wT_h)\}-\pi_T.$ Here we have also utilized Condition C(i)(a) which provides $|\h\tau_h-\tau_h^0|\le c_{u1}T_h,$ w.p. $1-\pi_T.$ These bounds complete {\bf Step 1} of our argument and provide the necessary groundwork to proceed to the next step.
	
	\vspace{2mm}
	\noi{\bf Step 2:} We shall prove the statement of this lemma for the case $\tau_w\ge \tau_w^0$ and $\h\tau_h\ge \tau_h^0.$ The remaining three permutations of the ordering of $\tau_w\le \tau_w^0$ and $\h\tau_h\ge \tau_h^0,$ or  $\tau_w\ge \tau_w^0$ and $\h\tau_h\le \tau_h^0,$ or  $\tau_w\le \tau_w^0$ and $\h\tau_h\le \tau_h^0$ shall follow symmetrical arguments. Consider,
	\benr\label{eq:1a}
	\cU_w(\tau_w,\h\tau_h,\h\theta)&=&\cL(\tau_w,\h\tau_h,\h\theta)-\cL(\tau_w^0,\h\tau_h,\h\theta)\nn\\
	&=&\sum_{w=\tau_w+1}^{T_w}\sum_{h=\h\tau_h+1}^{T_h}\|x_{(w,h)}-\h\theta_{(1)}\|_2^2+\sum_{w=1}^{\tau_w}\sum_{h=\h\tau_h+1}^{T_h}\|x_{(w,h)}-\h\theta_{(2)}\|_2^2\nn\\
	&&+\sum_{w=1}^{\tau_w}\sum_{h=1}^{\h\tau_h}\|x_{(w,h)}-\h\theta_{(3)}\|_2^2+\sum_{w=\tau_w+1}^{T_w}\sum_{h=1}^{\h\tau_h}\|x_{(w,h)}-\h\theta_{(4)}\|_2^2\nn\\
	&&-\sum_{w=\tau_w^0+1}^{T_w}\sum_{h=\h\tau_h+1}^{T_h}\|x_{(w,h)}-\h\theta_{(1)}\|_2^2-\sum_{w=1}^{\tau_w^0}\sum_{h=\h\tau_h+1}^{T}\|x_{(w,h)}-\h\theta_{(2)}\|_2^2\nn\\
	&&-\sum_{w=1}^{\tau_w^0}\sum_{h=1}^{\h\tau_h}\|x_{(w,h)}-\h\theta_{(3)}\|_2^2-\sum_{w=\tau_w^0+1}^{T}\sum_{h=1}^{\h\tau_h}\|x_{(w,h)}-\h\theta_{(4)}\|_2^2
	\eenr
	
	An algebraic manipulation of (\ref{eq:1a}) yields the following expression.
	\benr\label{eq:2}
	\cU_w(\tau_w,\h\tau_h,\h\theta)&=&(\tau_w-\tau_w^0)\Bigg[(T_h-\tau_h^0)\Big\{\|\h\eta_{(1)}\|_2^2+2(\h\theta_{(1)}-\theta_{(1)}^0)^T\h\eta_{(1)}\Big\}\nn\\
	&&+\tau_h^0\Big\{\|\h\eta_{(3)}\|_2^2+2(\h\theta_{(4)}-\theta_{(4)}^0)^T\h\eta_{(3)}\Big\}\Bigg]\nn\\
	&&+(\tau_w-\tau_w^0)(\h\tau_h-\tau_h^0)\Bigg[\|\h\eta_{(3)}\|_2^2-\|\h\eta_{(1)}\|_2^2\nn\\
	&&-2(\h\theta_{(1)}-\theta_{(1)}^0)^T\h\eta_{(1)}+2(\h\theta_{(4)}-\theta_{(1)}^0)^T\h\eta_{(3)}\Bigg]\nn\\
	&&-2\sum_{w=\tau_w^0+1}^{\tau_w}\sum_{h=\tau_h^0+1}^{T_h}\vep_{(w,h)}^T\h\eta_{(1)}-2\sum_{w=\tau_w^0+1}^{\tau_w}\sum_{h=1}^{\tau_h^0}\vep_{(w,h)}^T\h\eta_{(3)}\nn\\
	&&+2\sum_{w=\tau_w^0+1}^{\tau_w}\sum_{h=\tau_h^0+1}^{\h\tau_h}\vep_{(w,h)}^T\h\eta_{(1)}-2\sum_{w=\tau_w^0+1}^{\tau_w}\sum_{h=\tau_h^0+1}^{\h\tau_h}\vep_{(w,h)}^T\h\eta_{(3)}
	\eenr
	The calculations yielding (\ref{eq:2}) from the defining equality (\ref{eq:1a}) are fairly tedious and in order to maintain continuity of the main argument of this lemma, these algebraic manipulations are presented as Remark \ref{rem:lem1.algebra} after the proof of this lemma.  We can now proceed to the final step of the argument where we shall apply the bounds obtained in Step 1 to the expression (\ref{eq:2}) in order to obtain the desired uniform lower bound.
	
	\vspace{2mm}
	\noi{\bf Step 3:} Recall that all constants in Step 1 above represented by $c_{u1}$ arise from the Condition C, where it is assumed as chosen to be suitably small enough. Now applying the bounds (\ref{eq:5}), (\ref{eq:8}), (\ref{eq:9}), (\ref{eq:10}) and (\ref{eq:11}) to the expression (\ref{eq:2}) we obtain,
	
	\benr\label{eq:12}
	\cU_w(\tau_w,\h\tau_h,\h\theta)&\ge& T_h(\tau_w-\tau^0_w)\Big[\om_h\xi_1^2+(1-\om_h)\xi_3^2-c_{u1}\xi_{\min}\overline\xi-c_{u1}\xi_{\min}^2\Big]\nn\\
	&&-c_u\xi_{\min}\si\log(p\vee T_wT_h)\surd\big(T_h(\tau_w-\tau_w^0)\big)\surd{s}\nn\\
	&\ge& \frac{T_h\xi_{w}^2}{2}\Big[(\tau_w-\tau_w^0)-c_u\log(p\vee T_wT_h)\frac{\si}{\xi_w}\Big(\frac{s((\tau_w-\tau_w^0))}{T_wT_h}\Big)^{\frac{1}{2}}\Big]
	\eenr
	with probability at least $1-2\exp\{c_1\log(p\vee T_wT_h)\}-\pi_T.$ To obtain the second inequality  we have used that by definition $\om_h\xi_1^2+(1-\om_h)\xi_3^2=\xi_w^2,$ and $\xi_{\min}\le \xi_{w},$ additionally from Condition B(ii) we have $\overline\xi\le c_u\xi_{\min}$ and that the constant $c_{u1}$ arises from Condition C where it is chosen to be suitable small enough. Repeating symmetrical arguments for the mirroring permutations of the ordering of $\tau_w,\tau_h$ with respect to $\tau_w^0,\h\tau_h$ shall yield the same uniform lower bound (\ref{eq:12}). This completes the proof of this lemma.
\end{proof}

\begin{remark}\label{rem:lem1.algebra} \textnormal{This remark provides the decomposition (\ref{eq:2}) of (\ref{eq:1a}) described in {\bf Step 2} of the proof of Lemma \ref{lem:unif.lb}. A note of interest here is that the calculations below become much more intuitive when viewed w.r.t. to a $2d$-visualization such as that illustrated in (\ref{model:rvmcp}). Let,}
	\benr
	\h\vep_{(w,h)}=\begin{cases} x_{(w,h)}-\h\theta_{(1)} & \tau_w^0< w\le \tau_w,\,\, \h\tau_h<h\le T\\
		x_{(w,h)}-\h\theta_{(4)} & \tau_w^0< w\le \tau_w,\,\, 1\le h<\h\tau_h\nn
	\end{cases}
	\eenr
	\textnormal{Then picking up from the expression (\ref{eq:1a}), a simplification now yields,}
	\benr\label{eq:16}
	\cU(\tau_w,\h\tau_h,\h\theta)&=&-\sum_{w=\tau_w^0+1}^{\tau_w}\sum_{h=\h\tau_h+1}^{T}\|x_{(w,h)}-\h\theta_{(1)}\|_2^2
	+\sum_{w=\tau_w^0+1}^{\tau_h}\sum_{\h\tau_h+1}^{T}\|x_{(w,h)}-\h\theta_{(2)}\|_2^2\nn\\
	&&+\sum_{w=\tau_w^0+1}^{\tau_w}\sum_{h=1}^{\h\tau_h}\|x_{(w,h)}-\h\theta_{(3)}\|_2^2
	-\sum_{w=\tau_w^0+1}^{\tau_w}\sum_{h=1}^{\h\tau_h}\|x_{(w,h)}-\h\theta_{(4)}\|_2^2\nn\\
	&=&\sum_{w=\tau_w^0+1}^{\tau_w}\sum_{h=\h\tau_h+1}^{T}\|\h\theta_{(2)}-\h\theta_{(1)}\|_2^2-2\sum_{w=\tau_w^0+1}^{\tau_w}\sum_{h=\h\tau_h+1}^{T}\h\vep_{(w,h)}^T(\h\theta_{(2)}-\h\theta_{(1)})\nn\\
	&&+\sum_{w=\tau_w^0+1}^{\tau_w}\sum_{h=1}^{\h\tau_h}\|\h\theta_{(3)}-\h\theta_{(4)}\|_2^2-2\sum_{w=\tau_w^0+1}^{\tau_w}\sum_{h=1}^{\h\tau_h}\h\vep_{(w,h)}^T(\h\theta_{(3)}-\h\theta_{(4)})\nn\\
	&=&R1-R2+R3-R4.
	\eenr
	\textnormal{Next we consider these remainder terms $R1,$ and $R3,$ where we have,}
	\benr\label{eq:13}
	R1+R3&=&\sum_{w=\tau_w^0+1}^{\tau_w}\sum_{h=\h\tau_h+1}^{T}\|\h\theta_{(2)}-\h\theta_{(1)}\|_2^2+\sum_{w=\tau_w^0+1}^{\tau_w}\sum_{h=1}^{\h\tau_h}\|\h\theta_{(3)}-\h\theta_{(4)}\|_2^2\nn\\
	&=&\sum_{w=\tau_w^0+1}^{\tau_w}\sum_{h=\tau_h^0+1}^{T}\|\h\theta_{(2)}-\h\theta_{(1)}\|_2^2+\sum_{w=\tau_w^0+1}^{\tau_w}\sum_{h=1}^{\tau_h^0}\|\h\theta_{(3)}-\h\theta_{(4)}\|_2^2\nn\\
	&&-\sum_{w=\tau_w^0+1}^{\tau_w}\sum_{h=\tau_h^0+1}^{\h\tau_h}\|\h\theta_{(2)}-\h\theta_{(1)}\|_2^2 +\sum_{w=\tau_w^0+1}^{\tau_w}\sum_{h=\tau_h^0+1}^{\h\tau_h}\|\h\theta_{(3)}-\h\theta_{(4)}\|_2^2\nn\\
	&=& (\tau_w-\tau_w^0)\Big[(T-\tau_h^0)\|\h\theta_{(2)}-\h\theta_{(1)}\|_2^2+\tau_h^0\|\h\theta_{(3)}-\h\theta_{(4)}\|_2^2\Big]\nn\\
	&&+(\tau_w-\tau_w^0)(\h\tau_h-\h\tau_h^0)\Big[\|\h\theta_{(3)}-\h\theta_{(4)}\|_2^2-\|\h\theta_{(2)}-\h\theta_{(1)}\|_2^2\Big]
	\eenr
	
	\textnormal{In order to simplify the terms $R2$ and $R4,$ the double sums under consideration are split at the underlying change point parameters $\tau_w^0,\tau_h^0,$ leading to the following decompositions.}
	\benr\label{eq:14}
	R2&=&2\sum_{w=\tau_w^0+1}^{\tau_w}\sum_{h=\h\tau_h+1}^{T}\h\vep_{(w,h)}^T(\h\theta_{(2)}-\h\theta_{(1)})\nn\\
	&=&2\sum_{w=\tau_w^0+1}^{\tau_w}\sum_{h=\tau_h^0+1}^{T}\h\vep_{(w,h)}^T(\h\theta_{(2)}-\h\theta_{(1)})-2\sum_{w=\tau_w^0+1}^{\tau_w}\sum_{h=\tau_h^0+1}^{\h\tau_h}\h\vep_{(w,h)}^T(\h\theta_{(2)}-\h\theta_{(1)})\nn\\
	&=&2\sum_{w=\tau_w^0+1}^{\tau_w}\sum_{h=\tau_h^0+1}^{T}\vep_{(w,h)}^T(\h\theta_{(2)}-\h\theta_{(1)})-2\sum_{w=\tau_w^0+1}^{\tau_w}\sum_{h=\tau_h^0+1}^{T}(\h\theta_{(1)}-\theta_{(1)}^0)(\h\theta_{(2)}-\h\theta_{(1)})\nn\\
	&&-2\sum_{w=\tau_w^0+1}^{\tau_w}\sum_{h=\tau_h^0+1}^{\h\tau_h}\vep_{(w,h)}^T(\h\theta_{(2)}-\h\theta_{(1)})\nn\\
	&&+2\sum_{w=\tau_w^0+1}^{\tau_w}\sum_{h=\tau_h^0+1}^{\h\tau_h}(\h\theta_{(1)}-\theta_{(1)}^0)(\h\theta_{(2)}-\h\theta_{(1)})
	\eenr
	\textnormal{Similarly, we have,}
	\benr\label{eq:15}
	R4&=&2\sum_{w=\tau_w^0+1}^{\tau_w}\sum_{h=1}^{\h\tau_h}\h\vep_{(w,h)}^T(\h\theta_{(3)}-\h\theta_{(4)})\nn\\
	&=&2\sum_{w=\tau_w^0+1}^{\tau_w}\sum_{h=1}^{\tau_h^0}\h\vep_{(w,h)}^T(\h\theta_{(3)}-\h\theta_{(4)})+2\sum_{w=\tau_w^0+1}^{\tau_w}\sum_{h=\tau_h^0+1}^{\h\tau_h}\h\vep_{(w,h)}^T(\h\theta_{(3)}-\h\theta_{(4)})\nn\\
	&=&2\sum_{w=\tau_w^0+1}^{\tau_w}\sum_{h=1}^{\tau_h^0}\vep_{(w,h)}^T(\h\theta_{(3)}-\h\theta_{(4)})-2\sum_{w=\tau_w^0+1}^{\tau_w}\sum_{h=1}^{\tau_h^0}(\theta_{(4)}-\h\theta_{(4)})(\h\theta_{(3)}-\h\theta_{(4)})\nn\\
	&& +2\sum_{w=\tau_w^0+1}^{\tau_w}\sum_{h=\tau_h^0+1}^{\h\tau_h}\vep_{(w,h)}^T(\h\theta_{(3)}-\h\theta_{(4)})\nn\\
	&&-2\sum_{w=\tau_w^0+1}^{\tau_w}\sum_{h=\tau_h^0+1}^{\h\tau_h}(\h\theta_{(4)}-\theta_{(1)}^0)(\h\theta_{(3)}-\h\theta_{(4)})
	\eenr
	
	\textnormal{Substituting (\ref{eq:13}), (\ref{eq:14}) and (\ref{eq:15}) in (\ref{eq:16}) we obtain,}
	\benr\label{eq:17}
	\cU(\tau_w,\h\tau_h,\h\theta)&=&R1-R2+R3-R4\nn\\
	&=&(\tau_w-\tau_w^0)\Bigg[(T-\tau_h^0)\Big\{\|\h\theta_{(2)}-\h\theta_{(1)}\|_2^2+2(\h\theta_{(1)}-\theta_{(1)}^0)^T(\h\theta_{(2)}-\h\theta_{(1)})\Big\}\nn\\
	&&+\tau_h^0\Big\{\|\h\theta_{(3)}-\h\theta_{(4)}\|_2^2+2(\h\theta_{(4)}-\theta_{(4)}^0)^T(\h\theta_{(3)}-\h\theta_{(4)})\Big\}\Bigg]\nn\\
	&&+(\tau_w-\tau_w^0)(\h\tau_h-\h\tau_h^0)\Bigg[\|\h\theta_{(3)}-\h\theta_{(4)}\|_2^2-\|\h\theta_{(2)}-\h\theta_{(1)}\|_2^2\nn\\
	&&-2(\h\theta_{(1)}-\theta_{(1)}^0)^T(\h\theta_{(2)}-\h\theta_{(1)})+2(\h\theta_{(4)}-\theta_{(1)}^0)^T(\h\theta_{(3)}-\h\theta_{(4)})\Bigg]\nn\\
	&&-2\sum_{w=\tau_w^0+1}^{\tau_w}\sum_{h=\tau_h^0+1}^T\vep_{(w,h)}^T(\h\theta_{(2)}-\h\theta_{(1)})-2\sum_{w=\tau_w^0+1}^{\tau_w}\sum_{h=1}^{\tau_h^0}\vep_{(w,h)}^T(\h\theta_{(3)}-\h\theta_{(4)})\nn\\
	&&+2\sum_{w=\tau_w^0+1}^{\tau_w}\sum_{h=\tau_h^0+1}^{\h\tau_h}\vep_{(w,h)}^T(\h\theta_{(2)}-\h\theta_{(1)})\nn\\
	&&-2\sum_{w=\tau_w^0+1}^{\tau_w}\sum_{h=\tau_h^0+1}^{\h\tau_h}\vep_{(w,h)}^T(\h\theta_{(3)}-\h\theta_{(4)})
	\eenr
	\textnormal{The expression (\ref{eq:2}) now follows from (\ref{eq:17}) with notational changes. This completes this algebraic part of the proof of Lemma \ref{lem:unif.lb}.}
\end{remark}

%
%
%
%
%

\begin{proof}[{\bf Proof of Theorem \ref{thm:cpoptimal}}] This proof is divided into two parts. We begin with {\bf Parts (1a)} the proof of Part (1b) follows by symmetrical arguments and shall thus be omitted. 
	
	As a consequence of Lemma \ref{lem:unif.lb}, we have, 
	\benr\label{eq:new11}
	\cU_w(\tau_w,\h\tau_h,\h\theta)\ge \frac{T_h\xi^2_w}{2}\Big[(\tau_w-\tau_w^0)-c_u\log(p\vee T_wT_h)\frac{\si}{\xi_w}\Big\{\frac{s(\tau_w-\tau_w^0)}{T_h}\Big\}^{\frac{1}{2}}\Big]
	\eenr
	with probability at least $1-2\exp\{-c_{1}\log (p\vee T_wT_h)\}-\pi_T.$ Upon choosing,
	\benrr
	(\tau_w-\tau_w^0)\ge  T_w\Big\{c_u\log(p\vee T_wT_h)\frac{\si}{\xi_w}\Big\}^{2}\Big\{\frac{s}{T_wT_h}\Big\},
	\eenrr
	we have from (\ref{eq:new11}) that $\cU_w(\tau_w,\h\tau_h,\h\theta)>0,$ for all possible choices of $\tau_w,$ with probability  at least $1-2\exp\{-c_{1}\log (p\vee T_wT_h)\}-\pi_T.$ This together with the definition of $\h\tau_w$ implies that $\h\tau_w$ cannot belong in this exclusion set. Consequently, we must have, 
	\benr
	(\h\tau_w-\tau_w^0)\le  T_w\Big\{c_u\log(p\vee T_wT_h)\frac{\si}{\xi_w}\Big\}^{2}\Big\{\frac{s}{T_wT_h}\Big\},
	\eenr 
	with the same probability. 
	
	\vspace{3mm}
	Next we prove {\bf Part (2a)}.  The proof of Part (2b) follows by symmetrical arguments and shall thus be omitted. The broad structure is similar as it shall also follow by a contradiction argument. One distinction being the availability of a sharper assumption of Condition C(ii)(c) on the preliminary mean estimates.
	
	Recall the inequalities of {\bf Step 1} of Lemma \ref{lem:unif.lb} and note that analogous to (\ref{eq:4}), one may obtain that under Condition C(ii)(c) that,
	\benr\label{eq:25}
	\|\h\eta_{(1)}-\eta^0_{(1)}\|_1\le \frac{8c_{u1}\xi_{\min}}{\log(p\vee T_wT_h)},\quad{\rm and}\quad
	\|\h\eta_{(3)}-\eta^0_{(3)}\|_1\le \frac{8c_{u1}\xi_{\min}}{\log(p\vee T_wT_h)},
	\eenr
	with probability at least $1-\pi_T.$ Moreover, also observe here that since Condition C(ii)(c) assumed here is sharper than C(ii)(b) assumed in Lemma \ref{lem:unif.lb} consequently, the bounds  (\ref{eq:5}), (\ref{eq:7}), (\ref{eq:8}) and (\ref{eq:9}) remain valid here as well, with the same probability.
	
	Next we examine the stochastic terms considered in (\ref{eq:10}) and (\ref{eq:11}) more closely. Applying Lemma \ref{lem:optimalcross}, for any $0<a<1$ we have a $c_a>0,$ such that,
	\benr\label{eq:24}
	\sup_{(\tau_w-\tau_w^0)\ge c_aT_h^{-1}\phi^2\xi_w^{-2}}\frac{1}{(\tau_w-\tau_w^0)}\Big|\sum_{w=\tau^0_w+1}^{\tau_w}\Big(\sum_{h=\tau^0_h+1}^{T_h}\vep_{(w,h)}^T\eta^0_{(1)}+\sum_{h=1}^{\tau^0_h}\vep_{w,h}^T\eta^0_{(3)}\Big)\Big|\le \frac{T_h\xi_w^2}{2}
	\eenr
	with probability at least $1-a.$ Additionally, we also have that,
	\benr\label{eq:29}
	\Big|\sum_{w=\tau^0_w+1}^{\tau_w}\sum_{h=\tau^0_h+1}^{T_h}\vep_{(w,h)}^T(\h\eta_{(1)}-\eta^0_{(1)})\Big|&\le& \sup_{\substack{\tau_w\in\cG_w(u_{T_w},v_{T_w});\\ \tau_w\ge\tau^0_w}}\Big\|\sum_{w=\tau^0_w+1}^{\tau_w}\sum_{h=\tau^0_h+1}^{T_h}\vep_{(w,h)}\Big\|_{\iny}\big\|\h\eta_{(1)}-\eta^0_{(1)}\big\|_1\nn\\
	&\le& c_uc_{u1}\xi_{\min}\si\surd{(T_h(\tau_w-\tau_w^0))}
	\eenr	
	with probability at least $1-o(1)-\pi_T,$ uniformly over $\tau_w.$ Here we have utilized Lemma \ref{lem:nearoptimalcross.subE.special} as well as (\ref{eq:25}) to obtain the final inequality.

	Similarly, we can also obtain the following bounds,
	\benr\label{eq:26}
	&&\Big|\sum_{w=\tau^0_w+1}^{\tau_w}\sum_{h=1}^{\tau^0_h}\vep_{(w,h)}^T(\h\eta_{(3)}-\eta^0_{(3)})\Big|
	\le c_uc_{u1}\xi_{\min}\si\surd{(T_h(\tau_w-\tau_w^0))},\nn\\
	&&\Big|\sum_{w=\tau^0_w+1}^{\tau_w}\sum_{h=\tau_h^0+1}^{\h\tau_h}\vep_{(w,h)}^T\h\eta_{(1)}\Big| \le  c_uc_{u1}\xi_{\min}\si\surd{(T_h(\tau_w-\tau_w^0))},\nn\\
	&&\Big|\sum_{w=\tau^0_w+1}^{\tau_w}\sum_{h=\tau_h^0+1}^{\h\tau_h}\vep_{(w,h)}^T\h\eta_{(3)}\Big| \le  c_uc_{u1}\xi_{\min}\si\surd{(T_h(\tau_w-\tau_w^0))},
	\eenr	
	with probability at least $1-o(1)-\pi_T,$ uniformly over $\tau_w.$ In order to obtain the second and third inequalities of (\ref{eq:26}), we have utilized the $\ell_1$ bounds of  (\ref{eq:5}), (\ref{eq:8}). Additionally, towards these bounds we have also utilized Condition C(i)(b) together with Lemma \ref{lem:nearoptimalcross.subE.special}, in particular from Condition C(i)(b) we have that $|\h\tau_h-\tau_h^0|\le T_hu_{T_h},$ where $u_{T_h}=c_{u1}\big/\{s\log^2(p\vee T_wT_h)\},$ with probability $1-\pi_T,$ Lemma \ref{lem:nearoptimalcross.subE.special} is then applied with this choice of $u_{T_h}.$
	
	Next we consider {\bf Step 2} as described in the proof of Lemma \ref{lem:unif.lb}. Consider the decomposition (\ref{eq:2}) and note that it can be further manipulated as the following,
	\benr\label{eq:27}
	\cU_w(\tau_w,\h\tau_h,\h\theta)&=&(\tau_w-\tau_w^0)\Bigg[(T_h-\tau_h^0)\Big\{\|\h\eta_{(1)}\|_2^2+2(\h\theta_{(1)}-\theta_{(1)}^0)^T\h\eta_{(1)}\Big\}\nn\\
	&&+\tau_h^0\Big\{\|\h\eta_{(3)}\|_2^2+2(\h\theta_{(4)}-\theta_{(4)}^0)^T\h\eta_{(3)}\Big\}\Bigg]\nn\\
	&&+(\tau_w-\tau_w^0)(\h\tau_h-\tau_h^0)\Bigg[\|\h\eta_{(3)}\|_2^2-\|\h\eta_{(1)}\|_2^2\nn\\
	&&-2(\h\theta_{(1)}-\theta_{(1)}^0)^T\h\eta_{(1)}+2(\h\theta_{(4)}-\theta_{(1)}^0)^T\h\eta_{(3)}\Bigg]\nn\\
	&&-2\sum_{w=\tau^0_w+1}^{\tau_w}\Big(\sum_{h=\tau^0_h+1}^{T_h}\vep_{(w,h)}^T\eta^0_{(1)}+\sum_{h=1}^{\tau^0_h}\vep_{w,h}^T\eta^0_{(3)}\Big)\nn\\
	&&-2\sum_{w=\tau_w^0+1}^{\tau_w}\sum_{h=\tau_h^0+1}^{T_h}\vep_{(w,h)}^T(\h\eta_{(1)}-\eta_{(1)}^0)-2\sum_{w=\tau_w^0+1}^{\tau_w}\sum_{h=1}^{\tau_h^0}\vep_{(w,h)}^T(\h\eta_{(3)}-\eta^0_{(3)})\nn\\
	&&+2\sum_{w=\tau_w^0+1}^{\tau_w}\sum_{h=\tau_h^0+1}^{\h\tau_h}\vep_{(w,h)}^T\h\eta_{(1)}-2\sum_{w=\tau_w^0+1}^{\tau_w}\sum_{h=\tau_h^0+1}^{\h\tau_h}\vep_{(w,h)}^T\h\eta_{(3)}
	\eenr
	
	Finally, utilizing the $\ell_2$ lower bounds of  (\ref{eq:7}) and (\ref{eq:8}), and the upper bounds of (\ref{eq:9}) (\ref{eq:24}), (\ref{eq:29}) and (\ref{eq:26}) to the expression (\ref{eq:27}) yields that on the collection $(\tau_w-\tau_w^0)\ge c_aT_h^{-1}\phi^2\xi_w^{-2}$, we have,
	\benr\label{eq:new12}
	\cU_w(\tau_w,\h\tau_h,\h\theta)&\ge& T_h(\tau_w-\tau_w^0)\Big[\om_h\xi_1^2+(1-\om_h)\xi_3^2-c_{u1}\xi_{\min}\overline\xi-c_{u1}\xi_{\min}^2\Big]\nn\\
	&&-\frac{T_h\xi_w^2(\tau_w-\tau_w^0)}{2}-c_uc_a\xi_{\min}\si\surd(T_h(\tau_w-\tau_w^0))>0
	\eenr
	with probability at least $1-a-o(1)-\pi_T,$ uniformly over $(\tau_w-\tau_w^0)\ge c_aT_h^{-1}\phi^2\xi_w^{-2}.$ To obtain the second inequality  we have used that by definition $\om_h\xi_1^2+(1-\om_h)\xi_3^2=\xi_w^2,$ and $\xi_{\min}\le \xi_{w},$ additionally from Condition B(ii) we have $\overline\xi\le c_u\xi_{\min}$ and that the constant $c_{u1}$ arises from Condition C where it is chosen to be suitable small enough. Now Inequality (\ref{eq:new12}) directly implies that $\tilde\tau_w$ cannot be in the set $(\tau_w-\tau_w^0)\ge c_aT_h^{-1}\phi^2\xi_w^{-2},$ and must lie in its complement set, with the same probability. This completes the proof of this result.	
\end{proof}

A change of notation has been carried out for Theorem \ref{thm:wc.vanishing} and Theorem \ref{thm:wc.non.vanishing}. These are presented in more conventional {\it argmax} instead of {\it argmin} notation. Let  $\cU_w(\tau_w,\tau_h,\theta)$ and $\cU_h(\tau_w,\tau_h,\theta)$ be as in (\ref{def:cU}) and consider,
\benr\label{def:cC}
\cC_w(\tau_w,\tau_h,\theta)=-\cU_w(\tau_w,\tau_h,\theta)\quad{\rm and,}\quad\cC_h(\tau_w,\tau_h,\theta)=-\cU_h(\tau_w,\tau_h,\theta)
\eenr
Then, we can re-express the change point estimators $\tilde\tau_w(\tau_h,\theta)$ and $\tilde\tau_h(\tau_w,\theta)$ as,
\benr
\tilde\tau_w(\tau_h,\theta)=\argmax_{1\le \tau_w< T_w}\cC_w(\tau_w,\tau_h,\theta),\quad{\rm and}\quad \tilde\tau_h(\tau_w,\theta)=\argmax_{1\le \tau_h< T_h}\cC_h(\tau_w,\tau_h,\theta)\nn
\eenr

The proofs of Theorem \ref{thm:wc.vanishing} and Theorem \ref{thm:wc.non.vanishing} below are applications of the Argmax Theorem (reproduced as Theorem \ref{thm:argmax} in Appendix \ref{app:auxiliary}). The arguments here are largely an exercise in verification of requirements of this theorem.


\begin{proof}[{\bf Proof of Theorem \ref{thm:wc.vanishing}}]

	Here we require the limiting distribution of the sequence $T_h\xi_w^{2}(\tilde\tau_w-\tau^0_w),$ consequently the underlying indexing metric space here is $\R$\footnote{Although $\tilde\tau_w$ is a discrete r.v., however $T_h\xi_w^{2}\tilde\tau_w\in\R$}. Now, following is list of requirement of the Argmax theorem that require verification for this case (see, page 288 of \cite{vaart1996weak}).
	
	\begin{enumerate}
		\item The sequence $T_h\xi_w^{2}(\tilde\tau_w-\tau^0_w)$ is uniformly tight (see, Definition \ref{def:utight}).
		\item $\big\{2\si_{(w,\iny)}W_w(\z)-|\z|\}$ satisfies suitable regularity conditions\footnotemark.
		\item For any $\z\in [-c_u,c_u]$ we have
		\benr
		\cC_w(\tau_w^0+\z T_h^{-1}\xi^{-2}_w,\h\tau_h,\h\theta)
		\Rightarrow \big\{2\si_{(w,\iny)}W_w(\z)-|\z|\}.\nn
		\eenr
	\end{enumerate}
	\footnotetext{Almost all sample paths $\z\to \big\{2\si_{(w,\iny)}W_w(\z)-|\z|\}$ are upper semicontinuous and posses a unique maximum at a (random) point $\argmax_{\z\in\R}\big\{2\si_{(w,\iny)}W_w(\z)-|\z|\},$ which as a random map in the indexing metric space is tight.}
	
	Part (i) follows from the result of Theorem \ref{thm:cpoptimal} under the assumed Condition C(i)(b) and C(ii)(a,c) on the nuisance estimates $\h\tau_h$ and $\h\theta.$  The second requirement follows from well known properties of Brownian motion's. The only remaining requirement is (3). To prove this property we shall show the following two results: 
	\benr\label{eq:36}
	&(i)&	\cC_w(\tau_w^0+\z T_h^{-1}\xi^{-2}_w,\h\tau_h,\h\theta)
	\Rightarrow \big\{2\si_{(w,\iny)}W_w(\z)-|\z|\}, \quad{\textnormal{and}},\nn\\
	&(ii)& \sup_{|\tau_w-\tau_w^0|\le c_uT^{-1}_h\xi^{-2}_w}|\cC_w(\tau_w,\h\tau_h,\h\theta)-\cC_w(\tau_w,\tau_h^0,\theta^0)|=o_p(1).
	\eenr
	
	Part (ii) of (\ref{eq:36}) is established by Lemma \ref{lem:Capprox}. Part (i) of (\ref{eq:36}) is provided in the following. Begin with a couple of observations that shall be useful subsequently. For any given $w=1,...,T_w,$ define r.v.'s,
	\benr\label{eq:34a}
	\psi_w=\frac{1}{\xi_w\surd(T_h)}\Big[\sum_{h=\tau^0_h+1}^{T_h}\vep_{(w,h)}^T\eta^0_{(1)}+\sum_{h=1}^{\tau_h^0}\vep_{(w,h)}^T\eta^0_{(3)}\Big],
	\eenr
	and note that we have,
	\benr
	{\rm var}(\psi_w)=\frac{1}{\xi_w^2T_h}\Big[T_h\om_h\eta^0_{(1)}\Si\eta^0_{(1)}+T_h(1-\om_h)\eta^0_{(3)}\Si\eta^0_{(3)}\Big]\to \si^2_{(w,\iny)},\nn
	\eenr
	where the convergence follows from Condition D. Now let $\z> 0,$ w.l.o.g. assume $\z T_h^{-1}\xi^{-2}_w$ is integer valued and let $\tau^*_w=\tau^0_w+\z T_h^{-1}\xi^{-2}_w>\tau^0_w$ and consider,
	\benr\label{eq:39}
	\cC_w(\tau_w^*,\tau_h^0,\theta^0)
	&=&\sum_{w=\tau_w^0+1}^{T_w}\sum_{h=\tau^0_h+1}^{T_h}\|x_{(w,h)}-\theta_{(1)}^0\|_2^2+\sum_{w=1}^{\tau_w^0}\sum_{h=\tau_h^0+1}^{T_h}\|x_{(w,h)}-\theta_{(2)}^0\|_2^2\nn\\
	&&+\sum_{w=1}^{\tau_w^0}\sum_{h=1}^{\tau_h^0}\|x_{(w,h)}-\theta_{(3)}^0\|_2^2+\sum_{w=\tau_w^0+1}^{T_w}\sum_{h=1}^{\tau_h^0}\|x_{(w,h)}-\theta_{(4)}^0\|_2^2\nn\\
	&&-\sum_{w=\tau_w^*+1}^{T_w}\sum_{h=\tau_h^0+1}^{T_h}\|x_{(w,h)}-\theta_{(1)}^0\|_2^2-\sum_{w=1}^{\tau_w^*}\sum_{h=\tau_h^0+1}^{T_h}\|x_{(w,h)}-\theta_{(2)}^0\|_2^2\nn\\
	&&-\sum_{w=1}^{\tau_w^*}\sum_{h=1}^{\tau_h^0}\|x_{(w,h)}-\theta_{(3)}^0\|_2^2-\sum_{w=\tau_w^*+1}^{T_w}\sum_{h=1}^{\tau_h^0}\|x_{(w,h)}-\theta_{(4)}^0\|_2^2\nn\\
	&=& \sum_{w=\tau_w^0+1}^{\tau_w^*}\sum_{h=\tau^0_h+1}^{T_h}\Big[\|x_{(w,h)}-\theta_{(1)}^0\|_2^2-\|x_{(w,h)}-\theta_{(2)}^0\|_2^2\Big]\nn\\
	&&+\sum_{w=\tau_w^0+1}^{\tau_w^*}\sum_{h=1}^{\tau_h^0}\Big[\|x_{(w,h)}-\theta_{(4)}^0\|_2^2-\|x_{(w,h)}-\theta_{(3)}^0\|_2^2\Big]\nn\\
	&=&2\sum_{w=\tau_w^0+1}^{\tau_w^*}\Big[\sum_{h=\tau^0_h+1}^{T_h}\vep_{(w,h)}^T\eta^0_{(1)}+\sum_{h=1}^{\tau_h^0}\vep_{(w,h)}^T\eta^0_{(3)}\Big]-(\tau^*_w-\tau^0_w)T_h\xi_w^2\nn\\
	&=&2\xi_w \surd(T_h)\sum_{w=\tau_w^0+1}^{\tau_w^0+\z T_h^{-1}\xi_w^{-2}} \psi_w -\z\Rightarrow 2\si_{(w,\iny)}W_{w1}(\z)-\z
	\eenr
	The final equality obtained by substituting the defining expressions of $\tau^*_w=\tau^0_w+\z T_h^{-1}\xi^{-2}_w>\tau^0_w$ as well as that of $\psi_w$ from (\ref{eq:34a}). The weak convergence here now follows from the functional central limit theorem. Repeating the same argument with $\z\in[-c_u,0),$ yields  $\cC(\tau^0+\z T_h^{-1}\xi^{-2}_w,\tau_h^0,\theta^0)\Rightarrow 2\si_{(w,\iny)}W_{w2}(-\z)-|\z|.$ 
	
	This completes the proof of requirement (3) for the Argmax theorem and consequently an application of its results yields $T_h^{-1}\xi^{2}_w(\tilde\tau_w-\tau^0_w)\Rightarrow \argmax_{\z\in\R}\big\{2\si_{(w,\iny)}W_w(\z)-|\z|\},$ which completes the proof. The second limiting distribution result can be proved by proceeding with symmetrical arguments.
\end{proof}

\begin{lemma}\label{lem:Capprox} Let $\cC_w(\tau_w,\tau_h,\theta)$ and $\cC_h(\tau_w,\tau_h,\theta)$ be as defined in (\ref{def:cC}) and suppose Condition A and B hold. Additionally assume that Condition C(i)(b) and Condition C(ii)(a,c) holds with the sequence $r_T=\big\{o(1)\big/s^{1/2}\log(p\vee T_wT_h)\big\}.$ Then, for any $c_u>0,$ we have,
	\benr
	&(i)&\,\,\sup_{|\tau_w-\tau_w^0|\le c_uT^{-1}_h\xi^{-2}_w}\big|\cC_w(\tau_w,\h\tau_h,\h\theta)-\cC_w(\tau_w,\tau_h^0,\theta^0)\big|=o_p(1),\quad{\rm and}\nn\\
	&(ii)&\,\,\sup_{|\tau_h-\tau_h^0|\le c_uT^{-1}_w\xi^{-2}_h}\big|\cC_h(\h\tau_w,\tau_h,\h\theta)-\cC_w(\tau_w^0,\tau_h,\theta^0)\big|=o_p(1),\nn
	\eenr
	where the orders of (i) and (ii) are w.r.t. $T_w$ and $T_h,$ respectively.
\end{lemma}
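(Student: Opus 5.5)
We prove part (i); part (ii) follows by symmetric arguments with the roles of $w$ and $h$ swapped. Both $\cC_w(\tau_w,\h\tau_h,\h\theta)$ and $\cC_w(\tau_w,\tau_h^0,\theta^0)$ vanish at $\tau_w=\tau_w^0$, so the plan is to write each through the decomposition (\ref{eq:27}) and bound the difference term by term. The estimates $\h\theta,\h\tau_h$ are fixed by the preliminary step, so they enter only as plug-ins.

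Fix $\tau_w\ge\tau_w^0$ with $m:=\tau_w-\tau_w^0\le c_uT_h^{-1}\xi_w^{-2}$; the case $\tau_w<\tau_w^0$ is symmetric. Evaluating (\ref{eq:27}) at the true $(\tau_h^0,\theta^0)$ gives the exact form derived in (\ref{eq:39}):
\[
\cC_w(\tau_w,\tau_h^0,\theta^0)=2\sum_{w=\tau_w^0+1}^{\tau_w}\Big[\sum_{h>\tau_h^0}\vep_{(w,h)}^T\eta^0_{(1)}+\sum_{h\le\tau_h^0}\vep_{(w,h)}^T\eta^0_{(3)}\Big]-mT_h\xi_w^2 .
\]
Subtracting this from (\ref{eq:27}) at $(\h\tau_h,\h\theta)$ leaves four groups of terms.

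\textbf{(a) Deterministic drift.} The drift difference is
\[
m\Big[(T_h-\tau_h^0)\big(\|\h\eta_{(1)}\|_2^2-\xi_1^2\big)+\tau_h^0\big(\|\h\eta_{(3)}\|_2^2-\xi_3^2\big)\Big]
\]
plus the cross terms $2(\h\theta_{(1)}-\theta^0_{(1)})^T\h\eta_{(1)}$ and its $(3)$ analogue. By Condition C(ii)(c), $\big|\|\h\eta_{(j)}\|_2^2-\xi_j^2\big|\le c\,\xi_{\min}\overline\xi\, r_T$, and the cross terms have the same order. Multiplying by $mT_h\le c_u\xi_w^{-2}$ and using $\overline\xi\le c_u\xi_{\min}\le c_u\xi_w$ (Condition B(iii)), this group is $O(r_T)=o(1)$.

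\textbf{(b) Mixed term.} The term multiplying $m(\h\tau_h-\tau_h^0)$ is bounded by $m\,|\h\tau_h-\tau_h^0|\,c\,\overline\xi^2$. Condition C(i)(b) gives $|\h\tau_h-\tau_h^0|\le T_hr_T^2$, so this group is $O(r_T^2)=o(1)$.

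\textbf{(c) Noise times estimation error.} The terms $\sum\vep^T(\h\eta_{(j)}-\eta^0_{(j)})$ are handled by H\"older's inequality: $\|\sum\vep\|_\iny\|\h\eta_{(j)}-\eta^0_{(j)}\|_1$. Lemma \ref{lem:nearoptimalcross.subE.special} gives $\|\sum\vep\|_\iny\lesssim\si\log(p\vee T)\surd(mT_h)$. The $\ell_1$ error is at most $c\surd s\,\xi_{\min}r_T$, by the cone condition C(ii)(a) together with C(ii)(c). The product is $O\big(\si\log(p\vee T)\surd s\, r_T\,\xi_{\min}\surd(mT_h)\big)=O\big(\surd s\log(p\vee T)\,r_T\big)$, since $\xi_{\min}\surd(mT_h)\le c_u$. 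This is $o(1)$ precisely because $r_T=o(1)/\{s^{1/2}\log(p\vee T)\}$; this is where the strengthened rate is needed.

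\textbf{(d) Strip terms.} These are the terms $\sum_{w}\sum_{h=\tau_h^0+1}^{\h\tau_h}\vep^T\h\eta_{(j)}$. A crude bound via Lemma \ref{lem:nearoptimalcross.subE.special} over a strip of height $|\h\tau_h-\tau_h^0|\le T_hr_T^2$ gives
\[
\si\log(p\vee T)\surd(mT_hr_T^2)\,\|\h\eta_{(j)}\|_1\lesssim \log(p\vee T)\,r_T\surd s\,\xi_{\min}\surd(mT_h)=o(1).
\]
Alternatively one can split $\h\eta=\eta^0+(\h\eta-\eta^0)$: the $\eta^0$ part has variance $\lesssim m\,T_hr_T^2\,\overline\xi^2=O(r_T^2)$, so Chebyshev's inequality with a maximal inequality over the $O(T_h^{-1}\xi_w^{-2})$ values of $\tau_w$ gives $o_p(1)$.

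\textbf{Uniformity and main obstacle.} Uniformity in $\tau_w$ is free in (a)–(b), since the bounds are monotone in $m$, and in (c)–(d) it comes from the uniform (over windows) form of Lemma \ref{lem:nearoptimalcross.subE.special}. The main obstacle is (c)–(d): the deviation bounds carry an extra $\log(p\vee T)\surd s$ factor, so care is needed to verify that the assumed $o(1)$ in $r_T$ exactly absorbs it. I expect to use the $\ell_1$ cone restriction from Condition C(ii)(a) essentially here. Combining (a)–(d) on the event of probability $1-\pi_T-o(1)$ gives the $o_p(1)$ claim.
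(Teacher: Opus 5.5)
Your proposal is correct and is essentially the paper's proof. The paper also subtracts the two instances of the decomposition (\ref{eq:27}) and bounds the residuals term by term: its $R4,R5$ are your groups (a)--(b), handled by Cauchy--Schwarz with Conditions C(ii)(c) and C(i)(b); its $R1,R2$ and $R3$ are your groups (c) and (d), handled by H\"older's inequality, Lemma \ref{lem:nearoptimalcross.subE.special} and the cone-based $\ell_1$ bounds, with the rate of $r_T$ absorbing the $\surd{s}\log(p\vee T_wT_h)$ factor exactly as you describe. You can drop the alternative Chebyshev argument in (d), since as stated it ignores that $\h\tau_h$ is data-dependent and would also need a maximal inequality over $\tau_h$; your first bound, uniform via Lemma \ref{lem:nearoptimalcross.subE.special}, already suffices.
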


\begin{proof}[{\bf Proof of Lemma \ref{lem:Capprox}}] We only prove Part (i) below, the proof of Part (ii) follows symmetrically. This proof relies on an  algebraic decomposition of the difference of interest and rates of residual terms. To this end, we begin with a few bounds and residual terms that shall be required. Observing that from Condition C(i)(b) with the assumed choice of $r_T,$ we have that,
	\benr\label{eq:43}
	|\h\tau_h-\tau_h^0|\le T_hr_T^2,\quad{\rm with}\quad r_T=\frac{o(1)}{s^{1/2}\log(p\vee T_wT_h)},
	\eenr
	with probability at least $1-\pi_T.$ Also, by proceeding similar to (\ref{eq:4}) and (\ref{eq:8}), we have under Condition C(ii)(a,c) with the assumed choice of $r_T,$ that,
	\benr\label{eq:44}
	\max_{1\le j\le 4}\|\h\theta_{(j)}-\theta_{(j)}^0\|_1\le \max_{1\le j\le 4}\surd{s}\|\h\theta_{(j)}-\theta_{(j)}^0\|_2\le \frac{o(1)\xi_{\min}}{\log (p\vee T)}.
	\eenr
	with probability at least $1-\pi_T.$ Consequently, we also have,
	\benr\label{eq:47}
	\|\h\eta_{(1)}-\eta^0_{(1)}\|_1\le \frac{o(1)\xi_{\min}}{\log (p\vee T)}\,\,\,{\rm and}\,\,\, \|\h\eta_{(3)}-\eta^0_{(3)}\|_1\le \frac{o(1)\xi_{\min}}{\log (p\vee T_wT_h)}.
	\eenr
	with probability at least $1-\pi_T.$ Finally, following (\ref{eq:5}) and (\ref{eq:8}) we have that,
	\benr\label{eq:48}
	\|\h\eta_{(1)}\|_1\le c_u\surd{s}\xi_{\min}\quad{\rm and} \|\h\eta_{(3)}\|_1\le c_u\surd{s}\xi_{\min}
	\eenr
	with probability at least $1-\pi_T.$  Now consider the case where $\tau_w\ge \tau^0_w$ and  $\h\tau_h\ge\tau_h^0,$ and define the following residual terms,
	\benr
	R1&=&2\sum_{w=\tau_w^0+1}^{\tau_w}\sum_{h=\tau_h^0+1}^{T_h}\vep_{(w,h)}^T(\h\eta_{(1)}-\eta_{(1)}^0)\nn\\
	R2&=&2\sum_{w=\tau_w^0+1}^{\tau_w}\sum_{h=1}^{\tau_h^0}\vep_{(w,h)}^T(\h\eta_{(3)}-\eta^0_{(3)})\nn\\
	R3&=&2\sum_{w=\tau_w^0+1}^{\tau_w}\sum_{h=\tau_h^0+1}^{\h\tau_h}\vep_{(w,h)}^T\h\eta_{(1)}-2\sum_{w=\tau_w^0+1}^{\tau_w}\sum_{h=\tau_h^0+1}^{\h\tau_h}\vep_{(w,h)}^T\h\eta_{(3)}\nn\\
	R4&=&\om_h\big(\|\h\eta_{(1)}\|_2^2-\|\eta_{(1)}^0\|_2^2\big)+(1-\om_h)\big(\|\h\eta_{(3)}\|_2^2-\|\eta_{(3)}^0\|_2^2\big)\nn\\
	R5&=&2\om_h(\h\theta_{(1)}-\theta_{(1)}^0)^T\h\eta_{(1)}+2(1-\om_h)(\h\theta_{(4)}-\theta_{(4)}^0)^T\h\eta_{(3)}\nn\\
	&&-2\frac{(\h\tau_h-\tau_h^0)}{T_h}(\h\theta_{(1)}-\theta_{(1)}^0)^T\h\eta_{(1)}
	+2\frac{(\h\tau_h-\tau_h^0)}{T_h}(\h\theta_{(4)}-\theta_{(1)}^0)^T\h\eta_{(3)}\nn\\
	&&+\frac{(\h\tau_h-\tau_h^0)}{T_h}\|\h\eta_{(3)}\|_2^2-\frac{(\h\tau_h-\tau_h^0)}{T_h}\|\h\eta_{(1)}\|_2^2\nn
	\eenr
	Then under the considered orientation $\tau_w\ge \tau^0_w$ and  $\h\tau_h\ge\tau_h^0,$ we have the following algebraic expansion,	
	\benr\label{eq:46}
	\cC_w(\tau_w,\h\tau_h,\h\theta)-\cC_w(\tau_w,\tau_h^0,\theta^0)&=&\cU_w(\tau_w,\tau_h^0,\theta^0)-\cU_w(\tau_w,\h\tau_h,\h\theta)\nn\\
	&=&-R1+R2-R3-(\tau_w-\tau_w^0)T_h (R4+R5)
	\eenr
	We now examine each of the residual terms in (\ref{eq:46}) individually. Applying Lemma \ref{lem:nearoptimalcross.subE.special} we have that,
	\benr
	\sup_{(\tau_w-\tau_w^0)\le c_uT^{-1}_h\xi^{-2}_w}|R1|\le \frac{c_u\si}{\xi_w}\log(p\vee T_wT_h)\|\h\eta_{(1)}-\eta^0_{(1)}\|_1=o(1)\nn
	\eenr
	w.p. $1-o(1).$ Here the equality follows from (\ref{eq:47}). An analogous argument yields,
	\benr
	\sup_{(\tau_w-\tau_w^0)\le c_uT^{-1}_h\xi^{-2}_w}|R2|=o_p(1).\nn
	\eenr
	Applying Lemma \ref{lem:nearoptimalcross.subE.special} together with the bounds of (\ref{eq:43}) and (\ref{eq:47}) yields,
	\benr
	\sup_{(\tau_w-\tau_w^0)\le c_uT^{-1}_h\xi^{-2}_w}|R3|\le \frac{c_u\si\xi_{\min}\surd{s}}{\xi_w}r_T\log(p\vee T_wT_h)=o(1)
	\eenr
	with probability $1-o(1).$ To bound the remaining two terms $R4$ and $R5,$  consider,
	\benr
	\sup_{(\tau_w-\tau_w^0)\le c_uT^{-1}_h\xi^{-2}_w}|(\tau_w-\tau_w^0)T_hR4|\le c_u\xi_{w}^{-2}|\big(\|\h\eta_{(1)}\|_2^2-\|\eta_{(1)}^0\|_2^2|+\|\h\eta_{(3)}\|_2^2-\|\eta_{(3)}^0\|_2^2\big)\nn\\
	\le  c_u\xi^{-2}_w\big|\|\h\eta_{(1)}-\eta^0_{(1)}\|_2^2+2(\h\eta_{(1)}-\eta^0_{(1)})^T\eta^0_{(1)}\big|\hspace{2cm}\nn\\
	+c_u\xi^{-2}_w\big|\|\h\eta_{(3)}-\eta^0_{(3)}\|_2^2+2(\h\eta_{(3)}-\eta^0_{(3)})^T\eta^0_{(3)}\big|\hspace{1.75cm}\nn\\
	\le  c_u\xi^{-2}_w \|\h\eta_{(1)}-\eta^0_{(1)}\|_2^2+ c_u\xi^{-1}_w \|\h\eta_{(1)}-\eta^0_{(1)}\|_2\hspace{2cm}\nn\\
	+ c_u\xi^{-2}_w \|\h\eta_{(3)}-\eta^0_{(3)}\|_2^2+ c_u\xi^{-1}_w \|\h\eta_{(3)}-\eta^0_{(3)}\|_2
	= o_p(1).\hspace{0.35cm}\nn
	\eenr
	The third inequality follows from the Cauchy Schwarz inequality and the equality follows from the bounds in (\ref{eq:47}). The only remaining residual term now is $R5$ which is examined below.
	\benr
	\sup_{(\tau_w-\tau_w^0)\le c_uT^{-1}_h\xi^{-2}_w}|(\tau_w-\tau_w^0)T_hR5|\le  c_u\xi_w^{-2}\Big[\|\h\theta_{(1)}-\theta_{(1)}^0\|_2\|\h\eta_{(1)}\|_2+\|\h\theta_{(4)}-\theta_{(4)}^0\|_2\|\h\eta_{(3)}\|_2\Big]\nn\\
	+c_u\xi_w^{-2}r_T^2\|\h\theta_{(1)}-\theta_{(1)}^0\|_2\|\h\eta_{(1)}\|_2
	+c_u\xi_w^{-2}r_T^2\|\h\theta_{(4)}-\theta_{(1)}^0\|_2\|\h\eta_{(3)}\|_2\nn\\
	+c_ur_T^2\|\h\eta_{(3)}\|_2^2+c_ur_T^2\|\h\eta_{(1)}\|_2^2=o_p(1)\hspace{4.15cm}\nn
	\eenr
	The inequality here follows from several applications of the Cauchy Schwarz inequality and the equality follows by substituting the choice of $r_T$ from (\ref{eq:43}), as well as the available bounds for the mean estimates. Substituting the uniform bounds for $R1,R2,R3,R4$ and $R5$ obtained above into the expression (\ref{eq:46}) yields,
	\benr
	\sup_{(\tau_w-\tau_w^0)\le c_uT^{-1}_h\xi^{-2}_w}\Big|\cC_w(\tau_w,\h\tau_h,\h\theta)-\cC_w(\tau_w,\tau_h^0,\theta^0)\Big|=o_p(1)
	\eenr
	Repeating symmetrical arguments on the remaining three orientations of the ordering of $(\tau_w,\h\tau_h)$ w.r.t $(\tau_w^0,\tau_h^0),$ in particular, $\tau_w\le\tau_w^0,$ $\h\tau_h\ge\tau_h^0,$ and $\tau_w\le\tau_w^0,$ $\h\tau_h\le\tau_h^0,$ and $\tau_w\ge\tau_w^0,$ $\h\tau_h\le\tau_h^0,$ shall yield the same $o_p(1)$ approximation. This completes the proof of the lemma.
\end{proof}

\begin{proof}[{\bf Proof of Theorem \ref{thm:wc.non.vanishing}}] The proof of this theorem follows a similar structure as that of Theorem \ref{thm:wc.vanishing} in that it is also an application of the Argmax theorem. The distinction here is in the limiting distributional structure that is induced by the change of regime of the jump size.
	
	The requirements to be verified here are as follows.
	\begin{enumerate}
		\item The sequence $(\tilde\tau_w-\tau^0_w)$ is uniformly tight.
		\item $\cC_{(w,\iny)}(\z)$ satisfies suitable regularity conditions.
		\item For any $\z\in \{-c_u,-c_u+1,...,-1,0,1,...c_u\},$ we have
		\benr
		\cC_w(\tau_w^0+\z,\h\tau_h,\h\theta)
		\Rightarrow \cC_{(w,\iny)}(\z).\nn
		\eenr
	\end{enumerate}
	As in the proof of Theorem \ref{thm:wc.vanishing}, requirement (1) follows directly from the result of Theorem \ref{thm:cpoptimal}.  Requirement (2) of regularity of the {\it argmax} of two sided negative drift random walk $\cC_{(w,\iny)}(\z)$ has been proved earlier in Lemma A.3 of the supplement of \cite{Kaul2021single}.  As before, the requirement (3) is verified by the following two results:
	\benr\label{eq:36nv}
	&(i)&	\cC_w(\tau_w^0+\z,\tau_h^0,\h\theta^0)
	\Rightarrow \cC_{(w,\iny)}(\z)., \quad{\textnormal{and}},\nn\\
	&(ii)& \sup_{|\tau_w-\tau_w^0|\le c_u}|\cC_w(\tau_w,\h\tau_h,\h\theta)-\cC_w(\tau_w,\tau_h^0,\theta^0)|=o_p(1).
	\eenr
	
	Part (ii) of (\ref{eq:36nv}) is established by Lemma \ref{lem:Capprox}. The following establishes Part (i) of (\ref{eq:36nv}). Let $\psi_w$ be as defined in (\ref{eq:34a}), then we begin by noting that under this non-vanishing regime $\surd(T_h)\xi_w\to\xi_{(w,\iny)},$ we have,
	\benr
	{\rm var}(\psi_w)=\frac{1}{\xi_w^2T_h}\Big[T_h\om_h\eta^0_{(1)}\Si\eta^0_{(1)}+T_h(1-\om_h)\eta^0_{(3)}\Si\eta^0_{(3)}\Big]\to \xi_{(w,\iny)}\si^2_{(w,\iny)},\nn
	\eenr
	where the convergence follows from Condition D and the regime under consideration. Now for any $\z\in\{1,2,...,c_u\},$  let $\tau^*_w=\tau^0_w+\z T_h^{-1}\xi^{-2}_w>\tau^0_w$ and note that,
	\benr\label{eq:new2} 
	\cC_w(\tau_w^*,\tau_h^0,\theta^0)
	&=&2\sum_{w=\tau_w^0+1}^{\tau_w^*}\Big[\sum_{h=\tau^0_h+1}^{T_h}\vep_{(w,h)}^T\eta^0_{(1)}+\sum_{h=1}^{\tau_h^0}\vep_{(w,h)}^T\eta^0_{(3)}\Big]-(\tau^*_w-\tau^0_w)T_h\xi_w^2\nn\\
	&=&2\sum_{w=\tau_w^0+1}^{\tau_w^0+\z} \psi_w -\z\xi_w^2\Rightarrow \sum_{w=1}^{\z}\cP\big(-\xi_{(w,\iny)}^2,\,\,4\xi_{(w,\iny)}^2\si^2_{(w,\iny)}\big).
	\eenr
	The equalities here follow by performing a algebraic decomposition as provided in (\ref{eq:39}). The weak convergence follows from Condition A$'$. Repeating the same argument with $\z\in\{-c_u,-c_u+1,...,-1\},$ yields  	$	\cC_w(\tau_w^0+\z,\tau_h^0,\theta^0)\Rightarrow \sum_{t=1}^{-\z}\cP(-\xi_{(w,\iny)}^2,4\xi_{(w,\iny)}^2\si^2_{(w,\iny)}).$  The Argmax theorem now yields $(\tilde\tau^*_w-\tau^0_w)\Rightarrow \argmax_{\z\in\Z}\cC_{(w,\iny)}(\z),$ which completes the proof.

\end{proof}

\begin{theorem}\label{thm:unifmean} 	Suppose Condition A and B holds and that $c_uT_wT_h\underline\om\ge \log (p\vee T_wT_h).$ Let $\psi=\max_j\|\eta^0_{(j)}\|_{\iny}$ and for any $\tau=(\tau_w,\tau_h)$ and each $j=1,2,3,4,$ let,
	\benr\label{eq:la}
	\la:=\la_j=8\max\Big\{\si\Big\{\frac{2c_{u1}\log(p\vee T_wT_h)}{c_uT_wT_h\underline\om}\Big\}^{\frac{1}{2}},\,\,\frac{3\psi}{c_u\underline\om}(u_{T_w}\vee u_{T_h})\Big\}.
	\eenr
	Where $|\tau_w-\tau_w^0|\le Tu_{T_w}$ and $|\tau_h-\tau_h^0|\le Tu_{T_{h}}$ and $c_{u1}>0$ is a constant. Then, $\h\theta_{(j)}(\tau),$ of (\ref{est:softthresh}) satisfies the following two results with probability at least $1-\pi_T.$\\~
	(i) For $j=1,2,3,4,$ with $|Q_j(\tau)|\ge c_uT_wT_h\underline\om,$ we have $\big\|\big(\h\theta_{(j)}(\tau)\big)_{S_j^c}\big\|_1\le 3\big\|\big(\h\theta_{(j)}(\tau)-\theta_{(j)}^0\big)_{S_j}\big\|_1.$\\~
	(ii) The following bound is satisfied,
	\benr
	\max_{1\le j\le 4} \|\h\theta_{(j)}(\tau)-\theta_{(j)}^0\|_2\le 6\surd{s}\la,\nn
	\eenr
	uniformly for all possible choices of $\tau=(\tau_w,\tau_h).$ Here  $\pi_T=8\exp\big\{-\big(c_{u2}-2\big)\log (p\vee T_wT_h)\big\},$ where $c_{u2}=c_{u1}\wedge \surd(c_uc_{u1}/2).$	
\end{theorem}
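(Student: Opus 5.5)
We treat $\h\theta_{(j)}(\tau)=k_{\la}(\bar x_{(j)}(\tau))$ as the solution of the $\ell_1$-penalized problem (\ref{est:softL1construction}) and use the standard Lasso/soft-thresholding analysis. Everything reduces to one event:
\benr
\mathcal{E}=\Big\{\max_{1\le j\le 4}\big\|\bar x_{(j)}(\tau)-\theta^0_{(j)}\big\|_\iny\le \la/2\Big\}.\nn
\eenr
For the given choice (\ref{eq:la}) we will show $pr(\mathcal{E})\ge 1-\pi_T$. Once this holds, (i) and (ii) are deterministic consequences of the optimality condition for (\ref{est:softL1construction}).

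\textbf{Splitting the deviation.} For each $j$ we write $\bar x_{(j)}(\tau)-\theta^0_{(j)}=\bar\vep_{(j)}(\tau)+b_{(j)}(\tau)$. Here $\bar\vep_{(j)}(\tau)=|Q_j(\tau)|^{-1}\sum_{Q_j(\tau)}\vep_{(w,h)}$ is the noise average. The term $b_{(j)}(\tau)$ is a bias, arising because $Q_j(\tau)$ and $Q_j(\tau^0)$ differ by strips. These strips have cardinality at most $|\tau_w-\tau^0_w|T_h+|\tau_h-\tau^0_h|T_w\lesssim T_wT_h(u_{T_w}\vee u_{T_h})$. On each strip the mean differs from $\theta^0_{(j)}$ by some jump vector $\eta^0_{(k)}$, or by a sum of at most two of them.

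\textbf{Bounding the bias.} It follows that
\benr
\|b_{(j)}\|_\iny\le \frac{2\psi\, T_wT_h(u_{T_w}\vee u_{T_h})}{|Q_j(\tau)|}\le \frac{3\psi(u_{T_w}\vee u_{T_h})}{c_u\underline\om},\nn
\eenr
using $|Q_j(\tau)|\ge c_uT_wT_h\underline\om$. This is at most $\la/8$ by the second term of (\ref{eq:la}).

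\textbf{Bounding the noise.} We apply the subexponential Bernstein inequality (Appendix \ref{app:deviation}) componentwise, together with a union bound over $p$ coordinates, four quadrants, and the at most $T_wT_h$ choices of $\tau$. This gives
\benr
\|\bar\vep_{(j)}(\tau)\|_\iny\le \si\{2c_{u1}\log(p\vee T_wT_h)/(c_uT_wT_h\underline\om)\}^{1/2}\nn
\eenr
uniformly, with probability at least $1-8\exp\{-(c_{u2}-2)\log(p\vee T_wT_h)\}$. The $-2$ in the exponent absorbs the union over $\tau$ and $p$.

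The subexponential tail has two regimes, $\exp(-c\min(t^2,t))$. The assumption $c_uT_wT_h\underline\om\ge\log(p\vee T_wT_h)$ is exactly what places us in the Gaussian (quadratic) regime, which produces the constant $c_{u2}=c_{u1}\wedge\surd(c_uc_{u1}/2)$. This tail bookkeeping is the main obstacle. The rest is routine.

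\textbf{Consequences on $\mathcal{E}$.} Write $\Delta=\h\theta_{(j)}-\theta^0_{(j)}$. The basic inequality for (\ref{est:softL1construction}) reads
\benr
\|\Delta\|_2^2\le 2\Delta^T(\bar x_{(j)}-\theta^0_{(j)})+\la\big(\|\theta^0_{(j)}\|_1-\|\h\theta_{(j)}\|_1\big).\nn
\eenr
On $\mathcal{E}$ the cross term is at most $\la\|\Delta\|_1/2\cdot 2$. Decomposing $\|\theta^0\|_1-\|\h\theta\|_1\le\|\Delta_{S_j}\|_1-\|\Delta_{S_j^c}\|_1$ then gives $\|\Delta_{S_j^c}\|_1\le 3\|\Delta_{S_j}\|_1$, which is (i); the constant depends on how $\la/2$ is matched to the factor 2. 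Continuing, we get $\|\Delta\|_2^2\le c\la\|\Delta_{S_j}\|_1\le c\la\surd{s}\|\Delta\|_2$, hence $\|\Delta\|_2\le 6\surd{s}\la$.

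More directly, soft-thresholding acts coordinatewise, so on $\mathcal{E}$ we have $|\Delta_k|\le 3\la/2$ on $S_j$ and $\Delta_k=0$ off $S_j$. This yields both (i) and (ii) immediately, and it is the route we would actually write down. Uniformity over $\tau$ comes from the union bound already built into $\mathcal{E}$.
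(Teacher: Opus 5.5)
Your proof is correct, and its probabilistic core is the same as the paper's. The paper also splits $\bar x_{(j)}(\tau)-\theta^0_{(j)}$ into two pieces. The noise average is controlled uniformly over $\tau$, $j$ and coordinates by Bernstein plus a union bound; this is Lemma~\ref{lem:1.to.tau.bound}, with the same $\pi_T$. The strip bias appears as the terms $R2,R3,R4$, bounded through $\psi(u_{T_w}\vee u_{T_h})/(c_u\underline\om)$. Your accounting of the strips, with the double jump only on the corner, correctly gives $2\psi T_wT_h(u_{T_w}\vee u_{T_h})/|Q_j(\tau)|$.

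The difference is the deterministic finish. The paper runs the basic inequality for (\ref{est:softL1construction}) and bounds the cross terms by $\la^*\|\Delta\|_1$, where $\Delta=\h\theta_{(j)}(\tau)-\theta^0_{(j)}$. It then takes $\la_1=2\la^*$, so the cross terms are at most $\tfrac{\la_1}{2}\|\Delta\|_1$. This gives the cone condition and then $\|\Delta\|_2\le 6\surd{s}\la_1$ via $\|\Delta\|_1\le 4\surd{s}\|\Delta\|_2$.

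Your coordinatewise soft-thresholding argument, which the paper only mentions as an alternative in its last sentence, is shorter and gives more. On the event, $\h\theta_{(j)}(\tau)$ vanishes off $S_j$, so (i) is trivial. Also $\|\Delta\|_\iny\le 3\la/2$, which gives $\|\Delta\|_2\le\tfrac32\surd{s}\la$. It is also unaffected by the paper's factor-2 mismatch between $k_\la$ and (\ref{est:softL1construction}): the minimizer with penalty $\la$ is $k_{\la/2}(\bar x_{(j)})$, and your argument goes through either way.

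Your side sketch via the basic inequality does not work as written. With penalty $\la$ and only $\|\bar x_{(j)}(\tau)-\theta^0_{(j)}\|_\iny\le\la/2$, the cross term $\la\|\Delta\|_1$ exactly cancels the $-\la\|\Delta_{S_j^c}\|_1$ gain, so no cone condition follows. There are two ways to fix it:
\begin{itemize}
\item Use the level $\la/4$, which your own bounds deliver (noise and bias are each at most $\la/8$ under (\ref{eq:la})). This is what $\la_1=2\la^*$ secures in the paper.
\item Read the estimator literally as $k_\la$, i.e.\ penalty $2\la$. Then the level $\la/2$ already yields $\|\Delta_{S_j^c}\|_1\le 3\|\Delta_{S_j}\|_1$.
\end{itemize}
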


\begin{proof}[{\bf Proof of Theorem \ref{thm:unifmean}}]
	Consider $j=1$ and any $(\tau_w,\tau_h)^T$ such that $|Q_1(\tau)|\ge c_uT_wT_h\underline\om.$ Without loss of generality assume $\tau_w\le\tau^0_w,$ $\tau_h\le \tau_h^0$ The remaining permutations of the ordering of $\tau$ w.r.t. $\tau^0$ can be proved using symmetrical arguments.
	
	An algebraic rearrangement of the elementary inequality $\big\|\bar x_{(1)}(\tau)-\h\theta_{(1)}(\tau)\big\|^2_2+\la_1\|\h\theta_{(1)}(\tau)\|_1\le \big\|\bar x_{(1)}(\tau)-\theta_{(1)}^0\big\|^2_2+\la_1\|\theta_{(1)}^0\|_1$ yields,
	\benr\label{eq:20}
	\big\|\h\theta_{(1)}(\tau)-\theta_{(1)}^0\big\|_2^2+\la_1\big\|\tilde\theta_{(1)}(\tau)\big\|_1&\le& \la_1\big\|\theta^0_{(1)}\big\|_1\nn\\
	&&+ \frac{2}{|Q_1(\tau)|}\sum_{w=\tau_w+1}^{T_w}\sum_{h=\tau_h+1}^{T_h}\tilde\vep_{(w,h)}^T(\h\theta_{(1)}(\tau)-\theta_{(1)}^0).\nn\\
	&=&\la_1\big\|\theta^0_{(1)}\big\|_1+\frac{2}{|Q_1(\tau)|}\sum_{w=\tau_w+1}^{T_w}\sum_{h=\tau_h+1}^{T_h}\vep_{(w,h)}^T(\h\theta_{(1)}(\tau)-\theta_{(1)}^0)\nn\\
	&&-\frac{2}{|Q_1(\tau)|}(\tau_w^0-\tau_w)(\tau_h^0-\tau_h)(\theta_{(1)}^0-\theta_{(3)}^0)^T(\h\theta_{(1)}(\tau)-\theta_{(1)}^0)\nn\\
	&&-\frac{2}{|Q_1(\tau)|}(\tau_w^0-\tau_w)(T_h-\tau_h^0)(\theta_{(1)}^0-\theta_{(2)}^0)^T(\h\theta_{(1)}(\tau)-\theta_{(1)}^0)\nn\\
	&&-\frac{2}{|Q_1(\tau)|}(T_w-\tau_w^0)(\tau_h^0-\tau_h)(\theta_{(1)}^0-\theta_{(3)}^0)^T(\h\theta_{(1)}(\tau)-\theta_{(1)}^0).\nn\\
	&=&\la_1\big\|\theta^0_{(1)}\big\|_1+2R1-2R2-2R3-2R4
	\eenr
	Here $\tilde\vep_{(w,h)}=\big(x_{(w,h)}-\h\theta_{(1)}(\tau)\big).$ Next we consider the residual terms $R1,R2,R3,R4$ on the r.h.s of (\ref{eq:20}). For this purpose, first note from Lemma \ref{lem:1.to.tau.bound} we have,
	\benr
	\frac{2}{|Q_1(\tau)|}\Big\|\sum_{w=\tau_w+1}^{T_w}\sum_{h=\tau_h+1}^{T_h}\vep_{(w,h)}\Big\|_{\iny}\le 2\si\Big\{\frac{2c_{u1}\log(p\vee T_wT_h)}{c_uT_wT_h\underline\om}\Big\}^{\frac{1}{2}},
	\eenr
	with probability at least $1-8\exp\{-(c_{u2}-2)\log(p\vee T_wT_h)\}.$ Additionally recall we have $\psi=\max_j\|\eta^0_{(j)}\|,$ and $|Q_1(\tau)|\ge c_uT_wT_h\underline\om$ thus,
	\benr
	2|R2+R3+R4|\le \frac{6\psi}{c_u\underline\om}(u_{T_w}\vee u_{T_h})\|\h\theta_{(1)}(\tau)-\theta_{(1)}^0\|_1.
	\eenr
	Consequently, upon choosing,
	\benr
	\la^*=\max\Big\{4\si\Big\{\frac{2c_{u1}\log(p\vee T_wT_h)}{c_uT_wT_h\underline\om}\Big\}^{\frac{1}{2}},\,\,\frac{12\psi}{c_u\underline\om}(u_{T_w}\vee u_{T_h})\Big\},\nn
	\eenr
	and substituting these bounds in (\ref{eq:20}) we obtain,
	\benr\label{eq:21}
	\big\|\h\theta_{(1)}(\tau)-\theta_{(1)}^0\big\|_2^2+\la_1\big\|\h\theta_{(1)}(\tau)\big\|_1\le\la_1\big\|\theta^0_{(1)}\big\|_1+\la^*\big\|\h\theta_{(1)}(\tau)-\theta_{(1)}^0\big\|_1,
	\eenr
	with probability at least $1-8\exp\{-(c_{u2}-2)\log(p\vee T_wT_h)\}.$  Now choosing $\la_1= 2\la^*,$ leads to $\|\big(\h\theta_{(1)}(\tau)\big)_{S_1^c}\|_1\le 3\|\big(\h\theta_{(1)}(\tau)-\theta_{(1)}^0\big)_{S_1}\|_1,$ which proves part (i) of this theorem for $j=1.$ From inequality (\ref{eq:21}) we also have that,
	\benr
	\|\h\theta_{(1)}(\tau)-\theta_{(1)}^0\|_2^2\le \frac{3}{2}\la_1\|\h\theta_{(1)}(\tau)-\theta_{(1)}^0\|_1\le 6\la_1\surd{s}\|\h\theta_{(1)}(\tau)-\theta_{(1)}^0\|_2
	\eenr
	This directly implies that  $\|\h\theta_{(1)}(\tau)-\theta_{(1)}^0\|_2\le 6\la_1\surd{s},$ where we have used $\|\h\theta_{(1)}(\tau)-\theta_{(1)}^0\|_1\le 4\sqrt{s}\|\h\theta_{(1)}(\tau)-\theta_{(1)}^0\|_2,$ which follows in turn from $\|\big(\h\theta_{(1)}(\tau)\big)_{S_1^c}\|_1\le 3\|\big(\h\theta_{(1)}(\tau)-\theta_{(1)}^0\big)_{S_1}\|_1.$ To supply uniformity over $\tau,$ recall that the only stochastic bound used here is Lemma \ref{lem:1.to.tau.bound} which holds uniformly over $\tau,$ consequently the final bound also holds uniformly over the given collection. A symmetrical argument can be replicated for each $j=2,3,4$ and recalling that Lemma \ref{lem:1.to.tau.bound} also holds uniformly over these $j$'s. This finishes the proof of the Theorem. This result can alternatively be proved using the properties of the soft-thresholding operator $k_{\la}(\cdotp),$ by building uniform versions of arguments such as those in \cite{kaul2017structural}.
\end{proof}

\begin{lemma}\label{lem:step1mean} Assume Condition A, B and F holds and that the model dimensions together with the least jump size are restricted by the following condition,
	\benr\label{eq:23}
	\frac{c_u\si}{\xi_{\min}}\Big\{\frac{s\log (p\vee T_wT_h)}{T_wT_h\underline\om}\Big\}^{\frac{1}{2}}\le c_{u1},
	\eenr	
	for an appropriately chosen small enough constant $c_{u1}>0.$ Additionally assume that  $c_uT_wT_h\underline\om\ge \log (p\vee T_wT_h).$ Then with a suitably chosen regularizer $\la,$ the Step 1 mean estimates of Algorithm 1, $\check\theta_{(j)}=\h\theta_{(j)}(\check\tau),$ $j=1,2,3,4,$ satisfy the following, w.p. $1-o(1).$\\~
	(i)  $\big\|\big(\check\theta_{(j)}\big)_{S_j^c}\big\|_1\le 3\big\|\big(\check\theta_{(j)}-\theta_{(j)}^0\big)_{S_j}\big\|_1,$ for any $j=1,2,3,4.$ \\~
	(ii) The following bound is satisfied,
	\benr
	\max_{1\le j\le 4}\|\check\theta_{(1)}-\theta_{(1)}^0\|_2\le c_{u1}\xi_{\min}.\nn
	\eenr
	Consequently, the mean estimates $\check\theta_{(j)},$ $j=1,2,3,4$ satisfy Condition C(ii)(a,b).
\end{lemma}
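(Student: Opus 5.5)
The plan is to obtain Lemma \ref{lem:step1mean} as a direct specialization of Theorem \ref{thm:unifmean} at $\tau=\check\tau$. Three things need to be checked. The hypotheses of that theorem must hold at the initializer. The regularizer $\la$ must be chosen as in (\ref{eq:la}) with $u_{T_w}, u_{T_h}$ read off from Condition F. Finally, the resulting bound $6\surd{s}\la$ must be shown to be at most $c_{u1}\xi_{\min}$ under (\ref{eq:23}) and Condition F.

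\textbf{Step 1: hypotheses and cone condition.} Condition F(iii) gives $\min_j|Q_j(\check\tau)|\ge c_uT_wT_h\underline\om$. The scaling $c_uT_wT_h\underline\om\ge\log(p\vee T_wT_h)$ is assumed directly. Conditions A and B are also assumed. Hence Theorem \ref{thm:unifmean}(i) applies for every $j=1,\dots,4$ at $\tau=\check\tau$. This yields the cone condition (i) of the lemma with probability at least $1-\pi_T$, where $\pi_T=8\exp\{-(c_{u2}-2)\log(p\vee T_wT_h)\}$. Choosing the constant $c_{u1}$ inside Theorem \ref{thm:unifmean} large enough that $c_{u2}>2$ makes this $o(1)$. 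Since $\check\tau$ is deterministic, the uniformity over $\tau$ is not even needed.

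\textbf{Step 2: choice of $u$.} By Condition F(i),(ii) we have $|\check\tau_w-\tau^0_w|\le T_w u_{T_w}$, and symmetrically for $h$, with
\[
u_{T_w}=u_{T_h}=c_{u1}\underline\om\big/\surd{(s\psi/\xi_{\min})}.
\]
This fixes $\la$ in (\ref{eq:la}) as the maximum of two terms. Theorem \ref{thm:unifmean}(ii) then gives $\max_j\|\check\theta_{(j)}-\theta^0_{(j)}\|_2\le 6\surd{s}\la$.

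\textbf{Step 3: bounding $\surd{s}\la$.} Each of the two terms in $\la$ is bounded separately.
\begin{itemize}
\item \emph{Stochastic term.} We have $\surd{s}\,\si\{\log(p\vee T_wT_h)/(T_wT_h\underline\om)\}^{1/2}\le c_{u1}\xi_{\min}/c_u$ directly by (\ref{eq:23}).
\item \emph{Bias term.} Substituting $u$ gives
\[
\surd{s}\,\frac{\psi}{\underline\om}\,u=c_{u1}\surd{s}\,\psi\big/\surd{(s\psi/\xi_{\min})}=c_{u1}\surd{(\psi\,\xi_{\min})}.
\]
\end{itemize}

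This bias term is the main obstacle. Condition F's square root is calibrated so that the bias contributes at the scale $\xi_{\min}$, but the algebra above leaves $\surd{(\psi\xi_{\min})}$ rather than $\xi_{\min}$. I would first try to close the gap using $\psi\le\max_j\|\eta^0_{(j)}\|_2=\overline\xi\le c_u\xi_{\min}$, by Condition B(iii). This gives $\surd{(\psi\xi_{\min})}\le c_u\xi_{\min}$, so both terms are $O(c_{u1}\xi_{\min})$.

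Two adjustments would be needed to make this work. The constant $c_{u1}$ of Condition F should be re-scaled, taking it small relative to $c_u$ and the factors $6$, $8$, $12$ appearing in (\ref{eq:la}). One should also check that the displayed $\sqrt{\cdot}$ in Condition F is meant as $\surd{s}\,\psi/\xi_{\min}$ or in the form that produces exactly $\xi_{\min}$. If not, the inequality $\psi\le c_u\xi_{\min}$ absorbs the discrepancy anyway.

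Combining the two bounds gives $\max_j\|\check\theta_{(j)}-\theta^0_{(j)}\|_2\le c_{u1}\xi_{\min}$ with probability $1-o(1)$. Together with Step 1, this verifies Condition C(ii)(a,b).
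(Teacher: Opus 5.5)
Your proposal is correct and follows the paper's proof. Both specialize Theorem \ref{thm:unifmean} at $\check\tau$ with $u_{T_w}=u_{T_h}$ read off from Condition F, then bound the stochastic term of $\la$ via (\ref{eq:23}) and the bias term via the choice of $u$. The only difference is that the paper reads Condition F as $u=c_{u1}\underline\om\,\xi_{\min}/(\surd{s}\,\psi)$ (see (\ref{eq:28})), which makes the bias term directly of order $c_{u1}\xi_{\min}$ without Condition B(iii); your literal reading leaves $c_{u1}\surd{(\psi\,\xi_{\min})}$, which you correctly close using $\psi\le\overline\xi\le c_u\xi_{\min}$.
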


\begin{proof}[{\bf Proof of Lemma \ref{lem:step1mean}}] From Condition F, the initializer  $\check\tau=(\check\tau_w,\check\tau_h)^T$ 	is assumed to satisfy, $(i)\,\,|\check\tau_w-\tau^0_w|\le T_wu_{T_w},$ $(ii)\,\,|\check\tau_h-\tau_w^0|\le T_hu_{T_h}$ and $(iii)\,\,\min_{1\le j\le 4}|Q_j(\check\tau)|\ge c_{u}T_wT_h\underline\om,$ where,
	\benr\label{eq:28}
	u_{T_w}=u_{T_h}=c_{u1}\underline\om\xi_{\min}\big/(\surd s\psi).
	\eenr	
	Now applying Theorem \ref{thm:unifmean} while choosing,
	\benr\label{eq:la.step1.choice}
	\la\,\,{\textrm {as prescribed in (\ref{eq:la}) with}} \,\,u_{T_w},\,u_{T_h}\,\,{\textrm {as given in (\ref{eq:28})}},
	\eenr
	we obtain the following two results that hold with probability $1-o(1).$  First,  $\big\|\big(\check\theta_{(j)}\big)_{S_j^c}\big\|_1\le 3\big\|\big(\check\theta_{(j)}-\theta_{(j)}^0\big)_{S_j}\big\|_1,$ for any $j=1,2,3,4.$ Second,
	\benr\label{eq:22}
	\max_{1\le j\le 4}\|\check\theta_{(j)}-\theta_{(j)}^0\|_2&\le& \max\Big[c_u\si\Big\{\frac{s\log(p\vee T_wT_h)}{T_wT_h\underline\om}\Big\}^{\frac{1}{2}},\,\,c_u\frac{(u_{T_w}\vee u_{T_h})\surd{s}\psi}{\underline\om}\Big],\nn\\
	&=& \xi_{\min}\max\Big[\frac{c_u\si}{\xi_{\min}}\Big\{\frac{s\log(p\vee T_wT_h)}{T_wT_h\underline\om}\Big\}^{\frac{1}{2}},\,\,c_u\frac{(u_{T_w}\vee u_{T_h})\surd{s}\psi}{\xi_{\min}\underline\om}\Big]\nn\\
	&=& \xi_{\min}\Big[R_1,R_2\Big]
	\eenr	
	Here the first equality is simply an algebraic manipulation. Now for a suitable chosen $c_{u1}>0,$ we have from assumption (\ref{eq:23}) that,
	\benr
	\frac{c_u\si}{\xi_{\min}}\Big\{\frac{s\log (p\vee T_wT_h)}{T_wT_h\underline\om}\Big\}^{\frac{1}{2}}\le c_{u1},\nn
	\eenr
	which provides a bound for term $R_1$ on the RHS of (\ref{eq:22}). Next we bound term $R_2$ of the same expression. Substituting the choice of $u_T$ from (\ref{eq:28}) in term $R_2$, together with the earlier bound for $R_1,$ we obtain,
	\benr\max_{1\le j\le 4}\|\check\theta_{(1)}-\theta_{(1)}^0\|_2\le
	\xi_{\min}\Big[R_1,R_2\Big]\le c_{u1}\xi_{\min},\nn
	\eenr
	with probability $1-o(1).$ Thereby the requirement Condition C(ii)(a,b) are met and this completes the proof of the lemma.
\end{proof}

\begin{proof}[{\bf Proof of Corollary \ref{cor:alg1.validity}}] The argument to follow is described visually in Figure \ref{fig:schematic}. We show that once Algorithm \ref{alg:single} is initialized under Condition F, then, under the assumed rate conditions on model parameters all remaining conditions fall in line for Step 1 and Step 2, thereby allowing applicability of the main results of Subsection \ref{subsec:plugin}.
	
	Lemma \ref{lem:step1mean} establishes that $\check\theta_{(j)}=\tilde\theta_{(j)}(\check\tau),$ $j=1,2,3,4,$ of Step 1 of Algorithm \ref{alg:single} satisfies Condition C(ii)(a,c), under the dimensional rate assumption (\ref{eq:23}), which is weaker than Condition E, and therefore this result continues to hold. Also observe that Condition C(i)(a) is weaker than assumed Condition F on initializer $\check\tau.$ Thus, a direct application of Part (1a) and (1b) of Theorem \ref{thm:cpoptimal} yields the rate of convergence of $\h\tau$ of Step 1 of Algorithm \ref{alg:single} as,
	\benr\label{eq:49}
	&&|\h\tau_w-\tau^0_w|\le c_{u}\si^2T_h^{-1}\xi^{-2}_ws\log^2(p\vee T_wT_h),\quad{\rm and}\nn\\
	&&|\h\tau_h-\tau^0_h|\le c_{u}\si^2T_w^{-1}\xi^{-2}_hs\log^2(p\vee T_wT_h),
	\eenr	
	with probability at least $1-o(1).$ This completes the proof of Part (a) of this Corollary.

	Next observe the bounds (\ref{eq:49}) together with rate assumption of  Condition E implies that $\h\tau=(\h\tau_w,\h\tau_h)^T$ of Step 1 satisfies the stronger Condition C(i)(b). Next we show that the updated mean estimates $\h\theta_{(j)},$ $j=1,2,3,4,$ satisfy Condition C(ii)(a,c). For this purpose, note that from (\ref{eq:49}) we have that $|\h\tau_w-\tau_w^0|\le Tu_{T_w},$ and $|\h\tau_h-\tau_h^0|\le Tu_{T_h}$ with probability $1-o(1),$ where,
	\benr\label{eq:ut}
	(u_{T_w}\vee u_{T_h})\le c_{u}\si^2T_{w}^{-1}T_h^{-1}\xi^{-2}_{\min}s\log^2(p\vee T_wT_h),
	\eenr
	Moreover, (\ref{eq:49}) and Condition E also imply that with the same probability as above, we have $|Q_j(\h\tau)|\ge c_uT_wT_h\underline\om.$ Now applying Theorem \ref{thm:unifmean} with,
	\benr\label{eq:la.step2.choice}
	\la\,\,{\rm as\, prescribed\,\, in\,\, (\ref{eq:la})\,\, with}\,\,(u_{T_w}\vee u_{T_h})\,\,{\rm as\,\,in\,\,(\ref{eq:ut})},
	\eenr
	yields $\h\theta_{(j)}=\tilde\theta_{(j)}(\h\tau),$ $j=1,...,4,$  of Step 2 of Algorithm \ref{alg:single} satisfies Condition C(ii)(a). Further,
	\benr\label{eq:25a}
	\max_{1\le j\le 4}\big\|\h\theta_{(j)}-\theta_{(j)}^0\big\|_2&\le& \max\Big[\si\Big\{\frac{c_{u}s\log(p\vee T_wT_h)}{T_wT_h\underline\om}\Big\}^{\frac{1}{2}},\,\,\frac{c_u\surd{s}\psi}{\underline\om}(u_{T_w}\vee u_{T_h})\Big].\nn\\
	&=&\frac{\xi_{\min}}{s^{1/2}\log (p\vee T_wT_h)} \max\Big[\si\Big\{\frac{c_{u}s\log^{3/2}(p\vee T_wT_h)}{\xi_{\min}\surd{(T_wT_h\underline\om)}}\Big\},\,\nn\\
	&&\hspace{4cm}\,\frac{c_us\log (p\vee T_wT_h)\psi}{\underline\om\xi_{\min}}(u_{T_w}\vee u_{T_h})\Big]\nn\\
	&=&\frac{\xi_{\min}}{s^{1/2}\log (p\vee T_wT_h)}\max\big[R1,R2\big]
	\eenr
	with probability at $1-o(1).$ The first equality is simply an algebraic manipulation. From Condition E we have that $R1\le c_{u1},$ where $c_{u1}>0,$ is an appropriately chosen small constant. Next consider term $R2$ of (\ref{eq:25a}). Substituting $(u_{T_w}\vee u_{T_h})$ from (\ref{eq:ut}) in term $R2$ we obtain,
	\benr
	c_us\log (p\vee T_wT_h)\frac{\psi}{\underline\om\xi_{\min}}(u_{T_w}\vee u_{T_h})&\le &c_{u}\si^2\Big(\frac{\psi}{\xi_{\min}}\Big)\Big\{\frac{s^2\log^3(p\vee T_wT_h)}{\xi^{2}_{\min}T_{w}T_h\underline\om}\Big\},\nn\\
	&\le& c_u\Big\{\Big(\frac{\si}{\xi_{\min}}\Big)\frac{s\log^{2} (p\vee T_wT_h)}{\surd(T_wT_h\om)}\Big\}^2
	\le c_{u1}.\nn
	\eenr
	The second inequality follows from the assumption $(\psi\big/\xi)\le \log(p\vee T_wT_h).$ The third inequality follows from Condition E. Substituting the bounds for terms $R1$ and $R2$ back in (\ref{eq:25a}) yields,
	\benr\label{eq:25aa}
	\max_{1\le j\le 4}\big\|\h\theta_{(j)}-\theta_{(j)}^0\big\|_2\le
	\frac{c_{u1}\xi_{\min}}{s^{1/2}\log (p\vee T_wT_h)}
	\eenr
	with probability at $1-o(1).$ Thus, estimates $\h\theta_1$ and $\h\theta_2$ of Step 2 of Algorithm 1 satisfy requirement of Condition C(ii)(a,c). We now appeal to Theorem \ref{thm:cpoptimal}, Theorem \ref{thm:wc.vanishing} and Theorem \ref{thm:wc.vanishing} which yields Part (b) and Part (c) of this Corollary. This completes the proof.
\end{proof}


\section{Deviation bounds}\label{app:deviation}
\begin{lemma}\label{lem:nearoptimalcross.subE.special} Suppose Condition A and B(i) holds. Then, we have,
	\benr
	\Big\|\sum_{w=\tau^0_w+1}^{\tau_w}\sum_{h=\tau_h^0+1}^{\tau_h}\vep_{(w,h)}\Big\|_{\iny}\le 2c_u\si\log (p\vee T_wT_h)\surd \big((\tau_w-\tau_w^0)(\tau_h-\tau_h^0)\big)\nn
	\eenr
	with probability at least $1-2\exp\{-(c_u-2)\log(p\vee T_wT_h)\},$ uniformly over all possible $(\tau_w,\tau_h).$	
\end{lemma}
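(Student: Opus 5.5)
The argument is a union bound combined with a Bernstein-type tail inequality for sums of independent subexponential variables, applied coordinate-wise. Fix a pair $(\tau_w,\tau_h)$ and a coordinate $k\in\{1,\dots,p\}$. Write $n=|\tau_w-\tau_w^0|\,|\tau_h-\tau_h^0|$ for the number of grid points in the rectangle between $(\tau_w^0,\tau_h^0)$ and $(\tau_w,\tau_h)$; the convention is that the sum runs over this rectangle whatever the orientation, and $n\ge 1$. The scalar sum $S_k=\sum\vep_{(w,h,k)}$ is a sum of $n$ i.i.d. zero-mean subexponential variables with variance proxy $\si^2$, by Condition A (Definition \ref{def:sube}, \ref{def:submult}). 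Condition B(i) only ensures that the marginal scale is comparable to $\si$; it is not otherwise needed.

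\textbf{Step 1 (Bernstein tail).} For every $t>0$,
\[
pr\big(|S_k|>t\big)\le 2\exp\Big\{-c\min\Big(\frac{t^2}{n\si^2},\frac{t}{\si}\Big)\Big\}.
\]
Set $t=2c_u\si\log(p\vee T_wT_h)\surd n$ and write $L=\log(p\vee T_wT_h)\ge 1$. Then $t^2/(n\si^2)=4c_u^2L^2\ge c_uL$ and $t/\si=2c_uL\surd n\ge c_uL$, since $n\ge1$. So the minimum is at least $c_uL$, after absorbing the absolute constant $c$ into $c_u$.

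This is the point of the extra $\log$ factor, as opposed to $\surd{\log}$. It makes the subexponential (linear) regime harmless uniformly in $n$, so no case split between small and large rectangles is needed. This single bound controls the two regimes simultaneously, and I expect it to be the only delicate point.

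\textbf{Step 2 (union bound).} Take the union over the $p$ coordinates and over all $(\tau_w,\tau_h)\in\{1,\dots,T_w\}\times\{1,\dots,T_h\}$, which is at most $T_wT_h$ pairs. The total failure probability is at most
\[
2\,p\,T_wT_h\exp\{-c_uL\}\le 2\exp\{2L-c_uL\}=2\exp\{-(c_u-2)\log(p\vee T_wT_h)\},
\]
using $p\,T_wT_h\le (p\vee T_wT_h)^2$. On the complement of this event, $\max_k|S_k|$ obeys the stated bound simultaneously for every pair, which is exactly the $\ell_\infty$ claim of the lemma.

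\textbf{Remarks on uniformity and orientation.} The four orientations of $(\tau_w,\tau_h)$ relative to $\tau^0$ are covered automatically, because the union already runs over all grid pairs. Any independence used is only within a fixed rectangle, supplied by Condition A. For rectangles whose $h$-range is the full strip $\tau_h^0+1,\dots,T_h$, as used in (\ref{eq:10}), the same argument applies verbatim with $n=T_h|\tau_w-\tau_w^0|$ up to constants. Those rectangles are again indexed by at most $T_wT_h$ choices, so the probability bound is unchanged.
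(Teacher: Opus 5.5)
Your proposal is correct and is essentially the paper's proof. The paper applies Bernstein's inequality coordinatewise to the rectangle sum with the per-observation threshold $d=2c_u\si\{\log^2(p\vee T_wT_h)/n\}^{1/2}$, which is your $t/n$, uses $n\ge 1$ to push the linear regime above $c_u\log(p\vee T_wT_h)$, and then takes a union bound over the $p$ coordinates and at most $T_wT_h$ pairs, exactly as you do.

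One small simplification: with the paper's form of Bernstein, whose exponent carries the constant $\tfrac12$, no absorption of constants into $c_u$ is needed. Indeed $\tfrac12 t^2/(n\si^2)=2c_u^2L^2$ and $\tfrac12 t/\si=c_uL\surd n\ge c_uL$, so the same $c_u$ appears in both the threshold and the probability as the statement requires.
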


\begin{proof}[Proof of Lemma \ref{lem:nearoptimalcross.subE.special}] Consider any $k\in\{1,2,...,p\}$ and any $\tau_w>\tau^0_w,$ $\tau_h>\tau_h^0$ and apply the Bernstein's inequality (Lemma \ref{lem:bernstein}) for any $d>0$ to obtain,
	\benr\label{eq:6}
	pr\Big(\big|\sum_{w=\tau^0_w+1}^{\tau_w}\sum_{h=\tau_h^0+1}^{\tau_h}\vep_{(w,h,k)}\big|>d(\tau_w-\tau^0_w)(\tau_h-\tau^0_h)\Big)\le\hspace{2cm}\nn\\
	2\exp\Big\{-\frac{1}{2}(\tau_w-\tau^0_w)(\tau_h-\tau^0_h)\Big(\frac{d^2}{\si^2}\wedge\frac{d}{\si}\Big)\Big\}.
	\eenr	
	Choose $d=2c_u\si\{\log^2 (p\vee T_wT_h)\big/\big((\tau_w-\tau^0_w)(\tau_h-\tau_h^0)\big)\}^{1/2},$ and note that,
	\benr\label{eq:1}
	(\tau_w-\tau^0_w)(\tau_h-\tau_h^0)\frac{d^2}{2\si^2}&=&2c_u^2\log^2(p\vee T_wT_h),\quad {\rm and},\nn\\
	(\tau_w-\tau^0_w)(\tau_h-\tau_h^0)\frac{d}{2\si}&\ge& c_u\log(p\vee T_wT_h),
	\eenr
	where we have used $(\tau_w-\tau^0_w)\ge \ge 1,$ $(\tau_h-\tau^0_h)\ge 1.$ Thus, substituting this choice of $d$ in (\ref{eq:6}) and recalling that by choice $c_u\ge 1$, we obtain,
	\benr
	\Big|\sum_{w=\tau^0_w+1}^{\tau_w}\sum_{h=\tau_h^0+1}^{\tau_h}\vep_{(w,h,k)}\Big|&\le& 2c_u\si(\tau_w-\tau^0_w)^{1/2}(\tau_h-\tau^0_h)^{1/2}\{\log^2(p\vee T_wT_h)\}^{1/2}\nn
	\eenr
	with probability at least $1-2\exp\{-c_u\log (p\vee T_wT_h)\}.$ Uniformity is supplied by applying a union bound over $k=1,...,p,$ and at most $T_wT_h$ choices of $\tau_w=1,...,T_w,$ and $\tau_h=1,...T_h.$
\end{proof}

\begin{lemma}\label{lem:optimalcross} Suppose Condition A and B(i) hold. Then for any $0<a<1,$ one can find a large enough $c_a$ that depends only on $a,$ so that,
	\benr
	\sup_{(\tau_w-\tau_w^0)\ge c_aT_h^{-1}\phi^2\xi_w^{-2}}\frac{1}{(\tau_w-\tau_w^0)}\Big|\sum_{w=\tau^0_w+1}^{\tau_w}\Big(\sum_{h=\tau^0_h+1}^{T_h}\vep_{(w,h)}^T\eta^0_{(1)}+\sum_{h=1}^{\tau^0_h}\vep_{w,h}^T\eta^0_{(3)}\Big)\Big|\le
	\frac{T_h\xi_w^2}{2},\nn
	\eenr
	with probability at least $1-a.$	
\end{lemma}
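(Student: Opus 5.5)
The plan is to reduce the statement to a maximal inequality for a one-dimensional random walk indexed by $w$ and then apply a peeling (dyadic blocking) argument with a Kolmogorov/Doob-type inequality. For each $w$ set
\[
z_w=\sum_{h=\tau^0_h+1}^{T_h}\vep_{(w,h)}^T\eta^0_{(1)}+\sum_{h=1}^{\tau^0_h}\vep_{(w,h)}^T\eta^0_{(3)} .
\]
By Condition A the vectors $\vep_{(w,h)}$ are i.i.d. over $(w,h)$, so the $z_w$ are independent, mean zero, and identically distributed in $w$. Their variance is
\[
{\rm var}(z_w)=(T_h-\tau^0_h)\eta^{0T}_{(1)}\Si\eta^0_{(1)}+\tau^0_h\eta^{0T}_{(3)}\Si\eta^0_{(3)}\le \phi^2T_h\big(\om_h\xi_1^2+(1-\om_h)\xi_3^2\big)=\phi^2T_h\xi_w^2 .
\]
This uses Condition B(i) and the definition (\ref{def:weight.jump.horiz.vert}). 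The quantity to control is then $\sup_{m\ge m_0}m^{-1}|S_m|$, where $S_m=\sum_{w=\tau^0_w+1}^{\tau^0_w+m}z_w$ and $m_0=c_aT_h^{-1}\phi^2\xi_w^{-2}$. The case $\tau_w<\tau^0_w$ is symmetric, with partial sums run backwards.

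The peeling step splits $\{m\ge m_0\}$ into blocks $B_k=\{2^km_0\le m<2^{k+1}m_0\}$ for $k\ge 0$. On $B_k$, the event $m^{-1}|S_m|>T_h\xi_w^2/2$ implies $\max_{m<2^{k+1}m_0}|S_m|>2^{k-1}m_0T_h\xi_w^2$. Kolmogorov's maximal inequality for sums of independent mean-zero variables bounds the probability of the latter by
\[
\frac{2^{k+1}m_0\,\phi^2T_h\xi_w^2}{4^{k-1}m_0^2T_h^2\xi_w^4}=\frac{8\,\phi^2}{2^{k}m_0T_h\xi_w^2}=\frac{8}{2^kc_a}.
\]
A union bound over $k\ge0$ gives total probability at most $16/c_a$. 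Choosing $c_a\ge 16/a$ gives the claim with probability at least $1-a$, and $c_a$ depends only on $a$.

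The main points to watch are small. If $m_0<1$ the range simply starts at $m=1$, and the bound still holds since the same computation applies with $\lceil m_0\rceil$. The range of $m$ is truncated at $T_w-\tau^0_w$, which only helps. Only second moments are needed, so the subexponential assumption is used solely to guarantee finite variance. The step I expect to require the most care is getting the variance bound exactly in terms of $\xi_w^2$ with constant $\phi^2$ and no extra factor. This works because the $T_h$ summands in each $z_w$ are independent across $h$, so their variances add, with weight $T_h-\tau^0_h=T_h\om_h$ on the $\eta^0_{(1)}$ terms and weight $\tau^0_h=T_h(1-\om_h)$ on the $\eta^0_{(3)}$ terms.
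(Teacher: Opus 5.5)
Your proof is correct, and its skeleton matches the paper's. You pass to the independent column sums $z_w$ (the paper's $\psi_w$), bound their variance by $\phi^2T_h\xi_w^2$, and control $\sup_{m\ge m_0}m^{-1}|S_m|$ at level $T_h\xi_w^2/2$ with a maximal inequality of order ${\rm var}(z_w)/(m_0\al^2)$.

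The ingredients differ in two places. For the maximal inequality, the paper cites the H\'ajek--R\'enyi inequality with $c_k=1/k$ and uses $\sum_{k\ge m_0}k^{-2}=O(m_0^{-1})$. Your dyadic peeling plus Kolmogorov's inequality is essentially the standard proof of that consequence, and it gives the explicit choice $c_a=16/a$.

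For the variance, the paper goes through Lemma \ref{lem:lcsubE} and the moment bound of Lemma \ref{lem:momentprop}, obtaining ${\rm var}\,\psi_w\le 16\la^2$. You instead compute ${\rm var}(z_w)$ exactly from independence over $h$ and Condition B(i). Your version is cleaner here: it uses only second moments and produces exactly the constant $\phi^2$ that appears in the threshold $c_aT_h^{-1}\phi^2\xi_w^{-2}$. A subexponential-parameter argument would naturally give the variance proxy $\si^2$ of Condition A, since Definition \ref{def:submult} yields $\vep_{(w,h)}^T\eta\sim{\rm subE}(\si^2\|\eta\|_2^2)$. Then $c_a$ would pick up a factor $\si^2/\phi^2$, unless one identifies the two constants as the paper implicitly does when it sets $\la^2=\phi^2T_h\xi_w^2$.

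The paper's route, in turn, is shorter because it invokes a ready-made inequality.
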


\begin{proof} Begin by defining for any $w$ the r.v.,
	\benr
	\psi_{w}=\sum_{h=\tau^0_h+1}^{T_h}\vep_{(w,h)}^T\eta^0_{(1)}+\sum_{h=1}^{\tau^0_h}\vep_{(w,h)}^T\eta^0_{(3)}\nn
	\eenr
	Then from Lemma \ref{lem:lcsubE} we have $\psi_{w}\sim{\rm subE}(\la^2),$ where $\la^2=\phi^2\big\{(T_h-\tau^0_h)\xi_1^2+\tau^0_h\xi_3^2\}=\phi^2T_h\xi_w^2.$ Consequently from Lemma \ref{lem:momentprop} we also have ${\rm var}\psi_{w}\le 16\la^2.$

	Next, for any $0<a<1$ let $c_a$ be a suitably chosen constant that depends only on $a.$ In particular, $a$ and $c_a$ be inversely related, i.e. $c_a$ is larger for a smaller value of $a.$  Now consider the collection of $\tau_w$ satisfying, $
	(\tau_w-\tau_w^0)\ge c_aT_h^{-1}\phi^2\xi_w^{-2}.$ Then applying the H\'ajek--R\'enyi inequality (Theorem \ref{thm:hajek.renyi}) with uniformity over this collection, we obtain for any $\al>0,$ 
	\benr\label{eq:hr1}
	\sup_{(\tau_w-\tau_w^0)\ge c_aT_h^{-1}\phi^2\xi_w^{-2}}\frac{1}{(\tau_w-\tau_w^0)}\Big|\sum_{w=\tau_w^0+1}^{\tau_w}\psi_{w}\Big|>\al
	\eenr
	with probability at most $\big[c_u\phi^2T_h\xi_w^2\big]\Big/\big[c_aT_h^{-1}\phi^2\xi_w^{-2} \al^2\big]=\big[c_uT_h^2\xi_w^4\big]\big/\big[c_a\al^2\big].$ Choosing $\al=T_h\xi_w^2/2$ yields the statement of this Lemma. 
\end{proof}

\begin{lemma}\label{lem:1.to.tau.bound} Assume Condition A and B(i) holds. 	Additionally assume for $c_u> 0$ that $c_uT_wT_h\underline\om\ge\log (p\vee T_wT_h).$ Then for any $c_{u1}>0,$ we have,
	\benr\label{eq:18}
	\max_{1\le j\le 4}\sup_{\substack{\tau_w\in\{1,.....,T\};\\ \substack{\tau_h\in\{1,.....,T\};\\
				|Q_j(\tau)|\ge c_uT_wT_h\underline\om}}}\Big\|\frac{1}{|Q_j(\tau)|}\underset{(w,h)\in Q_j(\tau)}{\sum\sum}\vep_{(w,h)}\Big\|_{\iny}\le  \si\Big\{\frac{2c_{u1}\log(p\vee T_wT_h)}{c_uT_wT_h\underline\om}\Big\}^{\frac{1}{2}}\nn
	\eenr
	with probability at least $1-8\exp\big\{-(c_{u2}-2)\log (p\vee T_wT_h)\big\},$ where $c_{u2}=c_{u1}\wedge \surd(c_uc_{u1}/2).$
\end{lemma}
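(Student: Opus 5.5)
The plan is a Bernstein-plus-union-bound argument, as in the proof of Lemma \ref{lem:nearoptimalcross.subE.special}, now applied to quadrant averages whose cardinality is bounded below. Fix $j\in\{1,2,3,4\}$, a grid point $\tau=(\tau_w,\tau_h)$ with $|Q_j(\tau)|\ge c_uT_wT_h\underline\om$, and a coordinate $k\in\{1,\dots,p\}$. The sum $\sum_{(w,h)\in Q_j(\tau)}\vep_{(w,h,k)}$ is a sum of $n:=|Q_j(\tau)|$ independent, zero-mean, subexponential variables with variance proxy $\si^2$ (Condition A). Lemma \ref{lem:bernstein} then gives, for any $d>0$,
\[
pr\Big(\Big|\frac{1}{n}\sum_{(w,h)\in Q_j(\tau)}\vep_{(w,h,k)}\Big|>d\Big)\le 2\exp\Big\{-\frac{n}{2}\Big(\frac{d^2}{\si^2}\wedge\frac{d}{\si}\Big)\Big\}.
\]

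Next I choose $d=\si\{2c_{u1}\log(p\vee T_wT_h)/(c_uT_wT_h\underline\om)\}^{1/2}$, which is the target bound, and lower bound both exponents using $n\ge c_uT_wT_h\underline\om$.
\begin{itemize}
\item Quadratic branch: $n d^2/(2\si^2)\ge c_{u1}\log(p\vee T_wT_h)$.
\item Linear branch: $n d/(2\si)\ge \frac12\{2c_{u1}c_uT_wT_h\underline\om\log(p\vee T_wT_h)\}^{1/2}$.
\end{itemize}
The linear branch is the step needing care, and it is where the hypothesis $c_uT_wT_h\underline\om\ge\log(p\vee T_wT_h)$ enters. Writing $c_uT_wT_h\underline\om=L\cdot(c_uT_wT_h\underline\om/L)$ with $L=\log(p\vee T_wT_h)$, the assumption makes the second factor at least $1$. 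So the linear branch is at least $\surd(c_{u1}/2)\,L$, up to how the constant $c_u$ in the hypothesis is matched with the one in the cardinality bound; this is where $\surd(c_uc_{u1}/2)$ arises. Taking the minimum of the two branches gives a single-event tail of at most $2\exp\{-c_{u2}\log(p\vee T_wT_h)\}$ with $c_{u2}=c_{u1}\wedge\surd(c_uc_{u1}/2)$.

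Finally, I take a union bound over $k\le p$, over at most $T_wT_h$ grid points $\tau$, and over the four quadrants $j$. The number of events is at most $4pT_wT_h\le 4(p\vee T_wT_h)^2$. The total failure probability is therefore at most $8\exp\{-(c_{u2}-2)\log(p\vee T_wT_h)\}$, which is the stated bound. The only delicate point is the constant bookkeeping in the linear branch of Bernstein's inequality; the rest is routine.
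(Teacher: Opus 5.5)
Your proposal is correct and follows essentially the same route as the paper. The paper applies Bernstein's inequality for each coordinate $k$, quadrant $j$ and grid point $\tau$; it uses the cardinality lower bound together with $c_uT_wT_h\underline\om\ge\log(p\vee T_wT_h)$ to control the linear branch; and it takes a union bound over $k$, $\tau$ and $j$, which produces the factor $8$ and the $-2$ in the exponent. The only cosmetic differences are two. First, the paper takes $d=\si\{2c_{u1}\log(p\vee T_wT_h)/|Q_j(\tau)|\}^{1/2}$ and only afterwards bounds it by the target using $|Q_j(\tau)|\ge c_uT_wT_h\underline\om$, whereas you fix $d$ at the target from the start. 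Second, the looseness you flagged in obtaining $\surd(c_uc_{u1}/2)$ rather than $\surd(c_{u1}/2)$ is present in the paper's own argument as well, where it rests on the generic meaning of $c_u$.
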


\begin{proof}[{\bf Proof of Lemma \ref{lem:1.to.tau.bound}}]
	First consider the case of $j=1,$ where $Q_1(\tau)=\big\{(w,h)\in \{\tau_w+1,...T_w\}\times\{\tau_h+1,...,T_h\}\big\},$ then we have $\sum_{w=\tau_w+1}^{T_w}\sum_{h=\tau_h+1}^{T_h}\vep_{(w,h,k)}\sim{\rm subE}\big(|Q_1(\tau)|\si^2\big).$  Now, applying Bernstein's inequality (Lemma \ref{lem:bernstein}) for any $d>0,$ we have,
	\benr\label{eq:19}
	pr\Big(\Big|\sum_{w=\tau_w+1}^{T_w}\sum_{h=\tau_h+1}^{T_h}\vep_{(w,h,k)}\Big|>d|Q_1(\tau)|\Big)\le 2\exp\Big\{-\frac{|Q_1(\tau)|}{2}\Big(\frac{d^2}{\si^2}\wedge\frac{d}{\si}\Big)\Big\}.
	\eenr	
	Choose $d=\si\{2c_{u1}\log (p\vee T_wT_h)\big/|Q_1(\tau)|\}^{1/2},$ and due to the assumption $|Q_1(\tau)|\ge c_uT_wT_h\underline\om\ge \log (p\vee T_wT_h),$ we have,
	\benr
	|Q_1(\tau)|\frac{d^2}{2\si^2}&=&c_{u1}\log(p\vee T_wT_h),\quad {\rm and},\nn\\
	|Q_1(\tau)|\frac{d}{2\si}&\ge&\surd(c_{u1}/2)(c_uT_wT_h\underline\om)^{1/2}\{\log(p\vee T_wT_h)\}^{1/2}\ge \surd(c_uc_{u1}/2)\log(p\vee T_wT_h).\nn
	\eenr	
	Now substituting this choice of $d$ in (\ref{eq:19}),  we obtain,	
	\benr
	\frac{1}{|Q_1(\tau)|}\Big|\sum_{w=\tau_w+1}^{T_w}\sum_{h=\tau_h+1}^{T_h}\vep_{(w,h,k)}\Big|\le \si\{2c_{u1}\log (p\vee T_wT_h)\big/|Q_1(\tau)|\}^{1/2}\le\si\Big\{\frac{2c_{u1}\log(p\vee T_wT_h)}{c_uT_wT_h\underline\om}\Big\}^{1/2}\nn
	\eenr
	with probability at least $1-2\exp\{-c_{u2}\log (p\vee T_wT_h)\},$ where $c_{u2}=c_{u1}\wedge\surd(c_uc_{u1}/2).$ Uniformity of the inner collection in the lemma follows by applying union bounds over all values of $\tau_w,\tau_h$ and $k.$ Uniformity over $j=1,2,3,4$ can be obtained by proceeding with identical arguments as above for each respective quadrant to obtain the same upper bound and finally applying a union bound to obtain the statement of the lemma.
\end{proof}

\section{Definitions and auxiliary results}\label{app:auxiliary}

The following definition's and results provide basic properties of subexponential distributions. These are largely reproduced from \cite{vershynin2019high} and \cite{rigollet201518}. We also refer to Appendix B and Appendix F of \cite{Kaul2021single} and \cite{kaul2023inference}, respectively, where these results and some additional proofs have been compiled.

\begin{definition}\label{def:sube}[Subexponential r.v.] $X\in\R$ is said to be sub-exponential with parameter $\si^2>0$ \big(denoted by $X\sim{\rm subE(\si^2)}$\big) if $E(X)=0$ and its moment generating function
	\benr
	E(\e^{tX})\le \e^{t^2\si^2/2},\qquad \forall\,\, |t|\le \frac{1}{\si}\nn
	\eenr
\end{definition}

\begin{definition}\label{def:submult} A random vector $X\in\R^p$ shall said to be subexponential with parameter $\si^2,$ if the inner product $\langle X, v\rangle\sim {\rm subE}(\si^2),$ respectively, for any $v\in\R^p$ with $\|v\|_2 = 1.$
\end{definition}

Following is the elementary definition of uniform tightness. 
\begin{definition}\label{def:utight} A sequence of random variables $X_n$ is said to be uniformly tight if for every $\ep>0,$ there is a compact set $K$ such that $pr(X_n\in K)>1-\ep.$
\end{definition}

\begin{lemma}\label{lem:tailb}[Tail bounds] If $X\sim {\rm subE}(\si^2),$ then
	\benr
	pr(|X|\ge \la)\le 2\exp\Big\{-\frac{1}{2}\Big(\frac{\la^2}{\si^2}\wedge\frac{\la}{\si}\Big) \Big\}.\nn
	\eenr
\end{lemma}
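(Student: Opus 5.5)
The plan is a Chernoff (exponential Markov) argument built directly on the moment generating function bound of Definition \ref{def:sube}. The two tails are handled separately and then added.

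For the upper tail, fix $\la>0$ and any $t\in(0,1/\si]$. Markov's inequality applied to $\e^{tX}$ gives
\[
pr(X\ge\la)\le \e^{-t\la}E(\e^{tX})\le \exp\{-t\la+t^2\si^2/2\}.
\]
We then minimize the exponent $g(t)=-t\la+t^2\si^2/2$ over the admissible range $0<t\le 1/\si$. The unconstrained minimizer is $t^*=\la/\si^2$, and this leads to two cases.

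\emph{Case $\la\le\si$.} Here $t^*\le 1/\si$ is admissible, and $g(t^*)=-\la^2/(2\si^2)$. Since $\la/\si\le1$, we have $\la^2/\si^2\le\la/\si$, so the minimum $\frac{\la^2}{\si^2}\wedge\frac{\la}{\si}$ equals $\la^2/\si^2$. The bound is therefore $\exp\{-\frac12(\frac{\la^2}{\si^2}\wedge\frac{\la}{\si})\}$.

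\emph{Case $\la>\si$.} Here we take the boundary value $t=1/\si$. This gives $g(1/\si)=-\la/\si+1/2\le -\la/(2\si)$, where the inequality uses $\la/\si\ge1$. In this case the minimum $\frac{\la^2}{\si^2}\wedge\frac{\la}{\si}$ equals $\la/\si$, so again the exponent is at most $-\frac12(\frac{\la^2}{\si^2}\wedge\frac{\la}{\si})$.

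For the lower tail, apply the same argument to $-X$. The MGF condition holds for all $|t|\le 1/\si$, so $-X\sim{\rm subE}(\si^2)$, and the identical bound follows for $pr(X\le-\la)$.

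A union bound over the two tails yields the factor $2$. There is no real obstacle here; the only care needed is the case split at $\la=\si$, which correctly identifies which branch of the minimum is active.
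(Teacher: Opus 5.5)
Your argument is correct and is essentially the standard proof the paper relies on. The paper gives no proof of its own and defers to the cited lecture notes of Rigollet and Vershynin, which use the same Chernoff bound with $t=\frac{\la}{\si^2}\wedge\frac{1}{\si}$ (your case split at $\la=\si$), the symmetric MGF condition to pass to $-X$, and a union bound for the factor $2$.
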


\begin{lemma}[Moment bounds]\label{lem:momentprop} If $X\sim {\rm subE}(\si^2),$ then
	\benr
	E|X|^k\le 4\si^k k^k, \qquad k> 0.\nn
	\eenr	
\end{lemma}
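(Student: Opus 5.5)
The plan is to avoid integrating the tail bound of Lemma \ref{lem:tailb} and to work directly with the moment generating function in Definition \ref{def:sube}. The tail-integration route, $E|X|^k=\int_0^\iny k u^{k-1}pr(|X|\ge u)\,du$ split at $u=\si$, gives a bound of order $2\si^k+2^{k+1}k!\,\si^k$. That is of the right form, but its constant already exceeds $4$ at $k=1$. The MGF route gives the stated constant with no case analysis on the size of $k$.

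\emph{Step 1: an MGF bound for $|X|$.} Since $t=\pm 1/\si$ both satisfy $|t|\le 1/\si$, Definition \ref{def:sube} applies at both values. Using $\e^{|x|}\le \e^{x}+\e^{-x}$, this gives
\begin{equation*}
E\,\e^{|X|/\si}\le E\,\e^{X/\si}+E\,\e^{-X/\si}\le 2\e^{1/2}.
\end{equation*}

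\emph{Step 2: a deterministic comparison of $x^k$ with $\e^{x}$.} For fixed $k>0$, the function $x\mapsto x^k\e^{-x}$ on $[0,\iny)$ is maximized at $x=k$, with maximum value $k^k\e^{-k}$. Applying this with $x=|X|/\si$ gives, pointwise,
\begin{equation*}
|X|^k\le \si^k k^k\e^{-k}\,\e^{|X|/\si}.
\end{equation*}

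\emph{Step 3: combine.} Taking expectations and using Step 1,
\begin{equation*}
E|X|^k\le 2\e^{1/2-k}\si^k k^k .
\end{equation*}
For every $k>0$ we have $\e^{1/2-k}\le \e^{1/2}$, so $2\e^{1/2-k}\le 2\e^{1/2}\approx 3.30<4$. This yields $E|X|^k\le 4\si^kk^k$ for all $k>0$, which is the claim.

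There is no substantive obstacle in this argument. The only point needing care is the first step: it uses that the subexponential MGF condition holds at the endpoint $|t|=1/\si$ and for both signs of $t$, which Definition \ref{def:sube} grants. The rest is the elementary maximization in Step 2.
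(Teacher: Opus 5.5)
Your proof is correct. It differs from the paper, which states this lemma without proof and defers to \cite{vershynin2019high}, \cite{rigollet201518} and the compilations in \cite{Kaul2021single}, \cite{kaul2023inference}. In those references, moment bounds of this kind are obtained by integrating a tail bound of the form of Lemma \ref{lem:tailb}. You go directly from Definition \ref{def:sube} to moments: evaluating at the endpoints $t=\pm 1/\si$ gives $E\,\e^{|X|/\si}\le 2\e^{1/2}$, and the pointwise inequality $y^k\le (k/\e)^k\e^{y}$ turns this into $E|X|^k\le 2\e^{1/2-k}k^k\si^k$.

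This buys two things. First, it avoids the loss built into Lemma \ref{lem:tailb}. In its linear regime that bound decays only like $\e^{-u/(2\si)}$, whereas the endpoint MGF gives $\e^{-u/\si}$ via Markov. Integrating the former yields something like $2(2\si)^k\Gamma(k+1)$, which is worse than your $(k/\e)^k$ by a factor of order $2^k\surd k$ for large $k$. Second, your constant satisfies $2\e^{1/2-k}<2\e^{1/2}<4$ uniformly over all real $k>0$. So no separate check is needed for small $k$, where $k^k$ dips to $\e^{-1/\e}$ and a tail-integration constant would have to be compared with $4k^k$ by hand.

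Your dismissal of the tail route is a bit too quick, though. Bound the probability by $1$ on $[0,\si]$ and keep the lower limit in $\int_\si^\iny 2\e^{-u/(2\si)}\,du$. This gives $E|X|\le (1+4\e^{-1/2})\si\approx 3.43\,\si$ at $k=1$, so that route can also work, just less cleanly.

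Your use of the endpoint $|t|=1/\si$ is allowed by the definition as written. Even if the definition required $|t|<1/\si$, the bound would follow by monotone convergence. Note also that your argument never uses $EX=0$.
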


\begin{lemma}\label{lem:lcsubE} Assume that $X\sim{\rm subE(\si^2)},$ and that $\al\in\R,$ then $\alpha X\sim{\rm subE}(\alpha^2\si^2).$ Moreover, assume that $X_1\sim{\rm subE(\si_1^2)}$ and $X_2\sim{\rm subE(\si_2^2)},$ then $X_1+X_2\sim{\rm subE((\si_1+\si_2)^2)},$ additionally, if $X_1$ and $X_2$ are independent, then $X_1+X_2\sim{\rm subE(\si_1^2+\si_2^2)}.$
\end{lemma}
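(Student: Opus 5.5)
The plan is to derive each claim of Lemma \ref{lem:lcsubE} directly from the moment generating function characterization in Definition \ref{def:sube}. Centering is preserved under all three operations, so only the mgf bound needs checking. Throughout we work only on the admissible range of $t$ and check that this range is contained in the range where the component bounds are valid.

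\emph{Scaling.} For the scaling claim, take $\al\neq 0$ (the case $\al=0$ is trivial). Then $E\e^{t\al X}=E\e^{(t\al)X}\le \e^{t^2\al^2\si^2/2}$ whenever $|t\al|\le 1/\si$, that is, whenever $|t|\le 1/(|\al|\si)$. This is exactly the defining condition for ${\rm subE}(\al^2\si^2)$.

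\emph{Independent sum.} Let $|t|\le 1/\sqrt{\si_1^2+\si_2^2}$. Since $\sqrt{\si_1^2+\si_2^2}\ge \si_i$, we have $|t|\le 1/\si_i$ for each $i$. Independence then gives
\[
E\e^{t(X_1+X_2)}=E\e^{tX_1}E\e^{tX_2}\le \e^{t^2(\si_1^2+\si_2^2)/2}.
\]

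\emph{General sum.} This is the step requiring care, since independence is unavailable. The natural tool is H\"older's inequality with conjugate exponents $p=(\si_1+\si_2)/\si_1$ and $q=(\si_1+\si_2)/\si_2$:
\[
E\e^{t(X_1+X_2)}\le \big(E\e^{ptX_1}\big)^{1/p}\big(E\e^{qtX_2}\big)^{1/q}.
\]
To apply the mgf bound to each factor we need $|pt|\le 1/\si_1$, i.e. $|t|\le 1/(\si_1+\si_2)$, and symmetrically $|qt|\le 1/\si_2$, which is the same condition. This matches exactly the range required for ${\rm subE}((\si_1+\si_2)^2)$. On that range the right side is bounded by
\[
\exp\{t^2(p\si_1^2+q\si_2^2)/2\}.
\]
Finally, $p\si_1^2+q\si_2^2=(\si_1+\si_2)\si_1+(\si_1+\si_2)\si_2=(\si_1+\si_2)^2$, which completes this case.

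The main point to watch throughout is that the restricted $t$-range in Definition \ref{def:sube} is respected at every step. The H\"older choice above is designed so that the admissible ranges line up exactly, so I expect no further obstacle.
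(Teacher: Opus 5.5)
Your proof is correct. The scaling and independent-sum cases follow directly from Definition \ref{def:sube}, and your H\"older exponents $p=(\sigma_1+\sigma_2)/\sigma_1$, $q=(\sigma_1+\sigma_2)/\sigma_2$ give exactly the admissible range $|t|\le 1/(\sigma_1+\sigma_2)$ and the proxy $(\sigma_1+\sigma_2)^2$. The paper gives no proof of its own; it defers to standard references (Vershynin, Rigollet, and the appendices of Kaul et al.), which use this same direct moment generating function argument.
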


\begin{lemma}[Bernstein's inequality]\label{lem:bernstein} Let $X_1,X_2,...,X_T$ be independent random variables such that $X_t\sim {\rm subE}(\la^2).$ Then for any $d>0$ we have,
	\benr
	pr(|\bar X|>d)\le 2\exp\Big\{-\frac{T}{2}\Big(\frac{d^2}{\la^2}\wedge \frac{d}{\la}\Big)\Big\}\nn
	\eenr
\end{lemma}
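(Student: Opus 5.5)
The plan is the standard Chernoff (exponential Markov) argument applied to the sum, using only the moment generating function bound of Definition \ref{def:sube}. Write $S=\sum_{t=1}^T X_t$, so that $\bar X=S/T$ and $\{|\bar X|>d\}=\{S>Td\}\cup\{-S>Td\}$. By independence and Lemma \ref{lem:lcsubE}, $S\sim{\rm subE}(T\la^2)$. More directly, for every $|s|\le 1/\la$ we have $E(\e^{sS})=\prod_t E(\e^{sX_t})\le \e^{s^2T\la^2/2}$. I will bound the upper tail first and then obtain the lower tail by applying the same argument to $-X_t$, which are again ${\rm subE}(\la^2)$ by Lemma \ref{lem:lcsubE} with $\al=-1$.

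For the upper tail, Markov's inequality gives, for any $0<s\le 1/\la$,
\[
pr(S>Td)\le \e^{-sTd}E(\e^{sS})\le \exp\{-T(sd-s^2\la^2/2)\}.
\]
I then choose $s$ according to two regimes. If $d\le\la$, the unconstrained optimizer $s=d/\la^2$ is admissible, since $d/\la^2\le 1/\la$. It yields the exponent $-Td^2/(2\la^2)$, and in this regime $d^2/\la^2\le d/\la$, so the exponent equals $-\tfrac{T}{2}(d^2/\la^2\wedge d/\la)$. If instead $d>\la$, the optimizer lies outside the admissible range, so I take the boundary value $s=1/\la$. This gives the exponent $-T(d/\la-1/2)$, and because $d/\la>1$ we have $d/\la-1/2\ge d/(2\la)$. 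Here $d/\la<d^2/\la^2$, so again the exponent is at most $-\tfrac{T}{2}(d^2/\la^2\wedge d/\la)$.

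Combining the upper and lower tails by a union bound produces the factor $2$ in the statement. The only step needing care is the second regime: the MGF bound holds only for $|s|\le1/\la$, which is what produces the linear (subexponential) part of the tail. Checking that the boundary choice $s=1/\la$ still gives the claimed constant $1/2$ is the one place where an inequality, namely $d/\la-1/2\ge d/(2\la)$ for $d>\la$, must be verified rather than simply read off. The rest is routine.
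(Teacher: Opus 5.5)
Your proof is correct. The paper gives no proof of its own for this lemma; it reproduces the result from \cite{vershynin2019high} and \cite{rigollet201518}. Your Chernoff argument on the product MGF, with $s=(d/\la^2)\wedge(1/\la)$, is exactly the standard proof in those sources.

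You are also right to use the direct bound $E(\e^{sS})\le \e^{s^2T\la^2/2}$ for all $|s|\le 1/\la$ rather than $S\sim{\rm subE}(T\la^2)$ from Lemma \ref{lem:lcsubE}. The latter only permits $|s|\le 1/(\surd{T}\la)$, and would yield the weaker exponent $-\tfrac12\big(Td^2/\la^2\wedge \surd{T}d/\la\big)$.
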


%
%

Following is H\'ajek--R\'enyi inequality reproduced from \cite{bai1994}.

\begin{theorem}\label{thm:hajek.renyi}(H\'ajek--R\'enyi inequality) Let $\vep_1,\vep_2,...,\vep_T$ be a sequence of martingale differences with $E\vep_t^2=\si^2,$ and $\{c_k\}$ be a decreasing positive sequence of constants. 
	\benr
	pr\Big(\max_{m\le k\le T}c_k\Big|\sum_{t=1}^{k}\vep_t\Big|>\al\Big)\le \frac{\si^2}{\al^2} \Big(mc_m^2+\sum_{t=m+1}^{T}c_t^2\Big)\nn
	\eenr
\end{theorem}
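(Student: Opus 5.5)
The plan is to prove the statement by the classical stopping-time argument for martingales combined with summation by parts. Write $S_k=\sum_{t=1}^k\vep_t$ and let $\mathcal F_k=\si(\vep_1,\dots,\vep_k)$. Since the $\vep_t$ are square-integrable martingale differences with $E\vep_t^2=\si^2$, they are orthogonal, so $ES_k^2=k\si^2$. Moreover, for $j\ge k$ we have $E[S_j^2\mid\mathcal F_k]\ge S_k^2$, because $S_j^2$ is a submartingale, or directly from $E[(S_j-S_k)S_k\mid\mathcal F_k]=0$.

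\textbf{Decomposition.} Partition the event $A=\{\max_{m\le k\le T}c_k|S_k|>\al\}$ into the disjoint first-passage events
$$A_k=\{c_k|S_k|>\al,\ c_i|S_i|\le\al \ \text{for } m\le i<k\},\qquad k=m,\dots,T.$$
Each $A_k$ is $\mathcal F_k$-measurable. Introduce the random variable
$$Z=\sum_{j=m}^{T-1}(c_j^2-c_{j+1}^2)S_j^2+c_T^2S_T^2 .$$
Because $\{c_k\}$ is decreasing and positive, every coefficient is nonnegative, so $Z\ge 0$.

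\textbf{Lower bound on each $A_k$.} On $A_k$, first drop the nonnegative terms with $j<k$. For the remaining terms $j\ge k$, condition on $\mathcal F_k$ and use $E[S_j^2\mathbf 1_{A_k}]\ge E[S_k^2\mathbf 1_{A_k}]$. The coefficients $c_j^2-c_{j+1}^2$ for $j\ge k$, together with $c_T^2$, telescope to $c_k^2$. This gives
$$E[Z\mathbf 1_{A_k}]\ge c_k^2E[S_k^2\mathbf 1_{A_k}]\ge \al^2 P(A_k).$$
Summing over the disjoint $A_k$ yields $\al^2P(A)\le E[Z\mathbf 1_A]\le EZ$.

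\textbf{Computing $EZ$.} Using $ES_j^2=j\si^2$,
$$EZ=\si^2\Big[\sum_{j=m}^{T-1}(c_j^2-c_{j+1}^2)\,j+Tc_T^2\Big].$$
Abel summation collapses this to $\si^2\big(mc_m^2+\sum_{t=m+1}^Tc_t^2\big)$: the coefficient of $c_m^2$ is $m$, and for $t>m$ the coefficient of $c_t^2$ is $t-(t-1)=1$. Dividing by $\al^2$ gives the stated bound.

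The step requiring the most care is the conditional inequality $E[S_j^2\mathbf 1_{A_k}]\ge E[S_k^2\mathbf 1_{A_k}]$. It uses that $A_k\in\mathcal F_k$ and that the cross term $E[(S_j-S_k)S_k\mathbf 1_{A_k}]$ vanishes by the martingale difference property, which needs square integrability (guaranteed by $E\vep_t^2=\si^2$). The rest is bookkeeping.
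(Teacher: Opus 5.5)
Your proof is correct and complete. The paper gives no proof of its own for this statement: it is reproduced from \cite{bai1994} and used as a black box (with $c_k=1/k$) in the proof of Lemma~\ref{lem:optimalcross}.

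What you wrote is the classical argument, Hájek--Rényi's quadratic-form device in its martingale form. All of its steps check out:
\begin{itemize}
\item the first-passage decomposition into the $\mathcal F_k$-measurable events $A_k$;
\item the bound $E[Z\mathbf 1_{A_k}]\ge c_k^2E[S_k^2\mathbf 1_{A_k}]$, which follows from the vanishing cross term together with nonnegativity and telescoping of the weights $c_j^2-c_{j+1}^2$;
\item the Abel summation giving $EZ=\si^2\big(mc_m^2+\sum_{t=m+1}^Tc_t^2\big)$.
\end{itemize}
The edge cases $k=T$ and $m=T$ also go through. Working with the natural filtration costs nothing, because martingale differences for any filtration are martingale differences for their own filtration.

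Compared with the bare citation, your route makes the supplement self-contained. It also shows slightly more than is stated. Since every coefficient in $Z$ is nonnegative, computing the bound on $EZ$ only needs $E\vep_t^2\le\si^2$. That is exactly how the paper applies the result in Lemma~\ref{lem:optimalcross}, where the variance of $\psi_w$ enters only through the upper bound ${\rm var}\,\psi_w\le 16\la^2$.
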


As a consequence of this result, choosing $c_k=1/k,$  we also have the uniform bound, \benr
pr\Big(\sup_{k\ge m}\frac{1}{k}\Big|\sum_{t=1}^{k}\vep_t\Big|\ge \al\Big)\le \frac{c_u\si^2}{m\al^2}.
\eenr
where the inequality follows since $\sum_{k=m}^{\iny}k^{-2}=O(m^{-1}).$


Following is the `Argmax' theorem (Theorem 3.2.2 of \cite{vaart1996weak}).
\begin{theorem}[Argmax Theorem]\label{thm:argmax} Let $\cM_n,\cM$ be stochastic processes indexed by a metric space $H$ such that $\cM_n\Rightarrow\cM$ in $\ell^{\iny}(K)$ for every compact set $K\subseteq H$. Suppose that almost all sample paths $h\to \cM(h)$ are upper semicontinuous and posses a unique maximum at a (random) point $\h h,$ which as a random map in $H$ is tight. If the sequence $\h h_n$ is uniformly tight and satisfies $\cM_n(\h h_n)\ge \sup_h \cM_n(h)-o_p(1),$ then $\h h_n\Rightarrow \h h$ in $H.$
\end{theorem}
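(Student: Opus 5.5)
The plan is to verify the portmanteau criterion: for every closed $F\subseteq H$,
\[
\limsup_n pr^*(\h h_n\in F)\le pr(\h h\in F).
\]
The idea is to turn the event $\{\h h_n\in F\}$ into an event about suprema of $\cM_n$ over compact sets. Such suprema are continuous functionals on $\ell^{\iny}(K)$, so the assumed weak convergence can be pushed through them. The two ingredients are uniform tightness, which reduces everything to one compact $K$, and the uniqueness and upper semicontinuity of $\cM$, which lets us identify the limit event.

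\textbf{Step 1 (localize to a compact).} Fix $\ep>0$. Using uniform tightness of $\h h_n$ and tightness of $\h h$, choose a compact $K\subseteq H$ with
\[
\limsup_n pr^*(\h h_n\notin K)<\ep,\qquad pr(\h h\notin K)<\ep.
\]
It is enough to bound $\limsup_n pr^*(\h h_n\in F\cap K)$.

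\textbf{Step 2 (reduce to a sup event).} On $\{\h h_n\in F\cap K\}$ we have
\[
\sup_{F\cap K}\cM_n\ge \cM_n(\h h_n)\ge \sup_H\cM_n-o_p(1)\ge \sup_K\cM_n-o_p(1).
\]
Write the $o_p(1)$ term as $\delta_n$. For any fixed $\delta>0$,
\[
\limsup_n pr^*(\h h_n\in F\cap K)\le \limsup_n pr^*\Big(\sup_{F\cap K}\cM_n-\sup_K\cM_n\ge-\delta\Big),
\]
since $pr(\delta_n>\delta)\to0$.

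\textbf{Step 3 (pass to the limit).} The map $z\mapsto \sup_{F\cap K}z-\sup_K z$ is Lipschitz on $\ell^{\iny}(K)$. By the continuous mapping theorem and the portmanteau theorem applied to the closed set $[-\delta,\iny)$, the right side of Step 2 is at most
\[
pr\Big(\sup_{F\cap K}\cM-\sup_K\cM\ge-\delta\Big).
\]
Letting $\delta\downarrow0$ along a sequence, these events decrease to $\{\sup_{F\cap K}\cM\ge\sup_K\cM\}$.

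\textbf{Step 4 (identify the limit event; main obstacle).} The delicate step is showing
\[
\Big\{\sup_{F\cap K}\cM\ge\sup_K\cM\Big\}\subseteq\{\h h\in F\}\cup\{\h h\notin K\}\quad\text{almost surely.}
\]
I would argue pathwise. Suppose $\h h\in K$. An upper semicontinuous function attains its supremum on the compact set $F\cap K$ (if nonempty) at some point $h^*$. Then $\cM(h^*)\ge\sup_K\cM\ge\cM(\h h)=\sup_H\cM$, so $h^*$ is a maximizer of $\cM$ over $H$. By uniqueness of the maximum, $h^*=\h h$, and hence $\h h\in F$. If $F\cap K=\emptyset$, the supremum over it is $-\iny$ and the inclusion holds trivially.

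\textbf{Conclusion.} Combining Steps 1 to 4 gives
\[
\limsup_n pr^*(\h h_n\in F)\le pr(\h h\in F)+2\ep.
\]
Since $\ep>0$ is arbitrary, the portmanteau theorem yields $\h h_n\Rightarrow\h h$ in $H$.

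The steps needing extra care are the measurability and outer-probability handling in Steps 2 and 3. There I would work throughout with outer probabilities and rely on the versions of the continuous mapping and portmanteau theorems in \cite{vaart1996weak}, which hold in $\ell^{\iny}(K)$ for non-measurable maps.
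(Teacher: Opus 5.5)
Your proposal is correct. The paper gives no proof of this statement and simply quotes Theorem 3.2.2 of \cite{vaart1996weak}, and your argument is essentially the standard proof given there: portmanteau over closed $F$, localization to one compact $K$ via tightness of $\h h_n$ and $\h h$, continuous mapping for $\sup_{F\cap K}\cM_n-\sup_K\cM_n$, and upper semicontinuity plus uniqueness of the maximum to show the limit event lies in $\{\h h\in F\}\cup\{\h h\notin K\}$.
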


\section{Additional details and results}\label{app:numerical.supplement}

Here we provide remaining results of Section \ref{sec:numerical}. Subsection \ref{subsec:app:par.est} provides details of estimation of additional parameters such as drifts and asymptotic variances, and on computation of quantiles which are in turn necessary for computation of confidence intervals presented in Section \ref{sec:numerical}.

\subsection{Estimation of drifts, asymptotic variances and quantiles}\label{subsec:app:par.est}

We begin with a discussion on the estimation of $\xi_{w},$ $\xi_{h},$ and $\si^2_{(w,\iny)},\si^2_{(h,\iny)}$ which  utilized for computation of confidence intervals for $\tau^0=(\tau^0_w,\tau^0_h)^T$ using the result of Theorem \ref{thm:wc.vanishing} and Theorem \ref{thm:wc.non.vanishing}. To avoid redundancy, we only describe this estimation process in context of the width change parameter $\tau^0_w,$ the procedure for the height parameter is symmetrical.

First, in order to alleviate finite sample regularization biases we utilize refitted mean estimates computed as $\breve\theta_{(j)}=\big[\bar x_{(j)}(\tilde\tau)\big]_{\h S_j}$ $j=1,2,3,4,$ where $\tilde\tau=(\tilde\tau_w,\tilde\tau_h)^T$ is the change point estimate of Algorithm 1. Here $\h S_j=\{k\,\,\h\theta_{(1k)}\ne 0\},$ $j=1,2,3,4,$ are the estimated sparsity sets, where $\h\theta_{(j)},$ $j=1,2,3,4$ are the Step 2 mean estimates of Algorithm 1. It is well known in the literature that refitted mean estimates preserve the rate of convergence of the regularized version while reducing finite sample biases, see, e.g. \cite{belloni2017pivotal}. The jump vector and individual jump size are then evaluated as $\breve\eta_{(j)}$ and $\breve\xi_j,$ $j=1,2,3,4,$ as plug-in estimates as per the relations (\ref{def:jump.size.quad}). The width and height-wise proportions are estimated as the observed versions, i.e., $\breve\om_w=(T_w-\tilde\tau^0_w)/T_w$ and $\breve\om_h=(T_h-\tilde\tau_h)/T_h.$ The directional jump sizes are obtained as $\breve\xi_w=\breve\om_h\breve\xi_1^2+(1-\breve\om_h)\breve\xi_4^2$ and symmetrically for $\breve\xi_h.$

Next consider asymptotic variance $\si^2_{(w,\iny)}$ of Condition D. Note the finite sample representation of this parameter, $\xi_{w}^{-2}\big[\om_h\eta^0_{(1)}\Si\eta^0_{(1)}+(1-\om_h)\eta^0_{(3)}\Si\eta^0_{(3)}\big].$  A plug in version $\breve\si^2_{w,\iny}$ is computed by utilizing the above described estimated parameters. The covariance matrix $\Si$ is estimated as the sample covariance $\breve\Si$ computed on the entire data set with centering done in correspondence with the estimated mean parameters over quadrants. We note that since we are not interested in the estimation of the covariance itself but instead the quadratic form described above, thus utilizing the sample covariance here is effectively identical to utilizing refitted covariance on the adjacency matrix estimated by the jump vectors $\breve\eta_{(1)},$ and $\breve\eta_{(3)},$ in turn making this shortcut valid despite potential high dimensionality.

Finally, regarding quantiles of distributions characterized in Theorem \ref{thm:wc.vanishing} and Theorem \ref{thm:wc.non.vanishing} in the vanishing and non-vanishing regimes, respectively. For the quantiles of the former case, we utilize the cdf presented in \cite{yao1987approximating}. For the latter case, we assume in all calculations that underlying distribution is Gaussian and consequently the distribution of the increments $\cP$ of Condition A$'$ is also Gaussian. The above estimated parameters are then utilized to produce realizations of this incremental distribution, which are then used to produce realizations of the two-sided random walk and in turn those of its {\it argmax}. The quantiles are then estimated as a monte-carlo approximation.

\subsection{Additional simulation results of Section \ref{sec:numerical}}\label{subsec:app:simulations}

Below are the additional results of the simulations described in Section \ref{sec:numerical}. Table \ref{tab:wres1}, Table \ref{tab:hres3} and Table \ref{tab:hres4} provide results on estimation of $\tau_h^0.$ Table \ref{tab:hres1} -Table \ref{tab:hres4} provide results on estimation of the height change parameter $\tau_h^0.$ All results are in accordance to the discussion of Section \ref{sec:numerical}.

\begin{table}[]
	\centering
	\resizebox{0.62\textwidth}{!}{
		\begin{tabular}{llcccccc}
			\toprule
			\multicolumn{2}{c}{\multirow{2}{*}{\begin{tabular}[c]{@{}c@{}}$T_h=30,$\\ $\tau^0_h/T_h=0.2$\end{tabular}}} & \multicolumn{3}{c}{$p=10$}                                                                                & \multicolumn{3}{c}{$p=50$}                                                                               \\ \cmidrule{3-8}
			\multicolumn{2}{c}{}                                                                                        & \multirow{2}{*}{bias (rmse)}      & \multicolumn{2}{c}{coverage (av. ME)}                                 & \multirow{2}{*}{bias (rmse)}      & \multicolumn{2}{c}{coverage (av. ME)}                                \\ \cmidrule{1-2} \cmidrule{4-5} \cmidrule{7-8}
			\multicolumn{1}{c}{$\tau^0_w/T_w$}                         & \multicolumn{1}{c}{$T_w$}                        &                                   & Vanishing                         & Non-Vanishing                     &                                   & Vanishing                         & Non-Vanishing                    \\ \midrule
			0.2                                                      & 30                                               & 0.008 (0.155)                     & 0.976 (0.468)                     & 0.978 (0.018)                     & 0.01 (0.148)                      & 0.978 (0.394)                     & 0.978 (0)                        \\
			0.2                                                      & 35                                               & 0.02 (0.2)                        & 0.96 (0.508)                      & 0.966 (0.04)                      & 0.022 (0.195)                     & 0.968 (0.401)                     & 0.968 (0)                        \\
			0.2                                                      & 40                                               & 0.016 (0.21)                      & 0.962 (0.514)                     & 0.964 (0.032)                     & 0.02 (0.2)                        & 0.96 (0.409)                      & 0.96 (0)                         \\
			0.2                                                      & 45                                               & 0.032 (0.253)                     & 0.958 (0.516)                     & 0.96 (0.02)                       & 0.006 (0.214)                     & 0.972 (0.418)                     & 0.972 (0)                        \\ \midrule
			0.4                                                      & 30                                               & 0.02 (0.228)                      & 0.954 (0.476)                     & 0.956 (0.02)                      & 0.034 (0.344)                     & 0.964 (0.471)                     & 0.964 (0.004)                    \\
			0.4                                                      & 35                                               & 0.028 (0.268)                     & 0.958 (0.522)                     & 0.958 (0.02)                      & 0.064 (0.477)                     & 0.954 (0.472)                     & 0.954 (0.002)                    \\
			0.4                                                      & 40                                               & 0.006 (0.224)                     & 0.966 (0.528)                     & 0.968 (0.018)                     & 0.05 (0.293)                      & 0.962 (0.48)                      & 0.962 (0.002)                    \\
			0.4                                                      & 45                                               & 0.02 (0.228)                      & 0.96 (0.523)                      & 0.964 (0.012)                     & 0.008 (0.141)                     & 0.98 (0.482)                      & 0.982 (0.004)                    \\ \midrule
			0.6                                                      & 30                                               & 0.06 (0.358)                      & 0.948 (0.495)                     & 0.95 (0.028)                      & 0.078 (0.361)                     & 0.936 (0.504)                     & 0.938 (0.014)                    \\
			0.6                                                      & 35                                               & 0.008 (0.245)                     & 0.97 (0.522)                      & 0.972 (0.02)                      & 0.056 (0.303)                     & 0.954 (0.501)                     & 0.954 (0.008)                    \\
			0.6                                                      & 40                                               & 0.006 (0.265)                     & 0.958 (0.533)                     & 0.96 (0.026)                      & 0.054 (0.326)                     & 0.946 (0.507)                     & 0.946 (0.004)                    \\
			0.6                                                      & 45                                               & 0.034 (0.224)                     & 0.962 (0.535)                     & 0.966 (0.02)                      & 0.048 (0.228)                     & 0.948 (0.515)                     & 0.952 (0.01)                     \\ \midrule
			0.8                                                      & 30                                               & 0.118 (0.882)                     & 0.944 (0.529)                     & 0.948 (0.064)                     & 0.212 (0.976)                     & 0.876 (0.474)                     & 0.886 (0.036)                    \\
			0.8                                                      & 35                                               & 0.036 (0.237)                     & 0.968 (0.521)                     & 0.974 (0.054)                     & 0.13 (0.65)                       & 0.916 (0.475)                     & 0.916 (0.014)                    \\
			0.8                                                      & 40                                               & 0.024 (0.179)                     & 0.97 (0.513)                      & 0.972 (0.022)                     & 0.092 (0.42)                      & 0.932 (0.488)                     & 0.932 (0.012)                    \\
			0.8                                                      & 45                                               & \multicolumn{1}{l}{0.046 (0.265)} & \multicolumn{1}{l}{0.958 (0.519)} & \multicolumn{1}{l}{0.966 (0.034)} & \multicolumn{1}{l}{0.066 (0.319)} & \multicolumn{1}{l}{0.938 (0.488)} & \multicolumn{1}{l}{0.94 (0.006)} \\ \bottomrule
	\end{tabular}}
	\vspace{1mm}
	\caption{\footnotesize{Simulation results for estimation of $\tau^0_w$ based on 500 replications. All reported metrics rounded to three decimals. Other data generating parameters: $T_h=30,$ $\tau^0_h=\lfloor 0.2\cdotp T_h\rfloor$ and $p\in\{10,50\}.$}}
	\label{tab:wres1}
\end{table}

\begin{table}[]		\centering
	\resizebox{0.62\textwidth}{!}{
		\begin{tabular}{llllllll}
			\toprule
			\multicolumn{2}{c}{\multirow{2}{*}{\begin{tabular}[c]{@{}c@{}}$T_h=30,$\\ $\tau^0_h/T_h=0.4$\end{tabular}}} & \multicolumn{3}{c}{$p=10$}                                                                                           & \multicolumn{3}{c}{$p=50$}                                                                                           \\ \cmidrule{3-8}
			\multicolumn{2}{c}{}                                                                                        & \multicolumn{1}{c}{\multirow{2}{*}{bias (rmse)}} & \multicolumn{2}{c}{coverage (av. ME)}                             & \multicolumn{1}{c}{\multirow{2}{*}{bias (rmse)}} & \multicolumn{2}{c}{coverage (av. ME)}                             \\ \cmidrule{1-2} \cmidrule{4-5} \cmidrule{7-8}
			\multicolumn{1}{c}{$\tau^0_w/T_w$}                        & \multicolumn{1}{c}{$T_w$}                       & \multicolumn{1}{c}{}                             & \multicolumn{1}{c}{Vanishing} & \multicolumn{1}{c}{Non-Vanishing} & \multicolumn{1}{c}{}                             & \multicolumn{1}{c}{Vanishing} & \multicolumn{1}{c}{Non-Vanishing} \\ \midrule
			0.2                                                       & 30                                              & 0.028 (0.219)                                    & 0.958 (0.464)                 & 0.962 (0.028)                     & 0.106 (1.093)                                    & 0.954 (0.421)                 & 0.954 (0.02)                      \\
			0.2                                                       & 35                                              & 0.056 (0.303)                                    & 0.936 (0.517)                 & 0.948 (0.048)                     & 0.054 (0.88)                                     & 0.964 (0.443)                 & 0.964 (0.012)                     \\
			0.2                                                       & 40                                              & 0.02 (0.19)                                      & 0.97 (0.512)                  & 0.972 (0.03)                      & 0.008 (0.155)                                    & 0.976 (0.441)                 & 0.976 (0.004)                     \\
			0.2                                                       & 45                                              & 0.028 (0.237)                                    & 0.956 (0.519)                 & 0.958 (0.036)                     & 0.004 (0.19)                                     & 0.964 (0.442)                 & 0.964 (0)                         \\ \midrule
			0.4                                                       & 30                                              & 0.002 (0.279)                                    & 0.964 (0.467)                 & 0.964 (0.022)                     & 0.02 (0.219)                                     & 0.968 (0.443)                 & 0.968 (0.006)                     \\
			0.4                                                       & 35                                              & 0.01 (0.272)                                     & 0.96 (0.524)                  & 0.96 (0.03)                       & 0.002 (0.195)                                    & 0.968 (0.445)                 & 0.968 (0.004)                     \\
			0.4                                                       & 40                                              & 0.01 (0.214)                                     & 0.96 (0.534)                  & 0.962 (0.032)                     & 0.002 (0.161)                                    & 0.974 (0.456)                 & 0.974 (0.004)                     \\
			0.4                                                       & 45                                              & 0.014 (0.272)                                    & 0.964 (0.531)                 & 0.964 (0.016)                     & 0.014 (0.184)                                    & 0.966 (0.459)                 & 0.966 (0.002)                     \\ \midrule
			0.6                                                       & 30                                              & 0.026 (0.205)                                    & 0.964 (0.471)                 & 0.964 (0.012)                     & 0.04 (0.522)                                     & 0.968 (0.474)                 & 0.97 (0.012)                      \\
			0.6                                                       & 35                                              & 0.022 (0.326)                                    & 0.948 (0.523)                 & 0.952 (0.024)                     & 0.016 (0.179)                                    & 0.974 (0.466)                 & 0.974 (0.006)                     \\
			0.6                                                       & 40                                              & 0.022 (0.232)                                    & 0.968 (0.521)                 & 0.97 (0.01)                       & 0.018 (0.195)                                    & 0.968 (0.476)                 & 0.968 (0.004)                     \\
			0.6                                                       & 45                                              & 0.008 (0.21)                                     & 0.962 (0.529)                 & 0.966 (0.016)                     & 0.01 (0.184)                                     & 0.966 (0.49)                  & 0.968 (0.006)                     \\ \midrule
			0.8                                                       & 30                                              & 0.028 (0.261)                                    & 0.95 (0.486)                  & 0.952 (0.034)                     & 0.146 (1.15)                                     & 0.926 (0.46)                  & 0.926 (0.016)                     \\
			0.8                                                       & 35                                              & 0.028 (0.253)                                    & 0.954 (0.505)                 & 0.96 (0.034)                      & 0.076 (0.927)                                    & 0.956 (0.465)                 & 0.956 (0.008)                     \\
			0.8                                                       & 40                                              & 0.022 (0.224)                                    & 0.968 (0.519)                 & 0.972 (0.042)                     & 0.028 (0.19)                                     & 0.964 (0.472)                 & 0.966 (0.004)                     \\
			0.8                                                       & 45                                              & 0.048 (0.253)                                    & 0.948 (0.523)                 & 0.956 (0.036)                     & 0.038 (0.3)                                      & 0.962 (0.475)                 & 0.962 (0.006)                     \\ \bottomrule
	\end{tabular}}
	\vspace{1mm}
	\caption{\footnotesize{Simulation results for estimation of $\tau^0_w$ based on 500 replications. All reported metrics rounded to three decimals. Other data generating parameters: $T_h=30,$ $\tau^0_h=\lfloor 0.4\cdotp T_h\rfloor$ and $p\in\{10,50\}.$}}
	\label{tab:wres3}
\end{table}

\begin{table}[]
	\centering
	\resizebox{0.62\textwidth}{!}{
		\begin{tabular}{llllllll}
			\toprule
			\multicolumn{2}{c}{\multirow{2}{*}{\begin{tabular}[c]{@{}c@{}}$T_h=30,$\\ $\tau^0_h/T_h=0.4$\end{tabular}}} & \multicolumn{3}{c}{$p=100$}                                                                                          & \multicolumn{3}{c}{$p=250$}                                                                                          \\ \cmidrule{3-8}
			\multicolumn{2}{c}{}                                                                                        & \multicolumn{1}{c}{\multirow{2}{*}{bias (rmse)}} & \multicolumn{2}{c}{coverage (av. ME)}                             & \multicolumn{1}{c}{\multirow{2}{*}{bias (rmse)}} & \multicolumn{2}{c}{coverage (av. ME)}                             \\ \cmidrule{1-2} \cmidrule{4-5} \cmidrule{7-8}
			\multicolumn{1}{c}{$\tau^0_w/T_w$}                        & \multicolumn{1}{c}{$T_w$}                       & \multicolumn{1}{c}{}                             & \multicolumn{1}{c}{Vanishing} & \multicolumn{1}{c}{Non-Vanishing} & \multicolumn{1}{c}{}                             & \multicolumn{1}{c}{Vanishing} & \multicolumn{1}{c}{Non-Vanishing} \\ \midrule
			0.2                                                       & 30                                              & 0.09 (1.122)                                     & 0.968 (0.4)                   & 0.968 (0.008)                     & 0.076 (0.908)                                    & 0.966 (0.357)                 & 0.966 (0.008)                     \\
			0.2                                                       & 35                                              & 0.03 (0.488)                                     & 0.966 (0.412)                 & 0.966 (0.004)                     & 0.008 (0.179)                                    & 0.968 (0.367)                 & 0.968 (0)                         \\
			0.2                                                       & 40                                              & 0.01 (0.195)                                     & 0.962 (0.418)                 & 0.962 (0)                         & 0.046 (0.866)                                    & 0.97 (0.387)                  & 0.97 (0.006)                      \\
			0.2                                                       & 45                                              & 0.006 (0.161)                                    & 0.974 (0.425)                 & 0.974 (0)                         & 0.014 (0.173)                                    & 0.976 (0.397)                 & 0.976 (0)                         \\ \midrule
			0.4                                                       & 30                                              & 0.002 (0.224)                                    & 0.956 (0.438)                 & 0.962 (0.008)                     & 0.008 (0.2)                                      & 0.966 (0.421)                 & 0.966 (0.008)                     \\
			0.4                                                       & 35                                              & 0.032 (0.316)                                    & 0.976 (0.434)                 & 0.976 (0.002)                     & 0.03 (0.392)                                     & 0.97 (0.432)                  & 0.97 (0)                          \\
			0.4                                                       & 40                                              & 0.01 (0.205)                                     & 0.97 (0.438)                  & 0.97 (0)                          & 0.002 (0.214)                                    & 0.96 (0.437)                  & 0.96 (0.002)                      \\
			0.4                                                       & 45                                              & 0.01 (0.148)                                     & 0.978 (0.441)                 & 0.978 (0)                         & 0.004 (0.11)                                     & 0.988 (0.445)                 & 0.988 (0)                         \\ \midrule
			0.6                                                       & 30                                              & 0.018 (0.173)                                    & 0.97 (0.459)                  & 0.97 (0.004)                      & 0.014 (0.173)                                    & 0.976 (0.46)                  & 0.976 (0)                         \\
			0.6                                                       & 35                                              & 0.038 (0.605)                                    & 0.97 (0.471)                  & 0.97 (0.004)                      & 0.018 (0.195)                                    & 0.968 (0.479)                 & 0.968 (0.004)                     \\
			0.6                                                       & 40                                              & 0.006 (0.118)                                    & 0.986 (0.482)                 & 0.986 (0.002)                     & 0.006 (0.148)                                    & 0.978 (0.475)                 & 0.978 (0)                         \\
			0.6                                                       & 45                                              & 0.022 (0.232)                                    & 0.958 (0.484)                 & 0.96 (0.004)                      & 0.002 (0.184)                                    & 0.978 (0.487)                 & 0.978 (0)                         \\ \midrule
			0.8                                                       & 30                                              & 0.14 (1.103)                                     & 0.94 (0.434)                  & 0.942 (0.01)                      & 0.138 (1.127)                                    & 0.94 (0.403)                  & 0.94 (0.016)                      \\
			0.8                                                       & 35                                              & 0.052 (0.486)                                    & 0.954 (0.437)                 & 0.954 (0.002)                     & 0.06 (0.54)                                      & 0.954 (0.417)                 & 0.954 (0.004)                     \\
			0.8                                                       & 40                                              & 0.034 (0.214)                                    & 0.96 (0.45)                   & 0.96 (0)                          & 0.04 (0.253)                                     & 0.954 (0.426)                 & 0.954 (0.004)                     \\
			0.8                                                       & 45                                              & 0.052 (1.083)                                    & 0.978 (0.463)                 & 0.978 (0.002)                     & 0.02 (0.2)                                       & 0.966 (0.433)                 & 0.966 (0)                         \\ \bottomrule
	\end{tabular}}
	\vspace{1mm}
	\caption{\footnotesize{Simulation results for estimation of $\tau^0_w$ based on 500 replications. All reported metrics rounded to three decimals. Other data generating parameters: $T_h=30,$ $\tau^0_h=\lfloor 0.4\cdotp T_h\rfloor$ and $p\in\{100,250\}.$}}
	\label{tab:wres4}
\end{table}

\begin{table}[]	
	\centering
	\resizebox{0.62\textwidth}{!}{
		\begin{tabular}{llllllll}
			\toprule
			\multicolumn{2}{c}{\multirow{2}{*}{\begin{tabular}[c]{@{}c@{}}$T_w=30,$\\ $\tau^0_w/T_w=0.2$\end{tabular}}} & \multicolumn{3}{c}{$p=10$}                                                                                           & \multicolumn{3}{c}{$p=50$}                                                                                           \\ \cmidrule{3-8}
			\multicolumn{2}{c}{}                                                                                        & \multicolumn{1}{c}{\multirow{2}{*}{bias (rmse)}} & \multicolumn{2}{c}{coverage (av. ME)}                             & \multicolumn{1}{c}{\multirow{2}{*}{bias (rmse)}} & \multicolumn{2}{c}{coverage (av. ME)}                             \\ \cmidrule{1-2} \cmidrule{4-5} \cmidrule{7-8}
			\multicolumn{1}{c}{$\tau^0_h/T_h$}                        & \multicolumn{1}{c}{$T_h$}                       & \multicolumn{1}{c}{}                             & \multicolumn{1}{c}{Vanishing} & \multicolumn{1}{c}{Non-Vanishing} & \multicolumn{1}{c}{}                             & \multicolumn{1}{c}{Vanishing} & \multicolumn{1}{c}{Non-Vanishing} \\ \midrule
			0.2 & 30 & 0.004 (0.167) & 0.978 (0.469) & 0.984 (0.026) & 0.002 (0.205) & 0.964 (0.395) & 0.964 (0.002) \\
			0.2 & 35 & 0.028 (0.228) & 0.956 (0.513) & 0.964 (0.036) & 0.010 (0.184) & 0.966 (0.409) & 0.966 (0.004) \\
			0.2 & 40 & 0.022 (0.195) & 0.974 (0.518) & 0.974 (0.026) & 0.002 (0.205) & 0.964 (0.414) & 0.964 (0.000) \\
			0.2 & 45 & 0.022 (0.257) & 0.956 (0.516) & 0.958 (0.024) & 0.010 (0.161) & 0.974 (0.420) & 0.974 (0.002) \\
			\midrule
			0.4 & 30 & 0.022 (0.205) & 0.970 (0.469) & 0.972 (0.018) & 0.078 (0.516) & 0.938 (0.458) & 0.940 (0.010) \\
			0.4 & 35 & 0.028 (0.210) & 0.968 (0.523) & 0.968 (0.014) & 0.058 (0.431) & 0.960 (0.472) & 0.960 (0.006) \\
			0.4 & 40 & 0.028 (0.297) & 0.960 (0.531) & 0.962 (0.028) & 0.026 (0.195) & 0.968 (0.475) & 0.968 (0.000) \\
			0.4 & 45 & 0.024 (0.210) & 0.962 (0.526) & 0.964 (0.006) & 0.018 (0.184) & 0.972 (0.494) & 0.972 (0.002) \\
			\midrule
			0.6 & 30 & 0.036 (0.253) & 0.948 (0.505) & 0.952 (0.028) & 0.056 (0.322) & 0.960 (0.494) & 0.964 (0.012) \\
			0.6 & 35 & 0.002 (0.257) & 0.952 (0.522) & 0.956 (0.028) & 0.052 (0.290) & 0.950 (0.497) & 0.950 (0.004) \\
			0.6 & 40 & 0.004 (0.253) & 0.954 (0.526) & 0.954 (0.014) & 0.040 (0.228) & 0.954 (0.505) & 0.954 (0.008) \\
			0.6 & 45 & 0.028 (0.253) & 0.942 (0.532) & 0.950 (0.026) & 0.036 (0.237) & 0.950 (0.516) & 0.956 (0.006) \\
			\midrule
			0.8 & 30 & 0.084 (0.490) & 0.938 (0.509) & 0.952 (0.048) & 0.274 (1.072) & 0.862 (0.483) & 0.866 (0.044) \\
			0.8 & 35 & 0.044 (0.303) & 0.936 (0.512) & 0.946 (0.038) & 0.198 (0.987) & 0.906 (0.487) & 0.908 (0.024) \\
			0.8 & 40 & 0.034 (0.249) & 0.956 (0.520) & 0.958 (0.018) & 0.076 (0.335) & 0.932 (0.482) & 0.936 (0.010) \\
			0.8 & 45 & 0.038 (0.249) & 0.956 (0.525) & 0.960 (0.042) & 0.088 (0.522) & 0.944 (0.493) & 0.944 (0.004) \\ \bottomrule
	\end{tabular}}
	\vspace{1mm}
	\caption{\footnotesize{Simulation results for estimation of $\tau^0_h$ based on 500 replications. All reported metrics rounded to three decimals. Other data generating parameters: $T_w=30,$ $\tau^0_w=\lfloor 0.2\cdotp T_w\rfloor$ and $p\in\{10,50\}.$}}
	\label{tab:hres1}
\end{table}

\begin{table}[]
	\centering
	\resizebox{0.62\textwidth}{!}{
		\begin{tabular}{llllllll}
			\toprule
			\multicolumn{2}{c}{\multirow{2}{*}{\begin{tabular}[c]{@{}c@{}}$T_w=30,$\\ $\tau^0_w/T_w=0.2$\end{tabular}}} & \multicolumn{3}{c}{$p=100$}                                                                                          & \multicolumn{3}{c}{$p=250$}                                                                                          \\ \cmidrule{3-8}
			\multicolumn{2}{c}{}                                                                                        & \multicolumn{1}{c}{\multirow{2}{*}{bias (rmse)}} & \multicolumn{2}{c}{coverage (av. ME)}                             & \multicolumn{1}{c}{\multirow{2}{*}{bias (rmse)}} & \multicolumn{2}{c}{coverage (av. ME)}                             \\ \cmidrule{1-2} \cmidrule{4-5} \cmidrule{7-8}
			\multicolumn{1}{c}{$\tau^0_h/T_h$}                        & \multicolumn{1}{c}{$T_h$}                       & \multicolumn{1}{c}{}                             & \multicolumn{1}{c}{Vanishing} & \multicolumn{1}{c}{Non-Vanishing} & \multicolumn{1}{c}{}                             & \multicolumn{1}{c}{Vanishing} & \multicolumn{1}{c}{Non-Vanishing} \\ \midrule
			0.2 & 30 & 0.016 (0.200) & 0.966 (0.347) & 0.966 (0.000) & 0.042 (0.819) & 0.958 (0.308) & 0.958 (0.002) \\
			0.2 & 35 & 0.012 (0.322) & 0.950 (0.365) & 0.950 (0.002) & 0.024 (0.253) & 0.964 (0.323) & 0.964 (0.000) \\
			0.2 & 40 & 0.024 (0.179) & 0.968 (0.379) & 0.968 (0.000) & 0.034 (1.001) & 0.964 (0.336) & 0.964 (0.002) \\
			0.2 & 45 & 0.002 (0.205) & 0.964 (0.382) & 0.964 (0.000) & 0.006 (0.173) & 0.970 (0.351) & 0.970 (0.000) \\
			\midrule
			0.4 & 30 & 0.054 (0.332) & 0.950 (0.460) & 0.950 (0.002) & 0.066 (0.527) & 0.948 (0.437) & 0.950 (0.004) \\
			0.4 & 35 & 0.032 (0.237) & 0.956 (0.468) & 0.956 (0.000) & 0.042 (0.265) & 0.954 (0.450) & 0.954 (0.000) \\
			0.4 & 40 & 0.058 (0.313) & 0.936 (0.470) & 0.936 (0.000) & 0.068 (0.303) & 0.942 (0.461) & 0.942 (0.000) \\
			0.4 & 45 & 0.032 (0.210) & 0.962 (0.479) & 0.962 (0.000) & 0.022 (0.195) & 0.962 (0.468) & 0.962 (0.000) \\
			\midrule
			0.6 & 30 & 0.076 (0.469) & 0.932 (0.484) & 0.934 (0.010) & 0.096 (0.415) & 0.928 (0.464) & 0.928 (0.006) \\
			0.6 & 35 & 0.118 (0.736) & 0.922 (0.498) & 0.926 (0.014) & 0.166 (0.977) & 0.924 (0.480) & 0.924 (0.012) \\
			0.6 & 40 & 0.058 (0.349) & 0.944 (0.503) & 0.944 (0.008) & 0.080 (0.363) & 0.932 (0.487) & 0.932 (0.002) \\
			0.6 & 45 & 0.058 (0.307) & 0.936 (0.505) & 0.940 (0.004) & 0.044 (0.261) & 0.938 (0.496) & 0.940 (0.004) \\
			\midrule
			0.8 & 30 & 0.232 (0.910) & 0.862 (0.459) & 0.862 (0.032) & 0.492 (1.810) & 0.774 (0.424) & 0.776 (0.038) \\
			0.8 & 35 & 0.288 (1.486) & 0.870 (0.467) & 0.870 (0.020) & 0.296 (1.293) & 0.842 (0.429) & 0.844 (0.018) \\
			0.8 & 40 & 0.120 (0.405) & 0.894 (0.467) & 0.894 (0.006) & 0.218 (0.781) & 0.870 (0.438) & 0.874 (0.018) \\
			0.8 & 45 & 0.136 (0.537) & 0.902 (0.470) & 0.902 (0.010) & 0.144 (0.490) & 0.890 (0.440) & 0.892 (0.006) \\
			\bottomrule
	\end{tabular}}
	\vspace{1mm}
	\caption{\footnotesize{Simulation results for estimation of $\tau^0_h$ based on 500 replications. All reported metrics rounded to three decimals. Other data generating parameters: $T_w=30,$ $\tau^0_w=\lfloor 0.2\cdotp T_w\rfloor$ and $p\in\{100,250\}.$}}
	\label{tab:hres2}
\end{table}

\begin{table}[]
	\centering
	\resizebox{0.62\textwidth}{!}{
		\begin{tabular}{llllllll}
			\toprule
			\multicolumn{2}{c}{\multirow{2}{*}{\begin{tabular}[c]{@{}c@{}}$T_w=30,$\\ $\tau^0_w/T_w=0.4$\end{tabular}}} & \multicolumn{3}{c}{$p=10$}                                                                                           & \multicolumn{3}{c}{$p=50$}                                                                                           \\ \cmidrule{3-8}
			\multicolumn{2}{c}{}                                                                                        & \multicolumn{1}{c}{\multirow{2}{*}{bias (rmse)}} & \multicolumn{2}{c}{coverage (av. ME)}                             & \multicolumn{1}{c}{\multirow{2}{*}{bias (rmse)}} & \multicolumn{2}{c}{coverage (av. ME)}                             \\ \cmidrule{1-2} \cmidrule{4-5} \cmidrule{7-8}
			\multicolumn{1}{c}{$\tau^0_h/T_h$}                        & \multicolumn{1}{c}{$T_h$}                       & \multicolumn{1}{c}{}                             & \multicolumn{1}{c}{Vanishing} & \multicolumn{1}{c}{Non-Vanishing} & \multicolumn{1}{c}{}                             & \multicolumn{1}{c}{Vanishing} & \multicolumn{1}{c}{Non-Vanishing} \\ \midrule
			0.2 & 30 & 0.000 (0.210) & 0.962 (0.470) & 0.964 (0.032) & 0.060 (0.764) & 0.950 (0.421) & 0.952 (0.008) \\
			0.2 & 35 & 0.036 (0.261) & 0.948 (0.514) & 0.956 (0.042) & 0.068 (0.876) & 0.966 (0.436) & 0.966 (0.010) \\
			0.2 & 40 & 0.026 (0.214) & 0.970 (0.525) & 0.978 (0.044) & 0.038 (1.042) & 0.976 (0.447) & 0.976 (0.004) \\
			0.2 & 45 & 0.010 (0.173) & 0.976 (0.520) & 0.978 (0.024) & 0.006 (0.205) & 0.958 (0.448) & 0.958 (0.002) \\
			\midrule
			0.4 & 30 & 0.002 (0.249) & 0.968 (0.466) & 0.968 (0.012) & 0.000 (0.268) & 0.956 (0.443) & 0.956 (0.010) \\
			0.4 & 35 & 0.016 (0.210) & 0.968 (0.525) & 0.970 (0.028) & 0.004 (0.126) & 0.984 (0.448) & 0.984 (0.002) \\
			0.4 & 40 & 0.002 (0.257) & 0.956 (0.530) & 0.956 (0.026) & 0.000 (0.126) & 0.984 (0.451) & 0.986 (0.002) \\
			0.4 & 45 & 0.026 (0.214) & 0.960 (0.529) & 0.964 (0.018) & 0.002 (0.161) & 0.980 (0.456) & 0.980 (0.000) \\
			\midrule
			0.6 & 30 & 0.012 (0.219) & 0.958 (0.475) & 0.958 (0.020) & 0.042 (0.427) & 0.950 (0.469) & 0.950 (0.006) \\
			0.6 & 35 & 0.008 (0.253) & 0.954 (0.519) & 0.954 (0.014) & 0.030 (0.257) & 0.962 (0.479) & 0.964 (0.012) \\
			0.6 & 40 & 0.002 (0.205) & 0.976 (0.529) & 0.976 (0.020) & 0.016 (0.200) & 0.972 (0.483) & 0.972 (0.004) \\
			0.6 & 45 & 0.014 (0.257) & 0.956 (0.532) & 0.958 (0.014) & 0.006 (0.173) & 0.970 (0.477) & 0.972 (0.004) \\
			\midrule
			0.8 & 30 & 0.036 (0.261) & 0.950 (0.485) & 0.956 (0.040) & 0.144 (1.200) & 0.932 (0.459) & 0.936 (0.020) \\
			0.8 & 35 & 0.026 (0.265) & 0.958 (0.511) & 0.960 (0.028) & 0.030 (0.195) & 0.974 (0.457) & 0.974 (0.002) \\
			0.8 & 40 & 0.040 (0.316) & 0.952 (0.519) & 0.952 (0.034) & 0.156 (1.741) & 0.956 (0.474) & 0.956 (0.012) \\
			0.8 & 45 & 0.024 (0.237) & 0.964 (0.520) & 0.966 (0.018) & 0.032 (0.322) & 0.966 (0.473) & 0.966 (0.000) \\
			\bottomrule
	\end{tabular}}
	\vspace{1mm}
	\caption{\footnotesize{Simulation results for estimation of $\tau^0_h$ based on 500 replications. All reported metrics rounded to three decimals. Other data generating parameters: $T_w=30,$ $\tau^0_w=\lfloor 0.4\cdotp T_w\rfloor$ and $p\in\{10,50\}.$}}
	\label{tab:hres3}
\end{table}

\begin{table}[]
	\centering
	\resizebox{0.62\textwidth}{!}{
		\begin{tabular}{llllllll}
			\toprule
			\multicolumn{2}{c}{\multirow{2}{*}{\begin{tabular}[c]{@{}c@{}}$T_w=30,$\\ $\tau^0_w/T_w=0.4$\end{tabular}}} & \multicolumn{3}{c}{$p=100$}                                                                                          & \multicolumn{3}{c}{$p=250$}                                                                                          \\ \cmidrule{3-8}
			\multicolumn{2}{c}{}                                                                                        & \multicolumn{1}{c}{\multirow{2}{*}{bias (rmse)}} & \multicolumn{2}{c}{coverage (av. ME)}                             & \multicolumn{1}{c}{\multirow{2}{*}{bias (rmse)}} & \multicolumn{2}{c}{coverage (av. ME)}                             \\ \cmidrule{1-2} \cmidrule{4-5} \cmidrule{7-8}
			\multicolumn{1}{c}{$\tau^0_h/T_h$}                        & \multicolumn{1}{c}{$T_h$}                       & \multicolumn{1}{c}{}                             & \multicolumn{1}{c}{Vanishing} & \multicolumn{1}{c}{Non-Vanishing} & \multicolumn{1}{c}{}                             & \multicolumn{1}{c}{Vanishing} & \multicolumn{1}{c}{Non-Vanishing} \\ \midrule
			0.2 & 30 & 0.052 (0.623) & 0.960 (0.396) & 0.960 (0.004) & 0.104 (1.251) & 0.966 (0.355) & 0.966 (0.002) \\
			0.2 & 35 & 0.002 (0.161) & 0.974 (0.406) & 0.974 (0.000) & 0.002 (0.205) & 0.964 (0.370) & 0.964 (0.000) \\
			0.2 & 40 & 0.012 (0.190) & 0.964 (0.415) & 0.964 (0.000) & 0.012 (0.200) & 0.972 (0.385) & 0.972 (0.000) \\
			0.2 & 45 & 0.020 (0.237) & 0.972 (0.429) & 0.972 (0.000) & 0.008 (0.190) & 0.970 (0.391) & 0.970 (0.000) \\
			\midrule
			0.4 & 30 & 0.006 (0.241) & 0.960 (0.439) & 0.960 (0.006) & 0.006 (0.232) & 0.976 (0.420) & 0.976 (0.002) \\
			0.4 & 35 & 0.016 (0.179) & 0.974 (0.442) & 0.974 (0.006) & 0.018 (0.232) & 0.958 (0.430) & 0.958 (0.002) \\
			0.4 & 40 & 0.004 (0.379) & 0.974 (0.444) & 0.974 (0.000) & 0.010 (0.148) & 0.978 (0.428) & 0.978 (0.002) \\
			0.4 & 45 & 0.018 (0.205) & 0.964 (0.452) & 0.964 (0.002) & 0.000 (0.141) & 0.980 (0.443) & 0.980 (0.000) \\
			\midrule
			0.6 & 30 & 0.044 (0.410) & 0.970 (0.469) & 0.970 (0.006) & 0.016 (0.245) & 0.958 (0.463) & 0.958 (0.004) \\
			0.6 & 35 & 0.004 (0.210) & 0.974 (0.481) & 0.974 (0.004) & 0.024 (0.210) & 0.978 (0.463) & 0.978 (0.000) \\
			0.6 & 40 & 0.016 (0.190) & 0.970 (0.479) & 0.970 (0.002) & 0.036 (0.237) & 0.960 (0.477) & 0.960 (0.000) \\
			0.6 & 45 & 0.020 (0.253) & 0.964 (0.474) & 0.964 (0.006) & 0.010 (0.173) & 0.970 (0.484) & 0.970 (0.000) \\
			\midrule
			0.8 & 30 & 0.044 (0.253) & 0.942 (0.430) & 0.942 (0.008) & 0.068 (0.438) & 0.940 (0.395) & 0.940 (0.002) \\
			0.8 & 35 & 0.078 (1.005) & 0.956 (0.441) & 0.958 (0.010) & 0.026 (0.184) & 0.972 (0.413) & 0.972 (0.002) \\
			0.8 & 40 & 0.026 (0.214) & 0.966 (0.456) & 0.966 (0.006) & 0.078 (1.015) & 0.958 (0.425) & 0.958 (0.004) \\
			0.8 & 45 & 0.012 (0.155) & 0.976 (0.458) & 0.976 (0.000) & 0.040 (0.607) & 0.968 (0.438) & 0.970 (0.004) \\
			\bottomrule
	\end{tabular}}
	\vspace{1mm}
	\caption{\footnotesize{Simulation results for estimation of $\tau^0_h$ based on 500 replications. All reported metrics rounded to three decimals. Other data generating parameters: $T_w=30,$ $\tau^0_w=\lfloor 0.4\cdotp T_w\rfloor$ and $p\in\{100,250\}.$}}
	\label{tab:hres4}
\end{table}

\end{document}